\documentclass[acmsmall]{acmart}
\makeatletter
\def\input@path{{pldi2026/}{./}}
\makeatother

\usepackage{mathpartir}
\usepackage{bm}
\usepackage{enumitem}
\usepackage{tikz}
\usetikzlibrary{arrows.meta}
\usepackage{subcaption}
\usepackage{algorithm}
\usepackage[noend]{algpseudocode}
\usepackage{array}
\usepackage{setspace}
\usepackage{framed}
\usepackage[capitalize]{cleveref}
\usepackage[dvipsnames]{xcolor}
\graphicspath{{pldi2026/}{./}} 

\let\originaltable\table
\let\endoriginaltable\endtable
\renewenvironment{table}[1][tbp]{%
  \originaltable[#1]
  \centering
  \footnotesize
  }%
  {\vspace{-1ex}\endoriginaltable}

\let\originalsubtable\subtable
\let\endoriginalsubtable\endsubtable
\renewenvironment{subtable}[2][t]{%
  \originalsubtable[#1]{#2}
  \centering
  \footnotesize
  \alternaterowcolors}%
  {\endoriginalsubtable}

\let\originalfigure\figure
\let\endoriginalfigure\endfigure
\renewenvironment{figure}[1][tbp]{%
  \originalfigure[#1]
  \centering
  \small}%
  {\vspace{-1em}\endoriginalfigure}

\let\originalsubfigure\subfigure
\let\endoriginalsubfigure\endsubfigure
\renewenvironment{subfigure}[2][t]{%
  \originalsubfigure[#1]{#2}
  \centering
  \small}%
  {\vspace{-1ex}\endoriginalsubfigure}

\definecolor{kwblue}{RGB}{20,20,160}
\definecolor{linenogray}{gray}{0.5}

\newcommand{\mycaption}[1]{\vspace{-0.5em}\caption{#1}}

\newcommand{\RR}{\mathbb{R}}
\newcommand{\NN}{\mathbb{N}}

\newcommand{\Poisson}{\mathsf{Pois}}

\newcommand{\keyword}[1]{\textnormal{\textbf{\textsf{\color{violet} #1}}}}

\newcommand{\Unit}{\Record{}}
\newcommand{\Real}{\mathbb{R}}
\newcommand{\Bool}{\mathbb{B}}
\newcommand{\Nat}{\mathbb{N}}
\newcommand{\ProbTy}[3][]{\mathsf{P}_{#1}\,#2\,#3}
\newcommand{\Dist}[2][]{\mathsf{Dist}_{#1}\;#2}
\newcommand{\Map}[2]{\mathsf{Map}\;#1\;#2}
\newcommand{\Set}[1]{\mathsf{Set}\;#1}
\newcommand{\Collection}[1]{\mathsf{Coll}\;#1}
\newcommand{\Index}{\mathsf{Idx}}
\newcommand{\NameMap}[1]{\mathsf{NameMap}\;#1}
\newcommand{\NameSet}{\mathsf{NameSet}}
\newcommand{\List}[1]{\mathsf{List}\;#1}
\newcommand{\Name}{\mathsf{Name}}
\newcommand{\Record}[1]{\{ #1 \}}
\newcommand{\SubRecord}[1]{\mathsf{Sub} \{ #1 \}}
\newcommand{\Upd}[3][]{\mathsf{Upd}_{#1}(#2; #3)}
\newcommand{\clos}[1][]{\to_{#1}}
\newcommand{\typeOf}[1]{\mathrm{typeOf}(#1)}
\newcommand{\Constants}{\mathrm{Constants}}

\newcommand{\toType}[1]{\mathsf{CtxType}(#1)}
\newcommand{\freeVars}[1]{\mathsf{FV}(#1)}
\newcommand{\projectVar}[2]{\pi_{#1 \mapsto #2}} 
\newcommand{\projectVars}[2]{\pi_{#1 \mapsto #2}} 
\newcommand{\ctxToTpl}[1][]{\mathsf{tpl}_{#1}} 
\newcommand{\tplToCtx}[1][]{\mathsf{ctx}_{#1}} 

\newcommand{\lam}[1]{\keyword{\ensuremath{\bm{\lambda}}} #1 \ldotp}
\newcommand{\return}[1]{\keyword{ret}\;#1}
\newcommand{\sample}[1]{\keyword{sample} \; #1}
\newcommand{\at}{\; \keyword{@} \;}
\newcommand{\using}[1]{\keyword{using}\; (#1)}

\newcommand{\forWithUsing}[7][]{\keyword{for}\; #2 \; \keyword{in}#1 \; #3 \; \keyword{with} \; #4 := #5 \; \using{#6} \; \{ \; #7 \; \}}
\newcommand{\forRangeWithUsing}[6]{\forWithUsing[\keyword{Range}]{#1}{#2}{#3}{#4}{#5}{#6}}

\newcommand{\forWith}[6][]{\keyword{for}\; #2 \; \keyword{in}#1 \; #3 \; \keyword{with} \; #4 := #5 \; \{ \; #6 \; \}}
\newcommand{\forRangeWith}[5]{\forWith[\keyword{Range}]{#1}{#2}{#3}{#4}{#5}}

\newcommand{\forUsing}[5][]{\keyword{for}\; #2 \; \keyword{in}#1 \; #3 \; \using{#4} \; \{ \; #5 \; \}}

\newcommand{\forSetUsing}[4]{\forUsing[\keyword{Set}]{#1}{#2}{#3}{#4}}

\newcommand{\forIn}[4][]{\keyword{for}\; #2 \; \keyword{in}#1 \; #3 \; \{ \; #4 \; \}}
\newcommand{\forInRange}[3]{\forIn[\keyword{Range}]{#1}{#2}{#3}}
\newcommand{\forInSet}[3]{\forIn[\keyword{Set}]{#1}{#2}{#3}}

\newcommand{\ite}[3]{\keyword{if}\; #1 \; \keyword{then} \; #2 \; \keyword{else} \; #3}
\newcommand{\mkUpd}[2]{\keyword{mkUpd}(#1; #2)}
\newcommand{\applyUpd}[2]{#1.\keyword{apply}(#2)}
\newcommand{\mkUpdSem}[1][]{\mathrm{mkUpd}_{#1}}
\newcommand{\stepUpd}[1][]{\mathrm{stepUpd}_{#1}}
\newcommand{\has}[2][]{\keyword{has}_{#1}\;#2}
\newcommand{\trueLit}{\keyword{true}} 
\newcommand{\falseLit}{\keyword{false}} 
\newcommand{\unit}{\recordLit{}}
\newcommand{\recordLit}[1]{\{ #1 \}} 
\newcommand{\const}[1]{\mathsf{#1}}
\newcommand{\var}[1]{\mathsf{#1}}
\newcommand{\mvar}[1]{\mathit{#1}}
\newcommand{\dif}[1]{\mathrm{d}#1}
\newcommand{\lbl}[1]{\textnormal{#1}}
\newcommand{\field}[1]{\lbl{#1:}\,} 

\newcommand{\same}[1][]{\mathsf{same}_{#1}}
\newcommand{\sameAs}[1][]{\mathsf{sameAs}_{#1}}
\newcommand{\apply}[1][]{\mathsf{apply}_{#1}}

\newcommand{\Changed}[1]{\Delta(#1)}

\newcommand{\GenToInc}[2][]{\mathcal{W}_{#1}\!\left\{#2\right\}}
\newcommand{\IncToCore}[2][]{\mathcal{I}_{#1}\!\left\{#2\right\}}
\newcommand{\IncVer}[1]{\const{#1}_{\mathsf{core}}}
\newcommand{\CacheType}{\mathsf{CacheType}}

\newcommand{\istype}[1]{#1 \text{ is type}}
\newcommand{\isground}[1]{#1 \text{ is ground}}
\newcommand{\tyjudg}[3]{#1 \vdash #2 : #3}

\newcommand{\Labels}{\mathcal{L}}

\newcommand{\tr}{\var{tr}}
\newcommand{\dtr}{\var{dtr}}
\newcommand{\dw}{\var{dw}}
\newcommand{\concat}{\mathbin{+\!\!+}}
\newcommand{\issame}[2][]{\mathsf{isSame}_{#1}(#2)}
\newcommand{\restrict}[2]{#1|_{#2}}

\newcommand{\sem}[2][]{\left\llbracket #2 \right\rrbracket_{#1}\!}
\newcommand{\To}{\Rightarrow}
\newcommand{\domain}{\mathsf{dom}}
\newcommand{\default}[1][]{\mathsf{default}_{#1}}
\newcommand{\ret}[1]{\mathsf{ret}~#1}
\newcommand{\supp}[1]{\mathsf{supp}(#1)}
\newcommand{\names}[1]{\mathsf{names}(#1)}
\newcommand{\swap}{\mathsf{swap}}
\newcommand{\id}[1][]{\mathsf{id}_{#1}}

\newcommand{\name}[1]{\mathsf{id}_{#1}}

\definecolor{clusterBlue}{HTML}{0072B2}
\definecolor{clusterOrange}{HTML}{E69F00}
\definecolor{clusterGreen}{HTML}{009E73}
\definecolor{clusterPurple}{HTML}{CC79A7}
\newcommand{\nameBlue}[1]{\textcolor{clusterBlue}{\name{#1}}}
\newcommand{\nameOrange}[1]{\textcolor{clusterOrange}{\name{#1}}}
\newcommand{\nameGreen}[1]{\textcolor{clusterGreen}{\name{#1}}}
\newcommand{\namePurple}[1]{\textcolor{clusterPurple}{\name{#1}}}

\definecolor{corelangcolor}{RGB}{220,240,255}   
\definecolor{inclangcolor}{RGB}{230,255,220}    
\definecolor{genlangcolor}{RGB}{255,240,220}    
\definecolor{mixlangcolor}{RGB}{225,247,238} 
\colorlet{CORELANGCOLOR}{corelangcolor}
\colorlet{INCLANGCOLOR}{inclangcolor}
\colorlet{GENLANGCOLOR}{genlangcolor}
\colorlet{MIXLANGCOLOR}{mixlangcolor}
\newcommand{\flatcolorbox}[2]{%
  \begingroup
  \setlength{\fboxsep}{0pt}
  \colorbox{#1}{\strut #2}
  \endgroup
}
\newcommand{\corehighlight}[1]{\flatcolorbox{corelangcolor!50}{\ensuremath{#1}}}
\newcommand{\inchighlight}[1]{\flatcolorbox{inclangcolor!50}{\ensuremath{#1}}}
\newcommand{\genhighlight}[1]{\flatcolorbox{genlangcolor!50}{\ensuremath{#1}}}
\newcommand{\mixhighlight}[1]{\flatcolorbox{mixlangcolor!50}{\ensuremath{#1}}}
\newcommand{\corelang}{\texorpdfstring{\corehighlight{\lambda_{\mathsf{core}}}}{λ-core}}
\newcommand{\inclang}{\texorpdfstring{\inchighlight{\lambda_{\mathsf{inc}}}}{λ-inc}}
\newcommand{\genlang}{\texorpdfstring{\genhighlight{\lambda_{\mathsf{gen}}}}{λ-gen}}

\definecolor{commentgreen}{HTML}{006064}
\newcommand{\codecomment}[1]{\textit{\textcolor{commentgreen}{#1}}}
\newcommand{\codeshade}[1]{%
  {\setlength{\fboxsep}{2pt}\colorbox{black!5}{#1}}}
\newcommand{\codeshadeBlock}[1]{%
  {\setlength{\fboxsep}{2pt}%
   \colorbox{black!5}{%
     \begin{minipage}{\dimexpr\textwidth-2\fboxsep\relax}#1\end{minipage}}}}

\newcommand{\logrel}[1][]{\mathcal{R}_{#1}}
\newcommand{\changerel}[1][]{\mathcal{C}_{#1}}
\newcommand{\ValidUpd}[3][]{\mathcal{U}_{#1}^{#2,#3}}

\newcommand{\Ione}[1]{\sem{\IncToCore{#1}}_1}
\newcommand{\Itwo}[1]{\sem{\IncToCore{#1}}_2}
\newcommand{\semTuple}[2][]{\sem{#2}_{\mathsf{tpl}_{#1}}}

\setcopyright{cc}
\setcctype{by}
\acmDOI{10.1145/3808316}
\acmYear{2026}
\acmJournal{PACMPL}
\acmVolume{10}
\acmNumber{PLDI}
\acmArticle{238}
\acmMonth{6}
\acmSubmissionID{pldi26main-p491-p}
\received{2025-11-14}
\received[accepted]{2026-04-03}

\makeatletter
\renewcommand\paragraph{\@startsection{paragraph}{4}{\z@}%
    {1ex \@plus1ex \@minus.2ex}%
    {-0.5em}%
    {\normalfont\normalsize\bfseries}}
\makeatother

\makeatletter
\let\ACM@origparagraph\paragraph
\makeatother

\begin{document}

\title{Incremental Computation for Efficient Programmable Inference in Probabilistic Programs}

\author{Fabian Zaiser}
\orcid{0000-0001-5158-2002}
\affiliation{%
  \institution{Massachusetts Institute of Technology}
  \country{USA}
}
\email{fzaiser@mit.edu}

\author{Jack Czenszak}
\orcid{0009-0008-3242-6969}
\affiliation{%
  \institution{Yale University}
  \country{USA}
}
\email{jack.czenszak@yale.edu}

\author{Martin C. Rinard}
\orcid{0000-0001-8095-8523}
\affiliation{%
  \institution{Massachusetts Institute of Technology}
  \country{USA}
}
\email{rinard@csail.mit.edu}

\author{Vikash K. Mansinghka}
\orcid{0000-0003-2507-0833}
\affiliation{%
  \institution{Massachusetts Institute of Technology}
  \country{USA}
}
\email{vkm@mit.edu}

\author{Alexander K. Lew}
\orcid{0000-0002-9262-4392}
\affiliation{%
  \institution{Yale University}
  \country{USA}
}
\email{alexander.lew@yale.edu}

\begin{abstract}
Inference in probabilistic programs generally requires evaluating many possible program executions to find those of high posterior density. To scale inference to large datasets, it is crucial that expensive intermediate results are shared across these many evaluations, rather than recomputed from scratch. This paper presents a new approach to realizing this sharing, based on \textit{incremental computation}, a technique for efficiently recomputing (deterministic) program outputs when program inputs change. First, we show how expressive probabilistic programs can be compiled to deterministic ones that compute their density functions. Then, building on the incremental $\lambda$-calculus, we develop a general technique for compositionally incrementalizing expressive functional programs, and apply it to these densities. The resulting incremental densities can be used to accelerate a broad range of Monte Carlo inference algorithms, including for nonparametric models not well supported by existing systems. Furthermore, our decomposition of incremental density computation into separate density and incrementalization steps allows for modular reasoning about correctness---a key pain point in existing systems, where ad-hoc incrementalization features are a known source of soundness bugs. We develop denotational logical relations arguments for the correctness of each step independently, and implement the approach in a Julia prototype, finding that it leads to asymptotic runtime improvements in the size of the dataset on a range of models and inference algorithms.
\end{abstract}

\begin{CCSXML}
<ccs2012>
   <concept>
       <concept_id>10002950.10003648.10003670</concept_id>
       <concept_desc>Mathematics of computing~Probabilistic reasoning algorithms</concept_desc>
       <concept_significance>500</concept_significance>
       </concept>
   <concept>
       <concept_id>10003752.10010124.10010131.10010133</concept_id>
       <concept_desc>Theory of computation~Denotational semantics</concept_desc>
       <concept_significance>300</concept_significance>
       </concept>
   <concept>
       <concept_id>10002950.10003648.10003662.10003664</concept_id>
       <concept_desc>Mathematics of computing~Bayesian computation</concept_desc>
       <concept_significance>500</concept_significance>
       </concept>
 </ccs2012>
\end{CCSXML}

\ccsdesc[500]{Mathematics of computing~Probabilistic reasoning algorithms}
\ccsdesc[300]{Theory of computation~Denotational semantics}
\ccsdesc[500]{Mathematics of computing~Bayesian computation}

\keywords{incremental computation, probabilistic programming, Bayesian inference}

\maketitle


\begin{figure}[t]
\centering
\footnotesize
\begin{tikzpicture}
  \tikzset{>=Stealth, every path/.style={line width=1pt}}
  \matrix[column sep=18mm] {
    \node[draw, rectangle, inner sep=0.5em, align=center] (gen) {generative program\\in $\genlang$}; &
    \node[draw, rectangle, inner sep=0.5em, align=center] (dens) {density function\\in $\inclang$}; &
    \node[draw, rectangle, inner sep=0.5em, align=center] (upd) {density updater\\in $\corelang$}; &
    \node[draw, rectangle, inner sep=0.5em, align=center] (inf) {inference\\program}; \\
    };
    \draw[->] (gen) -- node[midway, above=1pt, align=center]{density \\ transform} node[midway, below=1pt]{(\cref{sec:density-transformation})} (dens);
    \draw[->] (dens) -- node[midway, above=1pt]{incrementalize} node[midway, below=1pt]{(\cref{sec:incrementalization})} (upd);
    \draw[->, dashed] (inf) -- node[midway, above=1pt, align=center]{repeatedly \\ evaluates} node[midway, below=1pt]{(\cref{fig:marquee:inference-algorithms})} (upd);
\end{tikzpicture}
\mycaption{Overview of our staged approach to incremental density computation for programmable inference.}
\label{fig:schematic-overview}
\end{figure}

\section{Introduction}
\label{sec:introduction}

Probabilistic programming systems promise to accelerate the practice of probabilistic modeling and inference, by providing languages for expressing rich probabilistic models, and automation for aspects of inference.
Unfortunately, for many models, probabilistic programming systems produce slow inference code---in some cases, asymptotically slower than expert implementations of the same algorithms.
One culprit behind this efficiency gap is insufficient \textit{incrementalization}: inference can require thousands or millions of queries to the probability density function of a model, and it is crucial for scalability that expensive intermediate results are shared across these queries.

In this paper, we propose a new approach to closing this efficiency gap for a broad class of models and Monte Carlo inference algorithms, based on \textit{incremental computation}---a technique for efficiently recomputing (deterministic) program outputs when program inputs change. As illustrated in \cref{fig:schematic-overview}, we first show how expressive probabilistic programs can be compiled into deterministic programs that compute their density functions. Then, building on the incremental $\lambda$-calculus~\cite{GiarrussoRS19,CaiGRO14}, we develop a general technique for compositionally incrementalizing expressive higher-order functional programs, and apply it to these density functions.

Our approach addresses two key weaknesses of existing incremental features in probabilistic programming systems
~\cite{Cusumano-Towner19, MansinghkaSP14, TehraniALSNTTML20, MatheosLGRCM21, RitchieSG16, CastellanP19}. First, existing incrementalization is often limited to particular inference algorithms (e.g., single-site Metropolis-Hastings) or model classes (e.g., parametric models with a fixed number of random variables). By contrast, our approach treats densities as ordinary deterministic programs, and incrementalizes them with respect to arbitrary input changes. As a result, we support the incrementalization needs of a broad range of inference algorithms (\cref{fig:marquee:inference-algorithms}), including for nonparametric models such as Dirichlet process mixtures. Second, incrementalization in the presence of probability is notoriously difficult to get right, and existing implementations are brittle and prone to soundness bugs (see, e.g., \cref{fig:marquee:inference-bug-plot}). Our staged approach isolates all probabilistic reasoning in the density compilation step, so that incrementalization need only be proven correct for a deterministic language. We develop new techniques for carrying out this proof: a new approach to typed incrementalization of closures, and a coinductive \textit{updater} abstraction that encapsulates cached intermediate values and supports iterated sequences of updates, enabling a novel denotational correctness argument for caching-based higher-order incremental computation. Our approach is implemented in a Julia prototype that achieves asymptotic speedups over Gen, a state-of-the-art incrementalizing PPL, across a range of models and inference algorithms.

\subsection{Motivating Example}

\begin{figure}
\input{figures/marquee-figure.tex}
\mycaption{Example: modeling and inference for a 2D Gaussian mixture model with incremental densities.}
\label{fig:marquee}
\end{figure}

To illustrate our approach, we consider the problem of inferring a \textit{clustering} for a dataset of points $\{(x_i, y_i)\}_{i=1}^n$ in the plane (\cref{fig:marquee:inferring-clusters}). Our first step is to write a probabilistic program, $\const{gmm}$ (\cref{fig:marquee:generative-model}), encoding a \textit{generative model}: a hypothesized data-generating process by which we imagine our dataset might have been constructed. The program $\const{gmm}$ (for ``Gaussian mixture model'') accepts as input a number $n$ of datapoints to generate, and outputs a synthetic dataset: a list of $n$ pairs. As the simulated dataset in \cref{fig:marquee:sample-traces} illustrates, the key feature of this program is that it generates datasets in which points tend to cluster into a handful of ellipse-shaped components. Indeed, in each simulation, the program explicitly chooses a random number of components to generate, samples random parameters for each component, and randomly chooses a component from which to generate each datapoint. Thus, we can reframe the problem of \textit{inferring a clustering} for our dataset as the problem of finding \textit{executions} of the $\const{gmm}$ program that could plausibly have generated it.

\paragraph{Probabilistic Programs and Traces.} The $\const{gmm}$ program consists of a sequence of \textit{random assignment} statements $\var{x} \sim \var{t}$, with an identifier $\var{x}$ on the left and a probabilistic expression $\var{t}$ on the right (i.e., an expression whose meaning is a probability distribution that the system can sample). A \textit{trace} of a probabilistic program is a record of a particular execution: for each random assignment $\var{x} \sim \var{t}$, it stores an entry $\field{x} v$, recording the choice(s) $v$ made by $\var{t}$ during that execution. For example, consider the trace in \cref{fig:marquee:sample-traces}. The first line of $\const{gmm}$ draws a random number from a geometric distribution and generates a set of that many fresh \textit{names}; the trace records that this set, $\lbl{clusterNames}$, was $\{\nameBlue{73}, \nameOrange{64}, \nameGreen{12}, \namePurple{30}\}$, of size $4$. Next, $\const{gmm}$ loops over this set and for each name inside, generates a \textit{covariance} and \textit{mean} from particular \textit{prior distributions}. Because this loop is over an unordered \textit{set} of names, its trace (stored under the label ``$\lbl{params}$'') is an unordered \textit{map} from names to the values sampled during the corresponding iteration of the loop. Next, the $\const{gmm}$ program uses the $\const{dirichlet}$ primitive to sample random \textit{mixing weights}, resulting in a map from cluster names to real-valued weights, summing to 1. The program then uses the $\const{multinomial}$ primitive to draw a list of $n$ \textit{cluster assignments} according to the mixing weights. Finally, the program loops through this list of assignments to generate the $n$ datapoints. Each datapoint is generated using the $\const{normal}$ primitive, with parameters determined by the component to which we have assigned the point. Because this loop is over an (ordered) list of $n$ elements, its trace is also a list of $n$ elements, recording the values sampled in each iteration of the loop.

\paragraph{Density Functions.}
The \textit{density function} of a probabilistic program maps an execution trace to a non-negative score, which intuitively captures how likely the trace is. In \cref{sec:density-transformation}, we present a program transformation that automatically compiles probabilistic programs to deterministic ones that compute their density functions.
Roughly speaking, the density of a trace can be computed by running the program as usual, except instead of sampling from a primitive distribution, one looks up the recorded value in the trace and multiplies a running accumulator (the \textit{weight}) by the (known) probability mass or density of the recorded value under the primitive distribution.\footnote{This is a standard technique. However, since our language supports fresh name generation and unordered collections, there are several subtleties that complicate the derivation of density functions. We address these challenges in \cref{sec:density-transformation}.}
For the $\const{gmm}$ model, the value of the density function is that it helps distinguish good clusterings from bad ones: if we fix the $\lbl{data}$ entry to match our observed dataset, and vary the number of clusters, their parameters, and their assignments, the density will be low when the trace is incoherent, and high when it provides a plausible explanation of the data.

\paragraph{Incremental Density Computation.}
Inferring a plausible trace consistent with observed data typically requires the density function to be evaluated at many related traces. The goal of this paper is to accelerate inference by performing these density queries \textit{incrementally}. In~\cref{sec:incrementalization}, we present a second program transformation, which compiles the density program from \cref{sec:density-transformation} into an \textit{incremental version}. The incremental version can be run on an \textit{initial} trace $\tr$, to yield a density $\var{w}$ and an \textit{updater} $\var{u}$. The updater caches certain intermediate values from the computation, and thus helps to calculate the density at a slightly modified trace $\tr'$ without full recomputation. In many cases, incrementalization delivers asymptotic speedups: in the $\const{gmm}$ model, for example, we reduce the cost of some density queries from $O(n)$ to $O(1)$, by skipping unnecessary loop iterations.

Concretely, an updater $\var{u}$ is applied to a \textit{change description} $\dtr$, which precisely encodes the delta from $\tr$ to $\tr'$. The type of the change description is based on the type of $\tr$ (\cref{fig:marquee:trace-change}). For example,
\begin{align*}
  \dtr &= \recordLit{ \field{assignments} \recordLit{ \field{change} [(1, \recordLit{ \field{new} \nameOrange{64}, \field{same} \falseLit})] } }
\end{align*}
is one possible change to a trace of $\const{gmm}$, encoding a reassignment of datapoint $1$ to cluster $\nameOrange{64}$. Because the entry at the label $\lbl{assignments}$ in $\tr$ is a list, we supply a list-change, which may specify insertions, deletions, and element modifications; in this case, we only modify element 1 of the list.
Running $\var{u}.\keyword{apply}(\dtr)$ efficiently recomputes the density given this change, which only requires re-evaluating a single loop iteration (for the reassigned datapoint).
The updater also returns a new updater $\var{u}'$, which can be used to efficiently compute densities for traces similar to $\tr'$.

\paragraph{Accelerating Inference.}
We can exploit incremental density computation to speed up our inference algorithm of choice. \Cref{fig:marquee:inference-algorithms} gives pseudocode for three families of popular Monte Carlo algorithms (Metropolis-Hastings, Gibbs sampling, and sequential Monte Carlo), illustrating the role that incremental densities can play in each. We apply Metropolis-Hastings to our $\const{gmm}$ model. The algorithm first constructs an (arbitrary) initial trace whose $\lbl{data}$ field is fixed to our observed dataset. It then repeatedly proposes small changes to the trace, accepting or rejecting each change according to the resulting change in density. As \cref{fig:marquee:inference-time-plot} illustrates, incremental density computation leads to asymptotically faster evaluation of each proposal, and thus dramatically reduces the wall-clock time for inference to converge. Although existing systems, such as Gen~\citep{Cusumano-Towner19}, support incrementality to some extent, our approach exploits some opportunities for asymptotic speedups that existing systems fail to capture. Furthermore, due to the complexity of reasoning about incremental computation and probability, we have found some systems to have incrementalization bugs that can lead inference to silently fail (see \cref{fig:marquee:inference-bug-plot}).\footnote{
See, e.g., \url{https://github.com/probcomp/Gen.jl/issues/512} or \url{https://github.com/probcomp/Gen.jl/issues/193}.
} A key aim of this work is to provide a framework, based on modular program transformations and logical relations, for reasoning clearly about the correctness of incremental densities.

\subsection{Our Approach}

Our approach overcomes several key challenges to deliver sound, efficient incremental densities:

\paragraph{(C1) Combining Probabilities and Incrementality.} Existing support for incrementalization in probabilistic programming systems is generally based on a direct consideration of the probabilistic language's semantics.
Incremental computation in the presence of probabilities is nontrivial: even the existence of program variables can depend on random choices, which complicates the propagation of changes through the program.
As a result, many existing systems rely on complex dependency tracking based on data structures inspired by graphical models~\citep{MatheosLGRCM21,MansinghkaSHRCR18,TehraniALSNTTML20,Cusumano-Towner19}, which can be difficult to reason about (and empirically, is a source of several soundness bugs). Our approach isolates all probabilistic reasoning from incrementalization, by first compiling a probabilistic program to a deterministic density function. Incrementalization is then implemented (and proven correct) for only the deterministic language.

\paragraph{(C2) Incrementalizing Open-Universe Models.}
Probabilistic programs often generate latent collections of \emph{objects} whose number is not known in advance (e.g., the clusters in the $\const{gmm}$ program). To support efficient recomputation when objects are added and deleted (or, e.g., split and merged), traces and caches must use data structures that make such operations efficient. For example, it is crucial that $\const{gmm}$ uses dictionaries with IDs as keys (rather than, e.g., vectors with consecutive integer indices) to store cluster information; otherwise, deleting a latent cluster would require expensive re-indexing to maintain the invariant of consecutive component IDs, inducing $O(n)$ changes to cluster assignments. To overcome this challenge, we designed our probabilistic language to support features such as fresh name generation and unordered collections. Using these features, it is possible to express nonparametric, open-universe models in a way that makes them amenable to efficient incrementalization.

\paragraph{(C3) Densities for Open-Universe Models.}
Unfortunately, the unique names and unordered collections we use to efficiently encode open-universe models violate two key assumptions behind the density calculations in most PPLs: (1) that all choices of a given type are \textit{either} discrete \textit{or} continuous; and (2) that any given execution trace of a program could have been generated in exactly one way by that program. The first assumption is violated because \textit{names} are sometimes drawn from continuous distributions (to ensure that they will, with probability 1, be fresh~\citep{SabokSSW21}), and sometimes from discrete distributions (when we are choosing randomly to refer to one of a finite set of previously generated IDs). The second assumption is violated because, when an unordered set is stored in a trace, we erase information about the order in which the elements were generated. In \cref{sec:density-transformation}, we develop a family of reference measures that allow us to formulate a working notion of density in this setting.
We show how to compute correct densities by including (1) factorial corrections counting the number of ways in which an unordered collection could have been generated, and (2) additional ``freshness checks'' for names recorded in the trace.

\paragraph{(C4) Representing Changes.}
The probabilistic programming system must be able to handle a broad range of user-defined changes to capture trace updates that arise in real-world inference algorithms.
We address this with a type-directed approach: the kinds of changes to a trace depend on the kinds of data in the trace.
First, we use trace typing~\cite{LewCSCM20} to synthesize a rich static type for the traces of a probabilistic program, based on its control flow.
Second, inspired by the incremental $\lambda$-calculus~\citep{CaiGRO14,GiarrussoRS19,Morihata20}, we define change representations by induction on the trace type. This approach enables concise representations of several common changes that are cumbersome to express (and lead to inefficient incrementalization) in state-of-the-art PPLs.

\paragraph{(C5) Processing Changes Efficiently.}
A core technical contribution of this paper is a new approach to typed higher-order incremental computation with caches.
Given a trace change, our goal is to compute a density update with the same asymptotic runtime complexity as an expert manual implementation.
To our knowledge, no existing PPL achieves the best asymptotic performance for updates to nonparametric models such as Dirichlet process mixtures. The key source of high asymptotic complexity is the unnecessary re-execution of loops.
Thus our algorithm is designed to precisely propagate change information about the input trace ($\dtr$) through the density program, so that before entering a loop, we have enough information to compute precisely which loop iterations need to be re-executed. This change propagation is realized by per-primitive change propagators, and rules inspired by the caching incremental $\lambda$-calculus~\citep{GiarrussoRS19} for composing them. We develop a new approach to implementing this design in a typed, higher-order setting, which allows us to view loops themselves as combinator-like, higher-order primitives with their own built-in logic for caching, change propagation, and incrementalization.

\paragraph{(C6) Reasoning about Correctness.} Our two-stage approach allows us to reason separately about the correctness of density computation and incrementalization; we prove each transformation correct using independent logical relations arguments over the denotational semantics of our languages. Interestingly, previous work on cache-based incrementalization \cite{GiarrussoRS19} was formalized only in an untyped setting, because the types of caches of intermediate values quickly become unwieldy. We address this challenge by encapsulating caches in an opaque updater type $\Upd{\tau}{\tau'}$, which hides the cache type after construction.
The semantics treats updaters as coinductive data types (infinite trees of all possible future update sequences) and can ignore caching concerns. This enables a modular logical relations proof over a standard, set-theoretic denotational semantics.

\paragraph{Contributions.}
In summary, we make the following contributions.
\begin{itemize}[leftmargin=*]
    \item \textbf{A new approach to typed higher-order incremental computation (\cref{sec:incrementalization}):}
    We present a new program transformation for turning higher-order deterministic programs into \emph{updaters} that efficiently compute changes to their outputs given changes to their inputs.
    Our updaters support caches of intermediate results that are encapsulated via coinductive \emph{updater types}, giving the first typed treatment of caching incremental computation while staying within the simply-typed setting.
    Each update also returns a new updater to process subsequent changes, which is essential for iterative applications such as Markov chain Monte Carlo inference.
    \item \textbf{Incremental densities for open-universe models (\cref{sec:calculi,sec:density-transformation}):} We present a program transformation that compiles probabilistic programs into deterministic, incrementalizable density functions. These densities are defined on typed execution traces, based on a novel extension of \textit{trace typing}~\citep{LewCSCM20,BeckerHMWCSRSLRM26,WangHR21,LiASZ23} to support unordered collections and named objects. In combination with our incrementalization transformation, this enables efficient trace updates when inferring unknown collections of entities in open-universe and nonparametric models~\citep{MatheosLGRCM21,MilchR06}.
    \item \textbf{Denotational correctness arguments (\cref{sec:density-transformation:correctness,sec:incrementalization:correctness}):}
    We present a denotational account of the correctness of both stages of our pipeline.
    For the density transformation, we develop novel semantic techniques for reasoning about the densities of probabilistic programs that may generate unordered collections of names.
    For the incrementalization transformation, we present the first denotational argument for the correctness of caching incremental computation, which relies on a new coinduction-based reasoning principle for the correctness of updaters.
    \item \textbf{Empirical validation (\cref{sec:case-studies}):}
    We implement our approach for an embedded probabilistic DSL in Julia and compare it with Gen \cite{Cusumano-Towner19}, a state-of-the-art programmable inference system implemented in the same host language.
    We show that our approach offers significant speedups compared to naive density evaluation and its asymptotic runtime is the same or better than Gen.
\end{itemize}

\section{Probabilistic and Deterministic Languages}
\label{sec:calculi}

This section introduces the syntax and semantics of three related programming languages. As illustrated in \cref{fig:schematic-overview}, the transformations we develop in \cref{sec:density-transformation,sec:incrementalization} translate between these languages: user probabilistic programs in $\genlang$ are translated to deterministic density functions in $\inclang$, which are further translated into incremental density functions in $\corelang$. The grammars and typing rules for all three languages are presented in \cref{fig:syntax,fig:type-system}.

\subsection{Probabilistic Language}

Our probabilistic language, $\genlang$, is the user-facing language for encoding probabilistic models.

\paragraph{Ground Types.} $\genlang$ is a variant of the simply-typed $\lambda$-calculus with the following ground types:
\[ \sigma ::= \Bool \mid \Real \mid \Nat \mid \Name \mid \sigma_1 \times \sigma_2 \mid \List{\sigma} \mid \NameMap{\sigma} \mid \Record{\ell_1: \sigma_1, \dots, \ell_n: \sigma_n} \quad \quad (\ell_i \in \mathcal{L}) \]
Each type $\sigma$ is taken to denote a space of values $\sem{\sigma}$, and a term $\Gamma \vdash t : \tau$ denotes a map $\sem{t} : \sem{\Gamma} \to \sem{\tau}$.\footnote{As is standard in the semantics of higher-order probabilistic programming languages, these spaces are \textit{quasi-Borel spaces}~\citep{HeunenKSY17}, a drop-in replacement for measurable spaces with better support for function types.} The types $\Bool$, $\Nat$, and $\Real$ denote the standard spaces of Booleans, natural numbers, and reals respectively. To model \textit{names}, we take $\sem{\Name} = [0,1]$, exploiting an insight of~\citet{SabokSSW21} that if names are interpreted as real numbers, fresh name generation can be modeled as continuous sampling (as, with probability 1, two independent samples will not be equal). Record types $\Record{\ell_1: \tau_1, \dots, \ell_n: \tau_n}$, with field labels drawn from a countably infinite set $\mathcal{L}$, denote indexed products, with projections $\pi_{\ell_1}, \dots, \pi_{\ell_n}$. The empty record $\Unit$ serves as a unit type for $\genlang$. $\List{\sigma}$ denotes $\sqcup_{i \in \NN} \sem{\sigma}^i$. The type $\NameMap{\sigma}$ denotes finite dictionaries with keys in $\sem{\Name}$ and values in $\sem{\sigma}$, which we represent concretely as \textit{sorted} lists of $\sem{\Name} \times \sem{\sigma}$ pairs without duplicate keys (sorted by the usual order on $\sem{\Name}=[0,1]$). We write $\NameSet$ as sugar for $\NameMap{\Unit}$. We use the syntactic sugar $\ite{s}{t}{t'}$ for a primitive operation $\const{ite}_\sigma(s, t, t')$ where $t$ and $t'$ have ground type $\sigma$, and $s[t]$ for a primitive $\const{get}_\sigma(t, s)$ that indexes lists and maps.

\paragraph{Function Types.} Function types in $\genlang$ are written $\sigma \to_\kappa \tau$. The annotation $\kappa$ on the arrow specifies the type of data in a closure's captured environment, but does not affect the semantics: $\sem{\sigma \to_\kappa \tau} = \sem{\sigma} \To \sem{\tau}$ (where $\To$ constructs the semantic function space). For example, we have $$y : \Real, z : \Nat \vdash \lambda x. ~ x < y : \Real \to_{\Real} \Bool$$ because the function $\lambda x. ~ x < y$ closes over the variable $y$ of type $\Real$.\footnote{Technically, the typing rules compute the equivalent type $\Real \to_{\Unit{} \times \Real} \Bool$.} (It is admittedly unusual to track this information in the type of a function. More details will be provided in \cref{sec:incrementalization}, but the intuition is that our incrementalization transformation will represent \textit{changes} to functions of type $\sigma \to_\kappa \tau$ as changes to the data they close over, and so the type $\kappa$ of this data must be explicit.)

\begin{figure}
  \begin{align*}
    \text{Types } \sigma, \tau &::= \Bool \mid \Nat \mid \Real \mid \Name \mid \sigma \times \tau \mid \List{\sigma} \mid \NameMap{\sigma} \mid \Record{\ell_1: \tau_1, \dots, \ell_n: \tau_n} \\
    &\qquad \mid \underbrace{\genhighlight{\Dist[\kappa]{\tau}} \mid \genhighlight{\ProbTy[\kappa]{\sigma}{\tau} \mid \tau \clos[\sigma] \tau'}}_{\genlang} \mid \underbrace{\inchighlight{\tau \clos[\sigma] \tau'}}_{\inclang} \mid \underbrace{\corehighlight{\tau \to \tau' \mid \SubRecord{\ell_1: \tau_1, \dots, \ell_n: \tau_n} \mid \Upd{\tau}{\tau'}}}_{\corelang} \\
    \text{Terms: } s, t &::= x \mid c \mid x := s; t \mid \lam{x} t \mid s \; t \mid (s, t) \mid t.1 \mid t.2 \\
    \text{in \genlang:} &\qquad \mid \genhighlight{\return{t} \mid x \gets s; t \mid \sample{t} \mid t \at \ell} \\
    &\qquad \mid \genhighlight{\forSetUsing{x}{\mvar{xs}}{v_1 := y_1[u_1], \dots, v_n := y_n[u_n]}{t}} \\
    &\qquad \mid \genhighlight{\forRangeWithUsing{x}{w}{z}{s}{v_1 := y_1[u_1], \dots, v_n := y_n[u_n]}{t}} \\
    \text{in \inclang:} &\qquad \mid \inchighlight{t.\ell} \mid \inchighlight{\restrict{t}{\{\ell_1, \dots, \ell_n\}}} \mid \inchighlight{\forUsing{x}{\mvar{xs}}{v_1 := y_1[u_1], \dots, v_n := y_n[u_n]}{t}} \\
     &\qquad \mid \inchighlight{\forWithUsing{x}{\mvar{xs}}{z}{s}{v_1 := y_1[u_1], \dots, v_n := y_n[u_n]}{t}} \\
    \text{in \corelang:} &\qquad \mid \corehighlight{t.\ell} \mid \corehighlight{\restrict{t}{\{\ell_1, \dots, \ell_n\}}} \mid \corehighlight{\has[\ell]{t}} \mid \corehighlight{\mkUpd{s}{t}} \mid \corehighlight{\applyUpd{s}{t}}
  \end{align*}
  \vspace{-1.5em}
  \mycaption{Syntax of our languages $\genlang, \inclang, \corelang$. Labels $\ell$ are from a countable set $\Labels$.}
  \label{fig:syntax}
\end{figure}

\paragraph{Probabilistic Programs.} Probabilistic programs in $\genlang$ have types of the form $\ProbTy[\kappa]{\sigma}{\tau}$: $\tau$ is the return type of the probabilistic program, and $\sigma$ is its \textit{trace type}, i.e., the data type used to represent reified traces of its execution~\citep{LewCSCM20}.
As with function types, $\kappa$ is an environment annotation, reflecting the types of any free variables captured by the probabilistic computation. (We omit it when none are captured.) For example, the generative model of \cref{fig:marquee:generative-model} has type $\ProbTy[]{\sigma}{(\List{\Real^2})}$, where
\begin{align*}
    \sigma = \{&\lbl{clusterNames} : \NameSet, \lbl{params} : \NameMap{\{\lbl{mean} : \Real^2, \lbl{cov} : \Real^4\}}, \\
    &\lbl{weights} : \NameMap{\Real}, \lbl{assignments} : \List{\Name}, \lbl{data}: \List{\Real^2}\}
\end{align*}

Semantically, $\sem{\ProbTy[\kappa]{\sigma}{\tau}} = \mathsf{Prob}~\sem{\sigma} \times (\sem{\sigma} \To \sem{\tau})$. That is, a probabilistic program denotes a pair $(\mu, f)$, where $\mu \in \mathsf{Prob}~\sem{\sigma}$ is a distribution on traces, and $f : \sem{\sigma} \to \sem{\tau}$ is a deterministic function mapping traces to corresponding return values. Formally, $\mathsf{Prob}~X$ is a subset of the function space $(X \To [0,\infty]) \To [0,\infty]$: we identify a probability distribution $\mu$ with its \textit{expectation operator}, which sends functions $f : X \to [0, \infty]$ to their expected values $\mathbb{E}_\mu[f]$. For example, the uniform distribution on $[0, 1]$ is $\Lambda_{[0,1]} = f \mapsto \int_0^1 f(x) \dif{x}$, and the \textit{Dirac delta} distribution at a point $x$ is $\delta_x= f \mapsto f(x)$.
For $k : X \to \mathsf{Prob}~Y$, we write $\mu; k \in \mathsf{Prob}~Y$ for the composition $f \mapsto \mu(x \mapsto k(x)(f))$. We write $x \leftsquigarrow \mu; \nu$, where $x$ occurs free in $\nu$, to mean $\mu; (x \mapsto \nu)$.

Probabilistic programs are built by composing \textit{primitive distributions} of type $\Dist[\kappa]{\sigma}$, where $\sem{\Dist[\kappa]{\sigma}} = \mathsf{Prob}~\sem{\sigma}$. The key constructs are:

\begin{enumerate}[leftmargin=*]
\item \textit{Deterministic computation.} A probabilistic program can return a deterministically computed value using the construct $\return{t}$. The execution traces of deterministic programs are empty, because there are no random choices to record. For example, we have the judgments
\begin{align*}
\vdash \return{(5+3, \keyword{true})} : \ProbTy[]{\Unit}{(\Nat \times \Bool)} && x : \Real \vdash \return{x < 3} : \ProbTy[\Real]{\Unit}{\Bool}
\end{align*}
where on the right, the environment annotation $\kappa=\Real$ indicates that the program closes over a variable of type $\Real$. For $\Gamma \vdash t : \tau$ and $\gamma \in \sem{\Gamma}$, we have $\sem{\return{t}}(\gamma) = (\delta_{\Unit}, \_ \mapsto \sem{t}(\gamma))$.

\item \textit{Sampling primitive distributions.} The construct $\sample{t}$ generates and returns a sample from the primitive distribution specified by $t : \Dist[]{\sigma}$. For example, $\sample{\const{normal}(0,1)} : \ProbTy[]{\Real}{\Real}$ samples a standard Gaussian and returns it. The trace simply records the sampled value: $$\sem{\sample{t}}(\gamma)=(\sem{t}(\gamma), \mathrm{id})$$

In addition to the standard discrete and continuous primitive distributions, $\genlang$ has a primitive $\const{fresh} : \Dist{\Nat} \to \Dist{\NameSet}$. This primitive generates a random number $n$ from its argument distribution,\footnote{When we wish to generate a fixed number of fresh names, we write $\const{fresh}(n)$ as sugar for $\const{fresh}(\const{dirac}(n))$.} then returns a set of $n$ freshly generated names (i.e., $n$ independent samples from the uniform distribution on $[0, 1]$).

\item \textit{Labeling random choices.} The construct $t \at \ell$ \textit{labels} the random choice(s) made by $t$ with $\ell \in \mathcal{L}$; the trace type of $t \at \ell$ is the singleton record $\{ \ell : \sigma \}$, where $\sigma$ is the trace type of $t$. For example, we have $\sample{\const{normal}(0,1)} \at \lbl{val} : \ProbTy[]{\{\lbl{val} : \Real\}}{\Real}$. Semantically, \begin{align*}\sem{t \at \ell}(\gamma) = (x \leftsquigarrow \mu; \delta_{\{\ell \mapsto x\}}, f \circ \pi_\ell) && \text{where }(\mu, f) := \sem{t}(\gamma)\end{align*}

\item \textit{Sequencing.} The construct $x \gets t; s$ runs $t$ and binds its return value to $x$, then runs $s$. The terms $t$ and $s$ must both be probabilistic programs with disjoint record trace types; the trace type of $x \gets t; s$ is the merge of the two record types. We write $x \sim t$ as syntactic sugar for $x \gets t \at x$ (or for $x \gets \sample{t} \at x$ if $t : \Dist[]{\sigma}$). For example, we have the judgment
$$x : \Real \vdash y \sim \const{normal}(x, 1); z \sim {\const{beta}(1, 2)}; \return{z < y} : \ProbTy[\Real]{\{y : \Real, z : \Real\}}{\Bool}$$ where the trace type $\{y : \Real, z : \Real\}$ is the concatenation of $\{y : \Real\}$, $\{z : \Real\}$, and $\Unit$.
Semantically,
\begin{align*}\sem{x \gets t; s}(\gamma) = (\rho_1 \leftsquigarrow \mu; \rho_2 \leftsquigarrow k(f(\rho_1)); \delta_{\rho_1 +\!\!+ \rho_2}, \rho \mapsto g(f(\rho|_{L_1}))(\rho|_{L_2}))\end{align*}
where $(\mu, f)=\sem{t}(\gamma)$, $\langle k, g\rangle = \sem{s}(\gamma[x \mapsto -])$, and $L_1$ and $L_2$ are the label sets of the trace types of $t$ and $s$.
Our probabilistic program types form a \emph{graded monad} with grade $\sigma$.\footnote{However, trace types $\sigma$ form only a \textit{partial} monoid, because only disjoint record types $\sigma_1$ and $\sigma_2$ can be merged.}

\item \textit{Loops.} The loop $\forInRange{x}{d}{t}$, where $d : \Dist[]{\Nat}$, draws $n \sim d$, then iterates from $x := 1$ to $n$.\footnote{We write $\forInRange{x}{n}{t}$, where $n : \Nat$, as shorthand for $\forInRange{x}{\const{dirac}(n)}{t}$.} Its traces are lists, where each element is a trace of $t$, and $n$ is implicitly recorded as the length of the list. Its return value is also a list, of the values returned by $t$ at each iteration. Optionally, an accumulator may be initialized and updated each iteration:
{\small $$ \forRangeWith{\var{i}}{5}{\var{z}}{0}{\var{z}' := \var{z} + 1; \var{y} \sim \const{normal}(\var{z}',0.1); \return{(\var{y}, \var{z}')}} : \ProbTy[]{(\List{\{\lbl{y} : \Real\}})}{(\List{\Real})}$$}%
The construct $\forInSet{x}{d}{t}$ loops over an unordered set of names (no accumulator). This loop draws $\mvar{names} \sim d$, and runs $t$ (of type $\ProbTy[]{\sigma}{\tau}$) once for each name, collecting the results into a $\NameMap{\tau}$. Its trace type is $\NameMap{\sigma}$, as its traces store a trace of $t$ for each iteration.
The clause $\using{v_1:=y_1[u_1]; \dots; v_n:=y_n[u_n]}$ may also be added before a loop's body. This is equivalent to adding $v_1:=y_1[u_1];\dots;v_n:=y_n[u_n];$ to the loop body, but tells the system to track the indices of the collections $y_i$ accessed at each iteration for finer-grained incrementality.
\end{enumerate}

\begin{figure}\footnotesize
  \centering
  \begin{mathpar}
    \inferrule{ }{\toType{\bullet} = \Unit}

    \inferrule
      {\toType{\Gamma} = \tau}
      {\toType{\Gamma, x : \tau'} = \tau \times \tau'}

    \inferrule*[right={\scriptsize(\genlang, \inclang)}]
      {\tyjudg{\Gamma, x: \tau}{t}{\tau'}}
      {\tyjudg{\Gamma}{\lam{x} t}{\tau \clos[\toType{\Gamma|_{\freeVars{\lam{x} t}}}] \tau'}}
    
    \inferrule*[right={\scriptsize(\genlang, \inclang)}]
      {\tyjudg{\Gamma}{s}{\tau \clos[\sigma] \tau'} \\ \tyjudg{\Gamma}{t}{\tau}}
      {\tyjudg{\Gamma}{s \; t}{\tau'}}

    \inferrule*[right={\scriptsize(\genlang)}]
      {\tyjudg{\Gamma}{t}{\tau}}
      {\tyjudg{\Gamma}{\return{t}}{\ProbTy[\toType{\Gamma|_{\freeVars{t}}}]{\Unit}{\tau}}}

    \inferrule*[right={\scriptsize(\genlang)}]
      {\tyjudg{\Gamma}{t}{\ProbTy[\kappa]{\sigma}{\tau}}}
      {\tyjudg{\Gamma}{t \at \ell}{\ProbTy[\kappa]{\Record{\ell: \sigma}}{\tau}}}

    \inferrule*[right={\scriptsize(\genlang)}]
      {\tyjudg{\Gamma}{t}{\Dist[\kappa]{\sigma}}}
      {\tyjudg{\Gamma}{\sample{t}}{\ProbTy[\kappa]{\sigma}{\sigma}}}

      \inferrule*[right={\scriptsize(\genlang)}]
      {\tyjudg{\Gamma}{s}{\ProbTy[\kappa]{\Record{\ell_1: \sigma_1, \dots, \ell_m: \sigma_m}}{\tau}} \\\\
      \tyjudg{\Gamma, x : \tau}{t}{\ProbTy[\kappa']{\Record{\ell_1': \sigma_1', \dots, \ell_n': \sigma_n'}}{\tau'}} \\
      \{ \ell_1, \dots, \ell_m \} \cap \{ \ell_1', \dots, \ell_n' \} = \emptyset}
      {\tyjudg{\Gamma}{x \gets s; t}{\ProbTy[\toType{\Gamma|_{\freeVars{x \gets s; t}}}]{\Record{\ell_1: \sigma_1, \dots, \ell_m: \sigma_m, \ell_1': \sigma_1', \dots, \ell_n': \sigma_n'}}{\tau'}}}

      \resizebox{\textwidth}{!}{$
      \inferrule*[right={\scriptsize(\genlang)}]
      {\tyjudg{\Gamma}{w}{\Dist[\kappa]{\Nat}} \\
      \tyjudg{\Gamma}{s}{\rho} \\
      \tyjudg{\Gamma, x: \Nat, z: \rho}{y_i[u_i]}{\theta_i} \\
      \tyjudg{\Gamma, x: \Nat, z: \rho, v_1: \theta_1, \dots, v_n: \theta_n}{t}{\ProbTy[\kappa']{\sigma}{(\tau \times \rho)}}}
      {\tyjudg{\Gamma}{\forRangeWithUsing{x}{w}{z}{s}{v_1 := y_1[u_1], \dots, v_n:=y_n[u_n]}{t}}{\ProbTy[\toType{\Gamma|_{\freeVars{...}}}]{(\List \sigma)}{(\List{\tau})}}}
      $}

      \resizebox{\textwidth}{!}{$
      \inferrule*[right={\scriptsize(\genlang)}]
      {\tyjudg{\Gamma}{\mvar{xs}}{\Dist[\kappa]{\NameSet}} \\
      \tyjudg{\Gamma, x: \Name}{y_i[u_i]}{\theta_i} \\
      \tyjudg{\Gamma, x: \Name, v_1: \theta_1, \dots, v_n: \theta_n}{t}{\ProbTy[\kappa']{\sigma}{\tau'}}}
      {\tyjudg{\Gamma}{\forSetUsing{x}{\mvar{xs}}{v_1 := y_1[u_1], \dots, v_n := y_n[u_n]}{t}}{\ProbTy[\toType{\Gamma|_{\freeVars{...}}}]{(\NameMap{\sigma})}{(\NameMap{\tau'})}}}
      $}

    \inferrule*[right={\scriptsize(\inclang)}]
      {\tyjudg{\Gamma}{\mvar{xs}}{\List \tau} \\
      \tyjudg{\Gamma}{s}{\rho} \\
      \tyjudg{\Gamma, x: \tau, z: \rho}{y_i[u_i]}{\theta_i} \\
      \tyjudg{\Gamma, x: \tau, z: \rho, v_1: \theta_1, \dots, v_n: \theta_n}{t}{\tau' \times \rho}}
      {\tyjudg{\Gamma}{\forWithUsing{x}{\mvar{xs}}{z}{s}{v_1 := y_1[u_1], \dots, v_n := y_n[u_n]}{t}}{\List{\tau'}}}

    \inferrule*[Right={\scriptsize(\inclang)}]
      {\tyjudg{\Gamma}{\mvar{xs}}{\NameSet} \\
      \tyjudg{\Gamma, x: \Name}{y_i[u_i]}{\theta_i} \\
      \tyjudg{\Gamma, x: \Name, v_1: \theta_1, \dots, v_n: \theta_n}{t}{\tau'}}
      {\tyjudg{\Gamma}{\forUsing{x}{\mvar{xs}}{v_1 := y_1[u_1], \dots, v_n := y_n[u_n]}{t}}{\NameMap{\tau'}}}

    \inferrule*[right={(\scriptsize\inclang, \corelang)}]
      {\tyjudg{\Gamma}{t}{\Record{\ell_1: \tau_1, \dots, \ell_n: \tau_n}}}
      {\tyjudg{\Gamma}{t.\ell_i}{\tau_i}}

    \inferrule*[right={(\scriptsize\inclang, \corelang)}]
      {\tyjudg{\Gamma}{t}{\Record{\ell_1: \tau_1, \dots, \ell_n: \tau_n}} \\
      \{i_1 < \dots < i_k\} \subseteq \{1, \dots, n\}}
      {\tyjudg{\Gamma}{\restrict{t}{\{\ell_{i_1}, \dots, \ell_{i_k}\}}}{\Record{\ell_{i_1}: \tau_{i_1}, \dots, \ell_{i_k}: \tau_{i_k}}}}

    \inferrule*[Right={\scriptsize(\corelang)}]
      {\tyjudg{\Gamma}{s}{\sigma} \\ \tyjudg{\Gamma}{t}{\tau \times \sigma \to \tau' \times \sigma}}
      {\tyjudg{\Gamma}{\mkUpd s t}{\Upd{\tau}{\tau'}}}\;

    \inferrule*[Right={\scriptsize(\corelang)}]
      {\tyjudg{\Gamma}{s}{\Upd{\tau}{\tau'}} \\ \tyjudg{\Gamma}{t}{\tau}}
      {\tyjudg{\Gamma}{\applyUpd{s}{t}}{\tau' \times \Upd{\tau}{\tau'}}}\;

    \inferrule*[right={\scriptsize(\corelang)}]
      {\tyjudg{\Gamma}{t}{\SubRecord{\ell_1: \tau_1, \dots, \ell_n: \tau_n}}}
      {\tyjudg{\Gamma}{\has[\ell_i]{t}}{\Bool}}
  \end{mathpar}\vspace{-1em}
  \mycaption{Selected Typing Rules. We write $\Gamma|_{\freeVars{t}}$ for the restriction of $\Gamma$ to the free variables of $t$.}
  \label{fig:type-system}
\end{figure}

\subsection{Deterministic Languages}
\label{sec:calculi:semantics}

In addition to $\genlang$, we have two deterministic languages: $\inclang$ (the \textit{source language} of our incrementalization transformation) and $\corelang$ (the \textit{target language} of our incrementalization transformation). Both $\inclang$ and $\corelang$ have set-theoretic denotational semantics (\cref{fig:inc-semantics,fig:core-semantics}): each type $\tau$ denotes a set $\sem{\tau}$, and each term $\Gamma \vdash t : \tau$ denotes a function $\sem{t} : \sem{\Gamma} \to \sem{\tau}$. The deterministic languages inherit the ground types of $\genlang$.

\paragraph{The $\inclang$ Calculus.} As in $\genlang$, $\inclang$ function types $\sigma \to_\kappa \tau$ are annotated with captured environment types $\kappa$. But in $\inclang$, the \textit{semantics} of a function tracks how it depends on its closed-over data:
$$\sem{{\sigma \to_\kappa \tau}} = \sem{\kappa} \times (\sem{\kappa} \times \sem{\sigma} \To \sem{\tau})$$
The semantics of function abstraction and application explicitly manipulate these closures.

The $\inclang$ language also features a \textit{deterministic} looping construct, $\forIn{x}{\mvar{xs}}{t}$, where $\mvar{xs}$ may either be of type $\List{\sigma}$ (in which case $x : \sigma \vdash t : \tau$ and the loop returns a $\List{\tau}$) or of type $\NameSet$ (in which case $x : \Name \vdash t : \tau$ and the loop returns a $\NameMap{\tau}$).
As in $\genlang$, the loop supports a $\using{v_1 := y_1[u_1], \dots, v_n := y_n[u_n]}$ clause for fine-grained tracking of the indices at which each iteration accesses the collections $y_i$.
Iterating over lists (but not unordered collections) also supports a variant with an accumulator, introduced by the clause $\keyword{with}~z := s$ before the loop body.

\begin{figure}\footnotesize
\begin{gather*}
  \sem{\Bool} = \{ 0, 1 \} \quad
  \sem{\Name} = [0, 1] \quad
  \textstyle\sem{\List{\sigma}} = \sem{\sigma}^* := \bigcup_{n \in \NN} \sem{\sigma}^n \quad
  \sem{\NameMap{\tau}} = \{ f : A \to \sem{\tau} \mid A \subset_{\mathsf{fin}} \sem{\Name}\} \\
    \sem{\tau \clos[\kappa] \tau'} = \sem{\kappa} \times (\sem{\kappa} \times \sem{\tau} \To \sem{\tau'}) \qquad
  \textstyle\sem{\Record{\ell_1: \tau_1, \dots, \ell_n: \tau_n}} = \{r : \{ \ell_1, \dots, \ell_n \} \to \bigcup_{i=1}^n \sem{\tau_i} \mid r(\ell_i) \in \sem{\tau_i} \} \\
\textstyle\sem{\Gamma} = \{ \gamma : \{ x_1, \dots, x_n \} \to \bigcup_{i=1}^n \sem{\tau_i} \mid \gamma(x_i) \in \sem{\tau_i} \} \text{ for contexts } \Gamma = x_1: \tau_1, \dots, x_n: \tau_n
\end{gather*}\vspace{-2mm}
\begin{align*}
  \sem{\lam{x} t}(\gamma) &= \left((\gamma[x_1], \dots, \gamma[x_n]), ((v_1, \dots, v_n), v) \mapsto \sem{t}(\gamma[x_1 \mapsto v_1, \dots, x_n \mapsto v_n, x \mapsto v])\right) \\&\qquad \text{ where } \Gamma|_{\freeVars{\lam{x} t}} = x_1: \tau_1, \dots, x_n: \tau_n \\
  \sem{s \; t}(\gamma) &= {f(\mvar{env}, \sem{t}(\gamma)) \text{ where } \sem{s}(\gamma) = (\mvar{env}, f)} \\
  \sem{\scriptsize\begin{aligned}&\forWithUsing{x}{\mvar{xs}}{z}{s\\ &}{v_1 := y_1[u_1], \\ &\quad \dots, v_n := y_n[u_n]}{t}\end{aligned}}(\gamma) &= (t_1, \dots, t_N) \text{ where } \sem{\mvar{xs}}(\gamma) = (a_1, \dots, a_N), z_0 = \sem{s}(\gamma), \text{ and for } j = 1, \dots, N, \\[-1em]
  &\qquad \qquad \qquad (t_j, z_j) = \sem{v_1 := y_1[u_1]; \dots; v_n := y_n[u_n]; t}(\gamma[x \mapsto a_j, z \mapsto z_{j-1}])
\end{align*}\vspace{-5.5mm}
\caption{Semantics of $\inclang$ (selected rules).}
\label{fig:inc-semantics}
\end{figure}

\begin{figure}
\begin{gather*}
  \sem{\tau \to \tau'} = \sem{\tau} \To \sem{\tau'} \qquad
  \textstyle\sem{\Upd{\tau}{\tau'}} = \sem{\tau}^+ \To \sem{\tau'} \text{ where } S^+ = \bigcup_{n = 1}^\infty S^n \\\textstyle\sem{\SubRecord{\ell_1: \tau_1, \dots, \ell_n: \tau_n}} = \{ r : S \to \bigcup_{i=1}^n \sem{\tau_i} \mid S \subseteq \{\ell_1, \dots, \ell_n\}, r(\ell_i) \in \sem{\tau_i} \}
\end{gather*}
\begin{align*}
  \sem{\mkUpd{s}{t}}(\gamma) &= ((x_1, \dots, x_n) \in \sem{\tau}^n) \mapsto y_n \text{ where } \mvar{cache}_1 = \sem{s}(\gamma), f = \sem{t}(\gamma) \\
  &\qquad \text{and } (y_i, \mvar{cache}_{i+1}) = f(x_i, \mvar{cache}_i) \text{ for } i = 1, \dots, n \\
  \sem{\applyUpd{s}{t}}(\gamma) &= (f(x \in \sem{\tau}^1), (\mvar{xs} \in \sem{\tau}^n) \mapsto f((x, \mvar{xs} ) \in \sem{\tau}^{n+1})) \\
  &\qquad \text{where } s : \Upd{\tau}{\tau'} \text{ and } f = \sem{s}(\gamma) : \sem{\tau}^+ \to \sem{\tau'} \text{ and } x = \sem{t}(\gamma) \in \sem{\tau}
\end{align*}
  \vspace{-1.5em}
\mycaption{Semantics of $\corelang$ (selected rules).}
\label{fig:core-semantics}
\end{figure}

\paragraph{The $\corelang$ Calculus.} The semantics of $\corelang$ has standard function types (without environment annotations $\kappa$).
In addition to the ground types of $\genlang$, it has \textit{subrecord types} $\SubRecord{\ell_1 : \sigma_1, \dots, \ell_n : \sigma_n}$. Values of a subrecord may assign data to only a subset of the record's labels.
When accessing a label that is not present in a subrecord, we return a default value $\default[\tau] \in \sem{\tau}$, chosen by induction on $\tau$ (e.g., $0$ for $\Nat$, the empty map for $\NameMap{\tau}$, $x \mapsto \default[\tau']$ for functions $\tau \to \tau'$ etc.).
The core language also has an \textit{updater} type $\Upd{\tau}{\tau'}$. Updaters are constructed using $\mkUpd{s}{t}$ and applied using $\applyUpd{s}{t}$.
We defer the discussion of these constructs to \cref{sec:incrementalization}.

\section{Density Transformation}
\label{sec:density-transformation}

The first step in our pipeline is to translate a probabilistic $\genlang$ program into a deterministic $\inclang$ program that computes its trace distribution's \textit{density function}. This isolates all probabilistic reasoning from incrementalization, which we discuss in \cref{sec:incrementalization}.

\subsection{Densities of Traced Programs with Name Generation}

Given a program of type $\ProbTy{\sigma}{\tau}$, we wish to compute its \textit{density function} of type $\sigma \to \mathbb{R}$. This function maps an execution trace, specifying values for all random choices, to a nonnegative real number, which quantifies the likelihood of the given trace. A trace is likely if the individual random choices are; the overall density is a product of the densities of each recorded primitive choice. To make these ideas precise, we require the following definition from measure theory.

\begin{definition}[Radon-Nikodym derivative]
    Let $\mu$ be a probability distribution on a space $X$, and $\nu$ a measure on $X$, called the \textit{reference measure}. A function $\frac{\dif{\mu}}{\dif{\nu}} : X \to \RR$ is called a \textit{density} or \textit{Radon-Nikodym derivative} of $\mu$ with respect to $\nu$ if for all $f : X \to \RR$, $\mathbb{E}_{\mu}[f] = \int f(x) \frac{\dif{\mu}}{\dif{\nu}}(x) \nu(\dif{x})$.
\end{definition}

Here, $\mathbb{E}_\mu[f]$ denotes the expected value of $f$ with inputs sampled from $\mu$. Radon-Nikodym derivatives generalize the standard notions of \textit{probability mass function} and \textit{probability density function}: when $\nu$ is the counting measure $\#_X$ on a set $X$, $\frac{\dif{\mu}}{\dif{\nu}}$ is $\mu$'s mass function,
and when $\nu$ is the Lebesgue measure $\Lambda_\RR$, it is $\mu$'s density function.
Probabilistic programs generally make both discrete and continuous choices, so their traces are typically heterogeneous products. The standard approach to thinking about densities of such programs is to use inductively defined, type-indexed reference measures $\nu_\sigma$, designed so that the overall density will be a product of the mass functions of all discrete choices and the density functions of all continuous choices:
\begin{align*}
\nu_\Real = \Lambda_\Real \quad \quad \nu_\Bool = \#_\Bool \quad\quad \nu_\Nat = \#_\Nat \quad \quad \nu_{\sigma \times \tau} = \nu_\sigma \otimes \nu_\tau \quad\quad \dots
\end{align*}

Our setting, however, is complicated by the presence of \textit{names}. Semantically, $\sem{\Name}$ is the real interval $[0, 1]$. However, whether a randomly chosen name is a \textit{continuous} or a \textit{discrete} choice (i.e., whether we should use a \textit{density} or \textit{mass} function for its likelihood) varies from program to program. For example, consider the program in \cref{fig:density-example}, which generates several fresh names and later chooses one of them uniformly at random. The fresh names (at address $\lbl{opts}$) are continuous samples, but the selected name (at address $\lbl{chosen}$) is a discrete choice.
Thus, the trace density should have as factors both the \textit{density} of the $\const{fresh}$ primitive,\footnote{The distribution over sets of $n$ fresh names is the uniform distribution on $\{(x_1, \dots, x_n) \in [0,1]^n \mid x_1 < \dots < x_n\}$, a set with volume $\frac{1}{n!}$. Thus the density at any point within this set is $1/\frac{1}{n!} = n!$, and the density at other points is $0$.} and the \textit{mass} of the $\const{unif}$ primitive.

\begin{figure}
\hfill
\begin{minipage}[t]{0.27\textwidth}
\codeshade{
\centering
\setlength{\jot}{0pt}
$\begin{aligned}[t]
  &\keyword{def}\ \const{program} := \\
  &\quad \var{n} \sim \const{poisson}(3); \\
  &\quad \var{opts} \sim \const{fresh}(\var{n}+1); \\
  & \\
  &\quad \var{xs} \sim \keyword{for}\ \_ \ \keyword{inSet}\ \var{opts}\ \{ \\
  &\quad\quad \sample{\const{normal}(0,1)} \\
  &\quad \}; \\
  &\quad \var{chosen} \sim \const{unif}(\var{opts}); \\
  &\quad {\return{\var{xs}[\var{chosen}]}}
\end{aligned}$
}
\end{minipage}
\hfill
\begin{minipage}[t]{0.53\textwidth}
\codeshade{
\centering
\setlength{\jot}{0pt}
$\begin{aligned}[t]
  &\keyword{def}\ \const{density} := \lam{(\var{t},\var{U})} \\
  &\quad \var{w_1} := \const{poissonD}(3)(\var{t}.\lbl{n},\var{U}); \\
  &\quad \var{w_2} := \const{freshD}(\var{t}.\lbl{n}+1)(\var{t}.\lbl{opts},\var{U}); \\
  &\quad \var{U} := \var{U} \cup \var{t}.\lbl{opts}; \\
  &\quad \var{w_3} := \const{product}(\keyword{for}\ \var{o}\ \keyword{in}\ \var{t}.\lbl{opts}\ \keyword{using}\ (\var{x} := \var{t}.\lbl{xs}[\var{o}])\ \{ \\
  &\quad\quad \const{normalD}(0,1)(\var{x},\var{U}) \\
  &\quad \}); \\
  &\quad \var{w_4} := \const{unifD}(\var{t}.\lbl{opts})(\var{t}.\lbl{chosen},\var{U}); \\
  &\quad {(\var{t}.\lbl{xs}[\var{t}.\lbl{chosen}],\,
  \var{w_1}\cdot\var{w_2}\cdot\var{w_3}\cdot\var{w_4})}
\end{aligned}$
}
\end{minipage}
\hfill{}{}

\[
\frac{\dif{\mu}}{\dif{\nu}_\emptyset^\sigma}(t)=
\Poisson(t.\lbl{n};3)\cdot
\underbrace{
  (t.\lbl{n}+1)!\cdot
  \mathbf{1}\!\left[|t.\lbl{opts}|=t.\lbl{n}+1\right]
}_{\const{freshD}(t.\lbl{n}+1)(t.\lbl{opts},\emptyset)}
\cdot
\prod_{o \in t.\lbl{opts}} \mathcal{N}(t.\lbl{xs}[o];0,1)\cdot
\underbrace{\frac{\mathbf{1}[t.\lbl{chosen}\in t.\lbl{opts}]}{t.\lbl{n}+1}}_{\const{unifD}(t.\lbl{opts})(t.\lbl{chosen}, t.\lbl{opts})}
\]
\vspace{-5mm}
\caption{Computing the density of a probabilistic program.
In this figure, $\const{distD}$ denotes the density function of a distribution $\const{dist}$, and $\sigma$ (used in $\nu_\emptyset^\sigma$) is the trace type of $\const{program}$.}
\label{fig:density-example}
\end{figure}

Our inductively defined reference measures, in \cref{fig:reference-measures}, formalize this intuition. The reference measures are indexed by both a type $\sigma$ and a \textit{support} $U$, which is a countable subset of $\sem{\Name}$. The reference measure $\nu^\Name_U$ is a \textit{sum} of $\Lambda_{[0,1]}$ and $\#_U$, indicating that randomly chosen names may either be continuous choices (i.e., freshly generated), or discrete choices from the support set $U$ of available names.
The reference measure for product types makes the names that appear in the first component ($\names{a}$) available as new atoms for discrete choices in the second component.\footnote{The order of products is arbitrary, in that for all ground types $\sigma$, $\tau$, we have $\nu_U^{\sigma \times \tau} = \swap_*\nu_U^{\tau \times \sigma}$ (proven in \cref{sec:stock-measures}).} These definitions give us a sufficiently general notion of \textit{density} to handle all programs in $\genlang$:
\begin{lemma}
    Let $\vdash t : \ProbTy{\sigma}{\tau}$ in $\genlang$. Then $\sem{t}_1$ has a density $\frac{\dif{\sem{t}_1}}{\dif{\nu}_\emptyset^\sigma} : \sem{\sigma} \to \RR$ with respect to $\nu_\emptyset^\sigma$.
\end{lemma}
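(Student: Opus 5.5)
We prove the statement by a logical-relations argument over the structure of $\genlang$ terms, generalized from closed programs to open ones and from the support $\emptyset$ to an arbitrary countable support $U \subseteq \sem{\Name}$. The lemma is then the special case with empty context and $U=\emptyset$.

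\textbf{Strengthened claim.} For every $\Gamma \vdash t : \ProbTy[\kappa]{\sigma}{\tau}$, every $\gamma \in \sem{\Gamma}$ and every countable $U \supseteq \names{\gamma}$, the measure $\sem{t}(\gamma)_1$ has a density with respect to $\nu_U^\sigma$. Moreover, this density is jointly measurable in $(\gamma,\rho)$, so that it can be integrated against kernels.

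For the higher-order fragment we define a unary predicate on semantic values, by induction on types:
\begin{itemize}
\item at ground types it is trivial;
\item at $\Dist[\kappa]{\sigma}$ it requires a density with respect to $\nu^\sigma_U$ for every countable $U$ containing the names in scope;
\item at $\ProbTy[\kappa]{\sigma}{\tau}$ it requires exactly the strengthened claim above;
\item at $\tau \to_\kappa \tau'$ it requires that the function map related arguments to related results.
\end{itemize}
The fundamental lemma, proved by induction on typing derivations, shows that every well-typed term maps related environments to related values.

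\textbf{Base cases.}
\begin{itemize}
\item For $\return{t}$, the trace measure is $\delta_{\Unit}$, and $\nu_U^{\Unit}$ is the point mass, so the density is $1$.
\item For $\sample{t}$, we assume as an axiom of the primitive library that each primitive distribution has a density with respect to $\nu^\sigma_U$. We check the nonstandard primitives by hand. For $\const{fresh}(d)$, the density is $d(n)\cdot n!\cdot \mathbf{1}[\text{all names} \notin U]$; the freshness indicator is harmless because $U$ is countable and hence Lebesgue-null. For $\const{unif}(S)$ with $S\subseteq U$, we use the $\#_U$ component of $\nu^\Name_U$.
\item For $t \at \ell$, the density is transported along the relabeling isomorphism, which maps $\nu^\sigma_U$ to $\nu^{\{\ell:\sigma\}}_U$.
\end{itemize}

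\textbf{Sequencing $x \gets t; s$ (main obstacle).} The trace measure is $\rho_1 \leftsquigarrow \mu;\ \rho_2 \leftsquigarrow k(f(\rho_1));\ \delta_{\rho_1 \concat \rho_2}$. By induction, $\mu$ has density $p$ with respect to $\nu_U^{\sigma_1}$. For each $\rho_1$, we apply the induction hypothesis for $s$ at the extended support $U' = U \cup \names{\rho_1}$. This is the key reason to quantify over all $U$: the return value $f(\rho_1)$ may mention only names occurring in $\rho_1$ or in $\gamma$. The result is a density $q_{\rho_1}$ of $k(f(\rho_1))$ with respect to $\nu^{\sigma_2}_{U'}$.

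We then need the factorization
\[ \nu^{\sigma_1\times\sigma_2}_U \;=\; \rho_1 \leftsquigarrow \nu^{\sigma_1}_U;\ \rho_2 \leftsquigarrow \nu^{\sigma_2}_{U\cup\names{\rho_1}};\ \delta_{(\rho_1,\rho_2)}, \]
which is the product clause of \cref{fig:reference-measures}. We also need that record concatenation corresponds to products up to the permutation isomorphisms; this follows from the footnoted symmetry $\nu_U^{\sigma\times\tau}=\swap_*\nu_U^{\tau\times\sigma}$ together with an associativity analogue that we prove in the same way. With these, Fubini/Tonelli gives the density $p(\rho_1)\,q_{\rho_1}(\rho_2)$. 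The delicate point is measurability of $(\rho_1,\rho_2)\mapsto q_{\rho_1}(\rho_2)$, which is why joint measurability is part of the induction hypothesis. Each semantic step preserves it because densities are built from measurable primitives by products and composition with measurable maps; in the quasi-Borel setting, the ground types are standard Borel, so this is legitimate.

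\textbf{Loops.}
\begin{itemize}
\item $\keyword{for}$ over $\keyword{Range}$ unrolls into iterated sequencing. We first condition on the length $n$ drawn from $w$, which has a $\#_\Nat$-density. We then apply the sequencing argument $n$ times, threading the accumulator. Finally, we sum over $n$, using that $\nu^{\List\sigma}_U$ is the disjoint sum of the iterated products.
\item $\keyword{for}$ over $\keyword{Set}$ requires the factorial correction. The sampled name set is generated in some order, but the trace stores a sorted map. The reference measure on $\NameMap{\sigma}$ is the pushforward of the product measure along sorting, restricted to distinct keys. Each unordered trace therefore has $m!$ preimages, and the density picks up a corresponding factor. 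We establish this by a symmetry argument: the product density is invariant under permuting iterations, because each iteration sees the same support $U\cup\text{names}$.
\end{itemize}

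The \texttt{using} clauses are desugared into the loop body and need no separate treatment.
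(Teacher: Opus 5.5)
Your induction has the right overall shape: you generalize the support $U$, factor through the product clause of the reference measure, and unroll $\keyword{Range}$ loops. However, the logical predicate you define is not strong enough to be an inductive invariant, and it breaks exactly at the step you call the main obstacle.

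\textbf{The gap.} Your $\ProbTy[\kappa]{\sigma}{\tau}$ clause constrains only the trace measure $\mu$, not the return map $f$, and your ground-type clause is trivial. So nothing records which names a value may use, in particular a closure or a returned distribution. Indeed ``$\names{\gamma}$'' is not even defined for higher-order entries of $\gamma$. Yet you rely on precisely this information twice: for $S\subseteq U$ in the $\const{unif}(S)$ case, and in sequencing, where $\gamma[x\mapsto f(\rho_1)]$ must again satisfy the predicate with all its names inside $U\cup\names{\rho_1}$. Your assertion that $f(\rho_1)$ ``may mention only names occurring in $\rho_1$ or in $\gamma$'' does not follow from your definitions, and it is false for arbitrary quasi-Borel maps.

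Concretely, interpret $g : \Real\to\ProbTy{\Unit}{(\Dist{\Name})}$ as $r\mapsto(\delta_{\unit},\ \_\mapsto\delta_a)$ for a fixed $a\in[0,1]$. It satisfies your arrow and $\mathsf{P}$ clauses, since its trace measure has density $1$. But $x\gets g\;0;\ y\sim x;\ \return{y}$ has trace measure $\delta_{\{y\mapsto a\}}$. That measure has no density with respect to $\nu_U^{\Record{y : \Name}}$ (i.e.\ $\#_U+\Lambda_{[0,1]}$ on the single field) unless $a\in U$, and nothing in your setup forces $a\in U$. The abstraction case quantifies over all predicate-satisfying arguments, so closedness of the top-level program does not rescue the argument.

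\textbf{The repair.} Use a Kripke-style predicate indexed by the countable name set: ``all names of $v$ lie in $U$'' at ground types, quantification over $V\supseteq U$ at arrow types, and, in the $\mathsf{P}$ clause, the additional requirement that $f(\rho)$ satisfy the predicate at $U\cup\names{\rho}$.

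\textbf{How the paper does it.} The paper gets name-tracking for free from its semantics and proves existence differently. It interprets $\genlang$ in $\nu\mathsf{Qbs}$ (functors from countable name sets to quasi-Borel spaces), where $(X\Rightarrow Y)_U$ and $(\mathsf{Prob}\,X)_U$ record the names used by functions and measures. Hence every denotation is name-respecting, and a closed program's trace measure has empty support. Existence of the density is then a by-product of the fundamental lemma for the binary relation between $\sem{t}$ and $\sem{\GenToInc{t}}$. Its $\mathsf{P}$ clause relates the return maps and states that the translated weight $w(\gamma,\rho,U)$ equals $\frac{\dif{\mu}}{\dif{\nu_U^{\sigma}}}(\rho)$ for every $U\supseteq\supp{\mu}$, so the density program itself is the witness.

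Once repaired, your unary route is a legitimate alternative that decouples existence from the transformation. It proves existence only, not that $\GenToInc{t}$ computes the density. It also obliges you to carry measurability explicitly, jointly in the support argument as well, since $U\cup\names{\rho_1}$ varies with $\rho_1$.

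\textbf{A smaller point.} The paper's $\nu_U^{\NameMap{\sigma}}$ lives on sorted key lists and gives the sorted simplex its Lebesgue volume (which is why $\const{fresh}(n)$ has density $n!$); it is not a pushforward along sorting. So the factorial is carried by the name-set distribution's density rather than by the set loop. This changes constants, not existence.
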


\begin{figure}
\begin{tabular}{rllrll}
$\nu_{U}^\sigma$ &=& $\#_{\sem{\sigma}} \text{ for } \sigma \in \{\Unit, \Bool, \Nat\}$ &
$\nu_U^\Real$ &=& $\Lambda_\Real$ \\
$\nu_U^\Name$ &=& $\#_U + \Lambda_{[0,1]}$ &
$\nu_U^{\sigma \times \tau}$ &=& $a \leftsquigarrow \nu_U^\sigma; b \leftsquigarrow \nu_{U \cup \names{a}}^\tau; \ret{(a,b)}$ \\
$\nu_U^{\List{\sigma}}$ &=& $\sum_{n=0}^\infty \nu_U^{\sigma^n}$ &
$\nu_U^{\NameMap{\sigma}}$ &=& $\sum_{n=0}^\infty \nu_U^{(\Name \times \sigma)^n}$
\end{tabular}
\mycaption{Indexed families of reference measures, defined inductively on the ground types of $\genlang$.}
\label{fig:reference-measures}
\end{figure}

\subsection{Density Transformation}
We now present a program transformation, $\GenToInc{-}$, which mechanically translates probabilistic programs into deterministic ones computing their density functions. More precisely, we translate a program of type $\ProbTy{\sigma}{\tau}$ into a deterministic program with two inputs and two outputs, as illustrated in \cref{fig:density-example}. The inputs are a \textit{trace} $\var{t}$ and a set of previously generated names $\var{U}$. The outputs are a return value---the value returned by the program when its random choices are fixed to those recorded in $\var{t}$---and a nonnegative real, the density of the program with respect to $\nu_{\var{U}}^\sigma$ at the trace $\var{t}$.

Intuitively, the density program works by executing the same logic as the original probabilistic program. When a primitive random choice is encountered, we look up the corresponding value in the trace. We use the primitive distribution's built-in density (passing in the choice's value and the current support set $\var{U}$ of names) to compute its likelihood. When we find new names in the trace, we add them to $\var{U}$.\footnote{Tracking the support $\var{U}$ allows the density function to return $0$ when multiple purportedly fresh names in a trace are identical. In particular, the built-in density $\const{freshD}(n)(\var{names}, \var{U})$ is either $n!$ (if $|\var{names}|=n$ and $\var{names} \cap \var{U} = \emptyset$), or $0$ otherwise. This logic is necessary for the density to be technically correct, but properly implemented inference algorithms should never consider traces that trigger freshness violations. As such, much like array bounds checking, support tracking is a feature that can be turned off for performance after the user is confident that inference is correctly implemented.} Then we compute the return value and the product of all primitive densities.

This algorithm is formalized as a program transformation in \cref{fig:gen-to-inc-types,fig:gen-to-inc-terms}. Given an open term $\Gamma \vdash t : \tau$ in $\genlang$, we translate it to a term $\GenToInc{\Gamma} \vdash \GenToInc{t} : \GenToInc{\tau}$ in the target language, $\inclang$. Because the density function largely follows the same logic as the original probabilistic program, $\GenToInc{-}$ is defined to ``pass through'' many constructs, sending tuples to tuples, projections to projections, abstractions to abstractions, and applications to applications (preserving the environment annotations $\kappa$ on higher types). The interesting cases are how it handles probabilistic program terms. Deterministic programs $\return{t}$ have constant density $1$ at the empty trace. The density of a program that makes a single primitive random sample, $\sample{t}$, is just the density of the primitive distribution $t$. Compound programs $x \gets t_1; t_2$ are handled by separately computing the density of $t_1$ (on the subtrace containing the choices made by $t_1$) and the density of $t_2$ (on the subtrace containing the choices made by $t_2$), then taking their product. (Any new names appearing in the trace of $t_1$ are also unioned into the running support set $\var{U}$, which is passed to the density for $t_2$. This reflects that fresh names generated by $t_1$ may occur non-fresh, as discrete choices, in $t_2$.)

Note that loops in $\genlang$ are translated into more elaborate loops in $\inclang$. The target-language loops iterate over a collection of subtraces, computing the density of each subtrace under the loop body while maintaining a growing set of used names $V$. These densities are then multiplied together---along with the density of the distribution used to decide how many iterations to run, if this was a random choice---to yield a density for the overall loop.

\begin{figure}
  \begin{tabular}{rllrll}
    $\GenToInc{\sigma}$ &=& $\sigma \text{ for ground types } \sigma$ &
    $\GenToInc{\sigma \times \tau}$ &=& $\GenToInc{\sigma} \times \GenToInc{\tau}$ \\
    $\GenToInc{\ProbTy[\kappa]{\sigma}{\tau}}$ &=& $\GenToInc{\sigma} \times \NameSet \clos[\GenToInc{\kappa}] \GenToInc{\tau} \times \Real$ &
    $\GenToInc{\sigma \to_\kappa \tau}$ &=& $\GenToInc{\sigma} \clos[\GenToInc{\kappa}] \GenToInc{\tau}$
  \end{tabular}
  \mycaption{Density transformation from $\genlang$ to $\inclang$ on types (selected rules).}
  \label{fig:gen-to-inc-types}
\end{figure}

\begin{figure}
  \begin{mathpar}
    \GenToInc{c} = c_{\mathsf{inc}} \and
    \GenToInc{\return{t}} = \lam{(\tr, \var{U})}{(\GenToInc{t}, 1)} \and
    \GenToInc{\sample{t}} = \lam{(\tr, \var{U})} (\tr, (\GenToInc{t}(\tr, \var{U})).2) \and
    \GenToInc{t \at \ell} = \lam{(\tr, \var{U})} \GenToInc{t}(\tr.\ell, \var{U})
  \end{mathpar}
  \resizebox{\textwidth}{!}{$
  \begin{aligned}
    \setlength{\jot}{0pt}
    \GenToInc{x \gets t_1; t_2} &= \lam{(\tr, \var{U})} (x, w_1) := \GenToInc{t_1}(\restrict{\tr}{L_1}, \var{U});
     (y, w_2) := \GenToInc{t_2} (\restrict{\tr}{L_2}, \var{U} \cup \names{\restrict{\tr}{L_1}});  (y, w_1 \cdot w_2) \\
      & \quad \text{ where } t_1 : \ProbTy{\Record{\ell: \sigma_\ell \mid \ell \in L_1}}{\tau_1} \text{ and } t_2 : \ProbTy{\Record{\ell: \sigma_\ell' \mid \ell \in L_2}}{\tau_2}
    \end{aligned}
    $} \\[0.5em]
    \resizebox{\textwidth}{!}{\(\displaystyle
    \GenToInc{\begin{aligned}
      &\forRangeWithUsing{x}{d: \Dist{\Nat}}{z}{s \\
      &}{v_1 := y_1[u_1], \dots, v_n := y_n[u_n]}{t}
    \end{aligned}} = \left(\begin{aligned}
      \lam{(\tr, \var{U})} {} & \var{N} := \const{length} \; \tr; \; (\_, \var{w}_{\var{N}}) := \GenToInc{d}(\var{N}, \var{U}); \\
        &\var{ys} := \forWithUsing{x}{\const{range} \; \var{N}}{(z, \var{V})}{(\GenToInc{s}, \var{U}) \\
        &\qquad }{v_1 := y_1[\GenToInc{u_1}], \dots, v_n := y_n[\GenToInc{u_n}], \\
        &\qquad \quad \tr' := \tr[x]}{  ((r, z), w) :=  \GenToInc{t}(\tr', \var{V}); \\ &\qquad \quad ((r, w), (z, \var{V} \cup \names{\tr'})) }; \\
        &(\var{rs}, \var{ws}) := \const{unzip} \; \var{ys}; \; (\var{rs}, \var{w}_{\var{N}} \cdot \const{product} \; \var{ws})
    \end{aligned} \right)\)}
  \vspace{-0.5em}
  \mycaption{Density transformation from $\genlang$ to $\inclang$ on terms (selected rules).}
  \label{fig:gen-to-inc-terms}
\end{figure}

\subsection{Correctness}
\label{sec:density-transformation:correctness}

The density transformation is correct in the following sense:

\begin{theorem}
    Let $t$ be a closed $\genlang$ term of type $\ProbTy{\sigma}{\tau}$, with trace distribution $\mu = \sem{t}_1$. Then for all traces $\rho \in \sem{\sigma}$, $\frac{\dif{\mu}}{\dif{\nu}_\emptyset^\sigma}(\rho)=\sem{(\GenToInc{t} (x, \emptyset)).2}(x \mapsto \rho)$.
\end{theorem}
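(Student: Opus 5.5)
We prove a stronger, open statement by a logical relations argument over the type structure of $\genlang$, and the theorem is its closed instance at $U=\emptyset$. The relation $\logrel[\tau] \subseteq \sem{\tau}_{\genlang} \times \sem{\GenToInc{\tau}}_{\inclang}$ is defined by induction on $\tau$:
\begin{itemize}[leftmargin=*]
\item at ground types it is equality;
\item at products it is componentwise;
\item at $\sigma \to_\kappa \tau$, a function $f$ is related to a closure $(e, g)$ when related arguments are sent to related results, i.e.\ $(a,a') \in \logrel[\sigma]$ implies $(f(a), g(e,a')) \in \logrel[\tau]$;
\item at $\ProbTy[\kappa]{\sigma}{\tau}$, which carries the real content, a pair $(\mu, f)$ is related to a closure $(e, h)$ when the following holds for every countable $U \subseteq \sem{\Name}$ and every trace $\rho$. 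The first component $\pi_1(h(e,(\rho,U)))$ is related to $f(\rho)$, and $\rho \mapsto \pi_2(h(e,(\rho,U)))$ is a Radon--Nikodym derivative $\frac{\dif{\mu}}{\dif{\nu_U^\sigma}}$.
\end{itemize}
Densities are only defined almost everywhere, so we take them to be the specific version produced by the program, and the theorem's pointwise equality holds by this choice. The fundamental lemma states that for $\Gamma \vdash t : \tau$ and related environments $\gamma, \gamma'$, we have $(\sem{t}(\gamma), \sem{\GenToInc{t}}(\gamma')) \in \logrel[\tau]$. It is proved by induction on typing derivations. Instantiating it with $t$ closed and $U=\emptyset$ gives the theorem.

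The deterministic cases (variables, tuples, projections, $\lambda$, application) are routine, because $\GenToInc{-}$ passes them through and the closure semantics of $\inclang$ matches application by construction. Constants require an assumption for each primitive. For a primitive distribution $d$, the built-in $\const{dD}(\cdot, U)$ must be a density of $d$ with respect to $\nu_U^\sigma$. This includes $\const{freshD}$, which must equal $n!\cdot\mathbf{1}[|names|=n, names\cap U=\emptyset]$; the verification is the volume computation in the footnote, together with the fact that the $\#_U$ part of $\nu_U^{\Name}$ never charges fresh draws. The remaining cases go as follows.
\begin{itemize}[leftmargin=*]
\item $\return{t}$: the trace space is $\Unit$ with counting measure, so the density $1$ is correct.
\item $\sample{t}$: this is immediate from the primitive assumption.
\item $t\at\ell$: relabelling $\{\ell:\sigma\}\cong\sigma$ is measure-preserving for the reference measures.
\end{itemize}

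The key case is sequencing $x\gets t_1;t_2$. The record type $\sigma_1 \uplus \sigma_2$ carries, up to the isomorphism with products (using the symmetry fact $\nu_U^{\sigma\times\tau}=\swap_*\nu_U^{\tau\times\sigma}$ from \cref{sec:stock-measures} to reorder fields), the measure $\rho_1 \leftsquigarrow \nu_U^{\sigma_1}; \rho_2 \leftsquigarrow \nu_{U\cup\names{\rho_1}}^{\sigma_2}$. We compute $\mathbb{E}_\mu[g]$ by disintegration: the expectation is $\int \nu_U^{\sigma_1}(\dif\rho_1)\, w_1(\rho_1)\int \nu^{\sigma_2}_{U\cup\names{\rho_1}}(\dif\rho_2)\, w_2(\rho_2)\, g(\rho_1\concat\rho_2)$. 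This follows from the induction hypothesis for $t_1$ at $U$, then the hypothesis for $t_2$ at the enlarged support $U\cup\names{\rho_1}$ with $x$ bound to the related return value, and Fubini/Tonelli for the nonnegative integrands. The equation shows that $w_1 \cdot w_2$ is a density. Quantifying the relation over all supports $U$ is what makes this induction go through.

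Loops are the main obstacle. For $\keyword{Range}$ loops, we induct on the iteration count $N$, reusing the sequencing argument at each step with the accumulator $(z,V)$. The factor $w_N$ comes from the density of $d$, and $\nu^{\List\sigma}_U=\sum_n\nu_U^{\sigma^n}$ makes the length a counting component. Set loops are harder, because the trace is a sorted $\NameMap{\sigma}$ rather than an ordered list. The measure $\nu_U^{\NameMap{\sigma}}$ is defined via ordered tuples $(\Name\times\sigma)^n$, while the program generates names in the random order given by $\const{fresh}$. We first show that a density of $\const{fresh}$ (with its factor $n!$) combined with the per-element product is a density for the pushforward along sorting. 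The route is to symmetrise: the $n!$ orderings contribute equal mass, and this cancels against the $n!$ counting the orderings that map to one sorted map. The per-element densities are evaluated with the growing support $V$. Here we must check that the product is invariant under permuting elements, up to a null set, since names in different iterations are distinct almost surely. We expect this combinatorial and measure-theoretic bookkeeping for unordered collections to be the most delicate step, so we would isolate it as a separate lemma about $\nu_U^{\NameMap{\sigma}}$ before the induction.
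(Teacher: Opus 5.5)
\paragraph{Gap: the probabilistic clause quantifies over every $U$.} Your overall architecture matches the paper's. It uses a logical relation whose clause at $\ProbTy[\kappa]{\sigma}{\tau}$ demands that the translated weight be a density with respect to $\nu_U^\sigma$, and a fundamental lemma proved by induction on typing. Sequencing is handled by rewriting draws as reference-measure draws plus scores at the enlarged support $U \cup \names{\rho_1}$, and loops by iterated products. The gap is that you require the density property for \emph{every} countable $U$, and this is false for open terms. Take $s : \NameSet \vdash \sample{\const{unif}(s)} : \ProbTy[\NameSet]{\Name}{\Name}$ with $s \mapsto \{a\}$. Then $\mu = \delta_a$. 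For any $U$ with $a \notin U$, e.g.\ $U = \emptyset$ where $\nu^{\Name}_\emptyset = \Lambda_{[0,1]}$, $\mu$ has no Radon--Nikodym derivative at all, so your primitive assumption on $\const{unifD}$ cannot be satisfied there. Such open subterms are unavoidable, because in $x \gets t_1; t_2$ the continuation is open in $x$; the running example's $\var{chosen} \sim \const{unif}(\var{opts})$ has exactly this shape. So your fundamental lemma already fails at the $\sample{t}$ case, and your remark that quantifying over all $U$ is what makes the induction work has it backwards.

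\paragraph{What the paper does instead.} The paper's relation asks for the density only at $U \supseteq \supp{\mu}$, the set of names $\mu$ uses with positive probability. For a closed program this set is $\emptyset$, so instantiating at $U = \emptyset$ still yields the theorem. With this weaker hypothesis, however, the sequencing case needs something your proof never tracks. To apply the hypothesis for $t_2$ at $U \cup \names{\rho_1}$, you must know $\supp{\mu_2^{v}} \subseteq U \cup \names{\rho_1}$ for $v = f_1(\rho_1)$. In other words, return values and captured closures must never mention names beyond those of the environment and the trace so far. This name-confinement is what the paper's interpretation of $\genlang$ in the name-indexed functor category $\nu\mathsf{Qbs}$ is for. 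Morphisms there, including exponentials and $\mathsf{Prob}$, respect the indexing, and that same structure is what gives $\supp{\mu}$ its meaning. To repair your argument you need an equivalent invariant, for instance a ``uses only names in \dots'' clause added to the relation at every type, function and probabilistic types included.
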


That is, running the density program on a trace $\rho$ and the empty name set $\emptyset$, then extracting the second component of the output, does yield the correct density of the program's trace distribution, evaluated at $\rho$. This result is established by a logical relations argument, developed in \cref{sec:full-correctness-density}.

\section{Incrementalization}
\label{sec:incrementalization}

We now describe a program transformation $\IncToCore{-}$ for incrementalizing a deterministic program. The incremental version of a function $f : \sigma \to \tau$ (on ground types $\sigma$ and $\tau$) has type
\[ f' : \sigma \to \tau \times \Upd{\Changed{\sigma}}{\Changed{\tau}} \]
Like $f$, it accepts an argument of type $\sigma$, and computes an initial output of type $\tau$.
Additionally, it returns an \emph{updater}, which encapsulates a cache of intermediate results and can thus efficiently compute output changes given changes to the input.
Concretely, let $(y_1, u_1) := f'(x_1)$ and $\mvar{dx}_1$ an input change from $x_1$ to $x_2$.
Then $\applyUpd{u_1}{\mvar{dx}_1} =: (\mvar{dy}_1, u_2)$ yields an output change $\mvar{dy}_1$ from $y_1 := f(x_1)$ to $y_2 := f(x_2)$, and a new updater $u_2$ for changes on top of $x_2$.
This allows us the flexibility to freely stack input changes and compute the corresponding output changes.

The key idea of the transformation is that each primitive operation (e.g., arithmetic operations and list/map operations) has an incremental version as above.
Our program transformation composes these incremental primitives and wires all the change information and updaters together.
Even if a given primitive does not benefit from incrementalization itself, it is useful to propagate information about its output changes to downstream computations.
This is most valuable for loops, enabling us to rerun as few loop iterations as possible, which can yield asymptotic speedups.

\subsection{Change Types and Updaters}

We associate every type $\tau$ with a change type $\Changed{\tau}$, which represents changes to values of type $\tau$ (\cref{fig:inc-to-core-types}).
For basic types like $\Real$ or $\Bool$, a change is simply a new value together with a flag indicating whether the value has changed.
For product types, we make a distinction between pairs and records: a change to a pair is a pair of changes to the components, whereas a change to a record is a subrecord of changes to the fields (omitting unchanged fields).
Changes to collections (lists and maps) store changes to the elements at specific indices or keys and insertions/deletions of elements.
For closures, we do not allow the function itself (``its code'') to change, only its captured environment, to keep updates efficient.
Finally, changes to contexts are simply changes to the variables in the context.
The change types are designed to strike a balance between keeping change representations small and including enough information to avoid caching.
For instance, changes to atomic ground types include the new value, so operations do not need to cache their arguments.

\paragraph{Updaters.}
Updaters encapsulate cached intermediate values and the logic for using them to efficiently recompute an output.
We define $\Upd{\Changed{\tau}}{\Changed{\tau'}}$ as an \emph{opaque} type with a constructor $\mkUpd{\var{cache}}{\var{update}}$ accepting an initial $\var{cache} : \sigma$ and a function $\var{update} : \Changed{\tau} \times \sigma \to \Changed{\tau'} \times \sigma$, which takes an input change and a cache and returns an output change and a new cache.
Note that updater types hide the cache type $\sigma$ after construction, like existential types.
This is an innovation over previous work, and enables type-directed reasoning about updaters.

To use an updater $\var{u}$, one applies it to a change $\var{dx}$, written $\applyUpd{\var{u}}{\var{dx}} =: (\var{dy}, \var{u}')$, which yields an output change $\var{dy}$ and a new updater $\var{u}'$ that can be used to compute output changes on top of $\var{dx}$.
Updater construction and application satisfy the following program equivalence:
\[ \applyUpd{\mkUpd{\var{cache}}{\var{upd}}}{\var{dx}} = \{ (\var{dy}, \var{cache}') := \var{upd} \; (\var{dx}, \var{cache}); (\var{dy}, \mkUpd{\var{cache}'}{\var{upd}}) \} \]

\paragraph{Semantics of Updaters.}
Semantically, updaters are treated as coinductive data types:
\[ \text{\emph{Intuition: }} \quad \keyword{codata} \; \Upd{\Changed{\tau}}{\Changed{\tau'}} \approx \Changed{\tau} \to \Changed{\tau'} \times \Upd{\Changed{\tau}}{\Changed{\tau'}} \]
Thus its denotation is an infinite-depth tree whose edges are labeled by values of $\Changed{\tau}$ and nodes by values of $\Changed{\tau'}$ (see \cref{fig:core-semantics}).
(Category theoretically, this is the final coalgebra of the functor $F(X) = \sem{\Changed{\tau}} \To \sem{\Changed{\tau'}} \times X$.)
Formally, we represent such trees as functions $\sem{\Changed{\tau}}^+ \to \sem{\Changed{\tau'}}$ from non-empty lists of $\Changed{\tau}$ (representing the path in the tree) to $\Changed{\tau'}$ (the value at the node reached at the end of that path).
Note that the idea of recursive updaters was mentioned by \citet{Morihata20}, but to our knowledge has not been formalized before.

The constructor $\mkUpd{\var{cache}}{\var{update}}$ builds a tree where the value of the node $y_n$ reached along the path $(x_1, \dots, x_n)$ results from iteratively applying $\var{update}$ to the inputs $x_1, \dots, x_n$, together with the most recent cache, starting from the given $\var{cache}$ (see \cref{fig:core-semantics}).
The application $\applyUpd{\var{u}}{\var{dx}}$ returns the node value in $\var{u}$ at the path with single edge $\var{dx}$, and the updater rooted at that node.

\subsection{Program Transformation}

\paragraph{Type Transformations.}
Our program transformation $\IncToCore{-}$ behaves like the identity on types (see \cref{fig:inc-to-core-types}), except for closure types, which are translated into incremental versions
\[ \IncToCore{\sigma \clos[\kappa] \tau} := \IncToCore{\sigma} \to \IncToCore{\tau} \times \Upd{\Changed{\kappa \times \sigma}}{\Changed{\tau}} \]
with updaters that can process both input changes and changes to captured variables in $\kappa$.

\paragraph{Primitives.}
For every constant $c: \tau$ in $\inclang$, we assume a corresponding incrementalized version $c_{\mathsf{core}} : \IncToCore{\tau}$ in $\corelang$.
For instance, consider the function $\const{length} : \List{\Nat} \to \Nat$ in $\inclang$.
Its incremental version caches the length and constructs an updater with an update function $\const{updateLength}$:\footnote{We omit the empty environment change $\Changed{\kappa} = \Changed{\Record{}} = \SubRecord{}$ for simplicity.}
\begin{align*}
  \setlength{\jot}{0pt}
  &\keyword{def}\ \IncVer{length} := \lam{\var{xs}} \var{len} := \const{length}(\var{xs}); \; (\var{len}, \mkUpd{\var{len}}{\const{updateLength}}) \\
  &\keyword{def}\ \const{updateLength} := \lam{(\var{dxs}, \var{cachedLen})} \\
  &\qquad \var{len} := \var{cachedLen} + \const{length}(\var{dxs}.\lbl{insert}) - \const{length}(\var{dxs}.\lbl{remove}); \\
  &\qquad (\recordLit{\field{new} \var{len}, \field{same} \var{len} == \var{cachedLen}}, \var{len})
\end{align*}
The function $\const{updateLength}$ computes the new length $\var{len}$ taking insertions and deletions into account.
It returns the length change of type $\Changed{\Nat}$ together with the new length to be cached.

\paragraph{Composition.}
Suppose we have a function $\const{evens} : \List{\Nat} \to \List{\Nat}$ that filters out odd numbers from a list and we compose it with the length function: $\const{countEvens}(\var{xs}) := \const{length}(\const{evens}(\var{xs}))$.
Intuitively, to incrementalize it, we can take the incremental versions of the two functions and create a new updater that uses the component updaters as its cache:\footnote{This is a simplified version of what our program transformation produces; we again ignore empty environments.}

\vspace{0.5em}\noindent
\begin{minipage}[t]{0.48\linewidth}
$\begin{aligned}
  \setlength{\jot}{0pt}
  &\keyword{def}\ \IncVer{countEvens} := \lam{(\var{xs})} \\
  &\qquad (\var{ys}, u_{\const{evens}}) := \IncVer{evens}(\var{xs}); \\
  &\qquad (\var{len}, u_{\const{length}}) := \IncVer{length}(\var{ys}); \\
  &\qquad (\var{len}, \mkUpd{(u_{\const{evens}}, u_{\const{length}})}{\const{updateCE}})
\end{aligned}$
\end{minipage}\hfill
\begin{minipage}[t]{0.5\linewidth}
$\begin{aligned}
  \setlength{\jot}{0pt}
  &\keyword{def}\ \const{updateCE} := \lam{(\var{dxs}, (u_{\const{evens}}, u_{\const{length}}))} \\
  &\qquad (\var{dy}, u_{\const{evens}}') := \applyUpd{u_{\const{evens}}}{\var{dxs}}; \\
  &\qquad (\var{dlen}, u_{\const{length}}') := \applyUpd{u_{\const{length}}}{\var{dy}}; \\
  &\qquad (\var{dlen}, (u_{\const{evens}}', u_{\const{length}}'))
\end{aligned}$
\end{minipage}

Our program transformation automates this logic.
A term $\tyjudg{\Gamma}{t}{\tau}$ in $\inclang$ is translated to a term $\tyjudg{\IncToCore{\Gamma}}{\IncToCore[\Gamma]{t}}{\IncToCore{\tau} \times \Upd{\Changed{\toType{\Gamma}}}{\Changed{\tau}}}$ in $\corelang$ as defined in \cref{fig:inc-to-core-terms}.
The idea is that we must not only be able to evaluate a given term, but also to obtain changes to its evaluation depending on environment changes.
To handle such changes in an updater, we reify them as values of type $\Changed{\toType{\Gamma}}$: tuples of changes to the variables in $\Gamma$ (using $\toType{-}$ from \cref{fig:type-system}).

The transformation rule for variables (\cref{fig:inc-to-core-terms}) yields a cache-less updater that simply projects out the variable change from the environment change (using $\projectVar{\Gamma}{x}$ from \cref{fig:project-var}).
For abstractions $\lam{x} t : \sigma \clos[\kappa] \tau$, we translate the body $t$ in the environment that includes exactly the free variables, which yields an updater $\Upd{\Changed{\kappa \times \sigma}}{\Changed{\tau}}$ for the closure change $\Changed{\kappa}$ and the input change $\Changed{\sigma}$.
For applications $s \; t : \tau$, the translation of $s$ yields an updater for closure changes and $t$ for input changes, which are composed to yield an updater for the application of the closure.
For record access $t.\ell : \tau$, we translate the term $t$ to an updater for changes to the record.
However, such a change is a subrecord that may not contain the field $\ell$.
If a trivial change $\same[\tau]$ exists (e.g., $\unit$ for record types), we can use it to obtain the change to the field.
Otherwise, we have to cache the previous value $r_l$ and use the trivial change $\sameAs(r_l)$ if $\ell$ is not in the record change (see \cref{fig:diff-helpers}).

\begin{figure}
  \begin{gather*}
    \IncToCore{\sigma} = \sigma \text{ for } \sigma \text{ ground} \qquad
    \IncToCore{\sigma \times \tau} = \IncToCore{\sigma} \times \IncToCore{\tau} \qquad \IncToCore{\sigma \clos[\kappa] \tau} = \IncToCore{\sigma} \to \IncToCore{\tau} \times \Upd{\Changed{\kappa \times \sigma}}{\Changed{\tau}} \\
    \IncToCore{\Gamma} = (x_1: \IncToCore{\sigma_1}, \dots, x_n: \IncToCore{\sigma_n}) \text{ for contexts } \Gamma = (x_1: \sigma_1, \dots, x_n: \sigma_n) \\
    \Changed{\sigma} = \Record{\lbl{new}: \sigma, \lbl{same}: \Bool} \quad \text{ for } \sigma = \Bool, \Nat, \Real, \Name \qquad
    \Changed{\sigma \times \tau} = \Changed{\sigma} \times \Changed{\tau} \\
    \Changed{\List{\sigma}} = \Record{\lbl{insert}: \List{(\Nat \times \IncToCore{\sigma})}, \lbl{remove}: \List{\Nat}, \lbl{change}: \List{(\Nat \times \Changed{\sigma})}} \\
    \Changed{\NameMap{\sigma}} = \Record{\lbl{add}: \NameMap{\IncToCore{\sigma}}, \lbl{remove}: \NameSet, \lbl{change}: \NameMap{\Changed{\sigma}}} \\
    \Changed{\Record{\ell_1: \tau_1, \dots, \ell_n: \tau_n}} = \SubRecord{\ell_1: \Changed{\tau_1}, \dots, \ell_n: \Changed{\tau_n}} \qquad
    \Changed{\sigma \clos[\kappa] \tau} = \Changed{\kappa} \\
    \Changed{\Gamma} = (x_1 : \Changed{\sigma_1}, \dots, x_n: \Changed{\sigma_n}) \text{ for contexts } \Gamma = (x_1: \sigma_1, \dots, x_n: \sigma_n)
  \end{gather*}
  \vspace{-1.5em}
  \mycaption{Incrementalizing transformation $\IncToCore{-}$ from $\inclang$ to $\corelang$ on types and change types $\Changed{-}$.}
  \label{fig:inc-to-core-types}
\end{figure}

\begin{figure}
  \begin{gather*}
    \begin{aligned}
    \projectVar{\Gamma}{x} &: \toType{\Gamma} \to \sigma \quad \text{(assuming $(x: \sigma) \in \Gamma$)} \\
    \projectVar{\Gamma, x : \tau}{x} & = \lam{(\_, x)} x \\
    \projectVar{\Gamma, y : \tau}{x} &= \lam{(\gamma, \_)} \projectVar{\Gamma}{x}(\gamma) \text{ if } y \neq x
    \end{aligned}
    \;
    \begin{aligned}
    \projectVars{\Gamma}{\mvar{xs}} &: \toType{\Gamma} \to \toType{\Gamma|_{\mvar{xs}}} \\
    \projectVars{\bullet}{\emptyset} &= \lam{\_} \unit \\
    \projectVars{(\Gamma, x: \tau)}{\mvar{xs}} &= \begin{cases}
      \lam{(\gamma, v)} (\projectVars{\Gamma}{\mvar{xs} \setminus \{x\}}(\gamma), v) &\text{if } x \in \mvar{xs} \\
      \lam{(\gamma, \_)} \projectVars{\Gamma}{\mvar{xs}}(\gamma) &\text{otherwise}
    \end{cases}
    \end{aligned}
  \end{gather*}
  \vspace{-1.5em}
  \mycaption{Functions to project out variables from contexts and tuples.}
  \label{fig:project-var}
  \end{figure}

\begin{figure}
\begin{align*}
    \same[\tau] &: \Changed{\tau} &&\text{Trivial change} &&\text{(only supported for some $\tau$)} \\
    \sameAs[\tau] &: \tau \to \Changed{\tau} &&\text{Trivial change at a particular value} &&\text{(for ground types $\tau$)} \\
    \apply[\tau] &: \tau \times \Changed{\tau} \to \tau &&\text{Apply a change to a value} &&\text{(for ground types $\tau$)}
\end{align*}
  \vspace{-1.5em}
\mycaption{Additional primitives involving changes, needed for incrementalization.}
\label{fig:diff-helpers}
\end{figure}

\begin{figure}
  \begin{align*}
    \setlength{\jot}{0pt}
    \IncToCore[\Gamma]{x} &= (x, \mkUpd{\unit}{\lam{(d\gamma, \_)} (\projectVar{\Changed{\Gamma}}{x} \; d\gamma, \unit)}) \\
    \IncToCore[\Gamma]{\lam{x} t} &=
      (\lam{x} \IncToCore[\Gamma|_{\freeVars{\lam{x} t}}, x: \sigma]{t}, \; \mkUpd{\unit}{\lam{(d\gamma, \_)} (\projectVars{\Changed{\Gamma}}{\freeVars{\lam{x} t}} \; d\gamma, \unit)}) \\
    \IncToCore[\Gamma]{s \; t} &=
      (f, u_f) := \IncToCore[\Gamma]{s}; \;\; (x, u_x) := \IncToCore[\Gamma]{t}; \;\; (y, u_y) := f \; x; \\
      &\quad \big(y, \;\; \mkUpd{ \;\; (u_f, u_x, u_y)}{\;\;\lam{(d\gamma, (u_f, u_x, u_y))} \;\; (\mvar{df}, u_f') := \applyUpd{u_f}{d\gamma}; \\
      &\qquad \qquad \qquad (\mvar{dx}, u_x') := \applyUpd{u_x}{d\gamma}; \;\; (\mvar{dy}, u_y') := \applyUpd{u_y}{(\mvar{df}, \mvar{dx})}; \;\; (\mvar{dy}, (u_f', u_x', u_y'))}\big) \\
    \IncToCore[\Gamma]{t.\ell} &=
      (r, u_r) := \IncToCore[\Gamma]{t}; \; (r.\ell, \mkUpd{(r.\ell, u_r)}{\lam{(d\gamma, (r_l, u_r))} \;\; (\mvar{dr}, u_r') := \applyUpd{u_r}{d\gamma}; \\
      &\qquad \qquad \qquad \qquad \qquad \ite{\has[\ell]{\mvar{dr}}}{(\mvar{dr}.\ell, (\apply(r_l, \mvar{dr}.\ell), u_r'))}{(\sameAs(r_l), (r_l, u_r'))}})
  \end{align*}
  \vspace{-1.5em}
  \mycaption{Incrementalizing transformation from $\inclang$ to $\corelang$ on terms (selected rules).
  }
  \label{fig:inc-to-core-terms}
\end{figure}

\begin{figure}
\begin{minipage}[t]{0.45\linewidth}
\begin{algorithmic}
  \Function{$\IncToCore[\Gamma]{\forIn{x : \tau}{\mvar{xs}}{t}}$}{}
    \State $\var{results} \gets [\,];$
    \State $\var{upds} \gets [\,]$;
    \State $\var{xs} \gets \mvar{xs}$
    \For{$\var{i} \gets 1$ \textbf{to} $\const{length}(\var{xs})$}
      \State $\var{x} \gets \var{xs}[\var{i}]$
      \State $(\var{y}, \var{u}_{\var{y}}) \gets \IncToCore[\Gamma, x: \tau]{t}$
      \State $\const{push}(\var{results}, \var{y})$
      \State $\const{push}(\var{upds}, \var{u}_{\var{y}})$
    \EndFor
    \State $\begin{aligned}[t]
      \var{cache} \gets \recordLit{ &\field{inputs} \var{xs}, \field{upds} \var{upds} }
    \end{aligned}$
    \State \Return $(\var{results}, \mkUpd{\var{cache}}{\textsc{Update}})$
  \EndFunction
\end{algorithmic}

\vspace{1em}\centering\textbf{Cache type:}\\[0.2em]
$\begin{aligned}
  \Record{ &\quad \field{inputs} \List{\tau}, \\
  &\quad \field{upds} \List{\Upd{\tau}{\tau'}} &}
\end{aligned}$
\end{minipage}%
\hfill%
\begin{minipage}[t]{0.55\linewidth}
\begin{algorithmic}
  \Function{Update}{$\var{d\gamma}, \; \var{cache}$}
    \State $\var{dxs} \gets \projectVar{\Changed{\Gamma}}{\var{xs}} \; \var{d\gamma}$
    \State Extract $\var{inputs}$, $\var{upds}$ from $\var{cache}$
    \State $\var{toVisit} \gets \text{indices modified by } \var{dxs}$
    \If{$\var{d\gamma}$ changes any free variable of $t$ besides $\var{xs}$}
      \State $\var{toVisit} \gets \{1,\dots,\const{length}(\var{inputs})\}$
    \EndIf
    \State $\var{changed} \gets \emptyset$
    \For{$\var{i} \gets 1$ \textbf{to} $\const{length}(\var{inputs})$}
      \If{$\var{i} \notin \var{toVisit}$} \textbf{continue} \EndIf
      \State $\var{dx} \gets \var{dxs}[\var{i}]$ or $\sameAs(\var{inputs}[\var{i}])$ if $\var{i} \notin \var{dxs}$
      \State $(\var{dy}, \var{upds}[\var{i}]) \gets \applyUpd{\var{upds}[\var{i}]}{(\var{d\gamma}, \var{dx})}$
      \State $\var{inputs}[\var{i}] \gets \apply(\var{inputs}[\var{i}], \var{dx})$
      \If{$\lnot \issame{\var{dy}}$}
        \State $\var{changed}[\var{i}] \gets \var{dy}$
      \EndIf
    \EndFor
    \State $\var{cache} \gets \recordLit{ \field{inputs} \var{inputs}, \field{upds} \var{upds} }$
    \State \Return $(\recordLit{ \field{changed} \var{changed} }, \var{cache})$
  \EndFunction
\end{algorithmic}
\end{minipage}
\mycaption{Pseudocode for incrementalizing simplified for-expressions (without accumulators or $\using{\cdot}$ clauses).
It only handles modifications to list elements (no insertions or deletions) and assumes $\mvar{xs}$ is a variable.}
\label{fig:inc-to-core-terms-for-simple}
\end{figure}

\paragraph{Incrementalizing Loops.}
A loop $\forIn{x}{\mvar{xs}}{t}$ can be viewed as a sophisticated, higher-order primitive $\const{loopCombinator}(\mvar{xs}, \lam{x} t)$ that takes a collection $\mvar{xs}$ and a loop body $t$ and returns a loop result.
Such a combinator-like primitive can be equipped with its own (complex) incremental version.
To minimize recomputation, the loop combinator caches the values produced at each iteration and maintains mappings between collection indices and the iterations that accessed them.
The collections that are accessed by the loop body are communicated to the system via the $\using{\var{element} := \var{collection}[\var{index}], \ldots}$ clause.
It uses this information, together with incoming change descriptions, to determine which iterations need to be rerun, and thus which entries of the loop's output must change.
The incrementalization of simple loops ($\forIn{x}{\mvar{xs}}{t}$) is presented in \cref{fig:inc-to-core-terms-for-simple}.
For fully featured loops, this is more complicated and relegated to the appendix.

\subsection{Correctness}
\label{sec:incrementalization:correctness}

\paragraph{Change Validity.}
In order to reason about the correctness of the incrementalization, we first need to define when a change is valid.
For this we introduce a type-indexed relation $\changerel[\tau] \subseteq \sem{\tau} \times \sem{\Changed{\tau}} \times \sem{\tau}$ where $(x, \mvar{dx}, x') \in \changerel[\tau]$ means that $\mvar{dx}$ is a valid change from $x$ to $x'$.
The relation is defined inductively on types in \cref{fig:change-rel}.
For instance, $\mvar{dx}$ is a valid change from $x$ to $x'$ for basic types if its ``new'' field is $x'$ and its ``same'' field being $\trueLit$ implies $x = x'$.
Changes to tuples are valid if the individual changes to their components are valid.
Finally, a change to a closure is valid if it is a valid change to its captured environment and the function itself is unchanged.

\begin{figure}
  \vspace{-0.5em}
  \begin{align*}
    \changerel[\sigma] &= \{ (x, \mvar{dx}, x') \in \sem{\sigma} \times \sem{\Changed{\sigma}} \times \sem{\sigma} \mid \mvar{dx}(\lbl{new}) = x' \land (\mvar{dx}(\lbl{same}) =  \trueLit \implies x = x') \} \\
    &\quad \text{ for } \sigma = \Bool, \Nat, \Real, \Name \\
    \changerel[\sigma \times \tau] &= \{ ((x, y), (\mvar{dx}, \mvar{dy}), (x', y')) \in \sem{\sigma \times \tau} \times \sem{\Changed{\sigma \times \tau}} \times \sem{\sigma \times \tau} \mid (x, \mvar{dx}, x') \in \changerel[\sigma] \land (y, \mvar{dy}, y') \in \changerel[\tau] \} \\
    \changerel[{\sigma \clos[\kappa] \tau}] &= \{ ((e, f), \mvar{de}, (e', f')) \in \sem{\sigma \clos[\kappa] \tau} \times \sem{\Changed{\kappa}} \times \sem{\sigma \clos[\kappa] \tau} \mid (e, \mvar{de}, e') \in \changerel[\kappa], f = f' \}
  \end{align*}
  \vspace{-1.5em}
  \mycaption{Relation of valid changes (selected rules).}
  \label{fig:change-rel}
\end{figure}

\paragraph{Updater Validity.}
Similarly, we need to define when an updater $\Upd{\Changed{\sigma}}{\Changed{\tau}}$ is valid for a function $f : \sem{\sigma} \to \sem{\tau}$, previously evaluated at an input $x \in \sem{\sigma}$.
To this end, we introduce the validity predicate $\ValidUpd[\sigma \leadsto \tau]{f}{x} \subseteq \sem{\Upd{\Changed{\sigma}}{\Changed{\tau}}}$, defined \emph{coinductively} (as the greatest fixed point):
\begin{align*}
  \ValidUpd[\sigma \leadsto \tau]{f}{x} = \{ u \in \sem{\Upd{\Changed{\sigma}}{\Changed{\tau}}} &\mid \forall \mvar{dx}, x' \text{ with } (x, \mvar{dx}, x') \in \changerel[\sigma] : (y, \mvar{dy}, y') \in \changerel[\tau] \land u' \in \ValidUpd[\sigma \leadsto \tau]{f}{x'} \\
  &\!\!\!\! \text{ where } y = f(x), y' = f(x'), \text{ and applying $u$ to $\mvar{dx}$ yields }  (\mvar{dy}, u') \}
\end{align*}
In words, $u$ is a valid updater for the function $f$ at the input $x$ if for any valid change $\mvar{dx}$ from $x$ to $x'$, applying the updater to $\mvar{dx}$ yields a valid output change $\mvar{dy}$ from $f(x)$ to $f(x')$ and a new updater $u'$ that is again valid for $f$, but now at input $x'$.

\begin{figure}
  \begin{align*}
    \logrel[\sigma] &= \{ (x, x') \mid x = x' \} \text{ (for ground types } \sigma\text{)} \quad\quad
    \logrel[\sigma \times \tau] = \{ ((x, y), (x', y')) \mid (x, x') \in \logrel[\sigma] \land (y, y') \in \logrel[\tau] \} \\
    \logrel[{\sigma \clos[\kappa] \tau}] &= \{ ((e, f), f') \mid \forall (x, x') \in \logrel[\sigma] \ldotp (y, y') \in \logrel[\tau] \land u \in \ValidUpd[\kappa \times \sigma \leadsto \tau]{f}{(e, x)} \text{ where } y = f(e, x), (y', u) = f'(x') \} \\
    \logrel[\Gamma] &= \{ (\gamma, \gamma') \mid \forall (x: \sigma) \in \Gamma \ldotp (\gamma(x), \gamma'(x)) \in \logrel[\sigma] \}
  \end{align*}
  \vspace{-1.5em}
  \mycaption{Logical relation between $\inclang$ and $\corelang$ (selected rules).}
  \label{fig:logical-rel}
\end{figure}
\begin{figure}
  \begin{gather*}
    \begin{aligned}
    \ctxToTpl[\Gamma] &: \sem{\Gamma} \to \sem{\toType{\Gamma}} \\
    \ctxToTpl[\bullet](\emptyset) &= \unit, \quad \ctxToTpl[\Gamma, x : \tau](\gamma) = (\ctxToTpl[\Gamma](\gamma \setminus \{x\}), \gamma(x))
    \end{aligned}
    \qquad
    \begin{aligned}
    \tplToCtx[\Gamma] &: \sem{\toType{\Gamma}} \to \sem{\Gamma} \\
    \tplToCtx[\bullet](\unit) &= \emptyset, \quad \tplToCtx[\Gamma, x : \tau](\gamma) = \tplToCtx[\Gamma](\pi_1(\gamma))[x \mapsto \pi_2(\gamma)]
    \end{aligned}
  \end{gather*}
  \vspace{-1.2em}
  \mycaption{Functions to convert between contexts and tuples.}
  \label{fig:context-tuple-conversion}
\end{figure}

\paragraph{Logical Relation.}
With this machinery in place, we define a logical relation $\logrel[\tau] \subseteq \sem{\tau} \times \sem{\IncToCore{\tau}}$ between terms in $\inclang$ and $\corelang$, which characterizes when the $\corelang$ term is a correct incremental version of the $\inclang$ term (\cref{fig:logical-rel}).
The relation is the identity at ground types and at composite types, it is defined in the standard way: two values are related if all their components are.

The interesting case is closure types $\sigma \clos[\kappa] \tau$ because they have incremental versions. The semantics of such a closure is $(\mvar{env}, f) \in \sem{\kappa} \times (\sem{\kappa} \times \sem{\sigma} \to \sem{\tau})$, consisting of the captured environment and a “function pointer” for computing outputs from inputs and the environment.
Such a closure is translated to an ordinary $\corelang$ function $f'$ of type $\sem{\IncToCore{\sigma \clos[\kappa] \tau}} = \sem{\IncToCore{\sigma}} \to \sem{\IncToCore{\tau}} \times \sem{\Upd{\Changed{\kappa \times \sigma}}{\Changed{\tau}}}$.
The translation sends transformed inputs to transformed outputs, together with an updater from environment and input changes to output changes.
Now, $(\mvar{env}_0, f)$ and $f'$ are related if for all related inputs $x_0 \in \sem{\sigma}, x_0' \in \sem{\IncToCore{\sigma}}$, letting $y_0 = f(\mvar{env}_0, x_0)$ and $(y_0', u_0) = f'(x_0')$, we have:
(1) the outputs $y_0 \in \sem{\tau}, y_0' \in \sem{\IncToCore{\tau}}$ are related, and
(2) the resulting updater $u_0 \in \sem{\Upd{\Changed{\kappa \times \sigma}}{\Changed{\tau}}}$ is a valid updater of $f$ for the input $(\mvar{env}_0, x_0)$.

To state correctness of $\IncToCore{-}$, we need a few helper functions operating on the tuple representation of contexts (\cref{fig:context-tuple-conversion}).
Using this, we obtain the following result, proven in \cref{sec:incrementalization-proof}.
\begin{lemma}[Fundamental Lemma]
  Assume $\tyjudg{\Gamma}{t}{\tau}$ in $\inclang$, and thus $\tyjudg{\IncToCore{\Gamma}}{\IncToCore[\Gamma]{t}}{\IncToCore{\tau} \times \Upd{\Changed{\toType{\Gamma}}}{\Changed{\tau}}}$ in $\corelang$.
  Then for all $(\gamma, \gamma') \in \logrel[\Gamma]$, we have
\[ (\sem{t}(\gamma), \pi_1(\sem{\IncToCore{t}}(\gamma'))) \in \logrel[\tau] \quad \text{ and } \quad \pi_2(\sem{\IncToCore{t}}(\gamma')) \in \ValidUpd[\toType{\Gamma} \leadsto \tau]{\sem{t} \circ \tplToCtx[\Gamma]}{\ctxToTpl[\Gamma](\gamma)} \]
\end{lemma}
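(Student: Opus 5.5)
We prove the lemma by induction on the typing derivation of $\tyjudg{\Gamma}{t}{\tau}$, proving both conjuncts simultaneously. The first conjunct (related initial values) is routine in every case: the $\pi_1$ component of $\IncToCore{t}$ computes the same value as $t$ on ground types. On closure types it produces a function whose related-ness is exactly what the induction hypothesis on the body supplies. The real content is the second conjunct. Since $\ValidUpd{f}{x}$ is a greatest fixed point, we establish it by \emph{coinduction}. For each rule we exhibit a family of updater/point pairs closed under valid steps: a set $S$ of pairs $(u, x)$ such that for every $(x,\mvar{dx},x')\in\changerel$, stepping $u$ on $\mvar{dx}$ yields a valid output change and a successor $u'$ with $(u',x')\in S$. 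Then $S \subseteq \ValidUpd{f}{-}$.

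To make the coinduction usable, we first record a lemma about $\mkUpd{s}{t}$. Suppose there is an invariant $I(\mvar{cache}, x)$ with three properties: it holds for the initial cache and $\ctxToTpl[\Gamma](\gamma)$; whenever $I(c,x)$ and $(x,\mvar{dx},x')\in\changerel$, the step function returns a valid output change and a cache $c'$ with $I(c',x')$. Then $\sem{\mkUpd{s}{t}}$ is in $\ValidUpd{f}{x}$. This follows directly from the semantics in \cref{fig:core-semantics}, because applying the denotation of $\mkUpd{c}{g}$ to $\mvar{dx}$ returns $\sem{\mkUpd{c'}{g}}$. The equivalence $\applyUpd{\mkUpd{c}{g}}{\mvar{dx}}=\dots$ is thus valid semantically.

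With this lemma, each case reduces to choosing an invariant.
\begin{itemize}
\item \textbf{Variables.} The cache is trivial and $I$ is always true. Validity is exactly that projecting a valid context change gives a valid variable change.
\item \textbf{Abstraction.} The output change is $\projectVars{}{}\,d\gamma$. This is valid for the closure because the code component $f$ is fixed and the environment is the projection of the context. This uses the definition of $\changerel$ on closures together with the fact that $\ctxToTpl$ commutes with restriction.
\item \textbf{Application.} The invariant says that $u_f$, $u_x$ and $u_y$ are valid updaters at the current $\gamma$, current argument and current $(\mvar{env},x)$ respectively. The first two are given by the induction hypothesis. For $u_y$, the IH on $s$ gives $(\sem{s}(\gamma),f)\in\logrel$, and the definition of $\logrel$ on closures then gives $u_y\in\ValidUpd{\cdot}{(e,x)}$. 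Preservation holds because $\mvar{df}$ is a valid environment change with unchanged code. Hence $(\mvar{df},\mvar{dx})$ is a valid change of $(e,x)$, and the semantics of application gives $y'=f(e',x')$.
\item \textbf{Record projection.} The invariant also states $r_l=\sem{t}(\gamma).\ell$. We need the facts that $\apply$ and $\sameAs$ respect $\changerel$. These are properties of the assumed helper primitives, and we take them as hypotheses on primitives.
\item \textbf{Constants.} The corresponding hypothesis is that $(c, c_{\mathsf{core}})\in\logrel$.
\end{itemize}

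The main obstacle is the loop rule, especially with accumulators, $\keyword{using}$ clauses, insertions and deletions. Here the invariant must say three things. First, $\var{inputs}$ equals the current collection. Second, $\var{upds}[i]$ is a valid updater for the body at $(\gamma, x\mapsto \var{inputs}[i])$ (and the current accumulator value). Third, the cached index maps record exactly the indices each iteration accessed. The delicate step is showing that skipped iterations really have unchanged outputs. For an index not in $\var{toVisit}$, the body's restriction of the context change consists of trivial changes only, so its output is unchanged and its old updater is still valid at the same point. This needs the auxiliary fact that a trivial change $\sameAs(v)$ is a valid change from $v$ to $v$, together with a lemma that accesses through $\using$ depend only on the recorded indices. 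I would first prove the simplified loop of \cref{fig:inc-to-core-terms-for-simple} this way. I would then extend the invariant to accumulators by threading the sequence $z_j$ and re-running from the first changed accumulator.
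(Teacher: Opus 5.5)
Your proposal follows the paper's proof: induction on the typing derivation, with the updater conjunct obtained from a valid-updater-construction lemma. Your cache invariant $I$ is exactly the paper's family of admissible caches $A_x$, a post-fixpoint of the coinductive definition of $\ValidUpd{f}{x}$, and your per-case invariants match the paper's for variables, abstraction, application, record projection and loops, including the argument that skipped iterations keep their outputs and updaters and the threading of accumulator changes. The cases you leave out (let bindings, pairs, projections, record literals, restriction) are routine instances of the same pattern, and your explicit hypotheses on $\apply$, $\sameAs$ and $c_{\mathsf{core}}$ correspond to what the paper assumes case by case.
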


Considering the special case of one-element contexts and unfolding the definition of the validity predicate for updaters to handle sequences of input changes, one obtains the following corollary:

\begin{corollary}
  Let $\tyjudg{x : \sigma}{t}{\tau}$ be a $\inclang$ term denoting a function $f : \sem{\sigma} \to \sem{\tau}$, for ground types $\sigma$, $\tau$.
  Let $g = \sem{\IncToCore{t}}$ and $x_0, x_1, x_2, \ldots \in \sem{\sigma}$ be a sequence of inputs.
  Let $\mvar{dx}_0, \mvar{dx}_1, \mvar{dx}_2, \ldots \in \sem{\Changed{\sigma}}$ be valid change descriptions: $(x_i, \mvar{dx}_i, x_{i+1}) \in \changerel[\sigma]$.
  Finally, let $(y_0, u_0) = g(x_0)$, and for all $i$, let $(\mvar{dy}_i, u_{i+1})$ be the result of applying $u_i$ to $\mvar{dx}_i$.
  Then we have $y_0=f(x_0)$ and $(f(x_i), \mvar{dy}_i, f(x_{i+1})) \in \changerel[\tau]$ for all $i$.
\end{corollary}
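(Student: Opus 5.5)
We obtain the corollary by instantiating the Fundamental Lemma at the one-variable context $\Gamma = x : \sigma$ and then running an induction along the change sequence, driven by the coinductive definition of $\ValidUpd{}{}$.

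\textbf{Instantiation.} Since $\sigma$ is ground, we have $\IncToCore{\sigma} = \sigma$ and $\logrel[\sigma]$ is the identity. Hence $\gamma = \gamma' = (x \mapsto x_0)$ gives $(\gamma,\gamma') \in \logrel[\Gamma]$. The Fundamental Lemma yields two facts:
\begin{itemize}
\item $(\sem{t}(\gamma), \pi_1(\sem{\IncToCore{t}}(\gamma'))) \in \logrel[\tau]$. Since $\tau$ is ground, this is equality, so $y_0 = f(x_0)$.
\item $u_0 := \pi_2(\sem{\IncToCore{t}}(\gamma')) \in \ValidUpd[\toType{\Gamma} \leadsto \tau]{\sem{t} \circ \tplToCtx[\Gamma]}{\ctxToTpl[\Gamma](\gamma)}$.
\end{itemize}
Here $\toType{\Gamma} = \Unit \times \sigma$ and $\ctxToTpl[\Gamma](\gamma) = (\unit, x_0)$. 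Also $\sem{t}\circ\tplToCtx[\Gamma]$ is the map $(\unit, v) \mapsto f(v)$.

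\textbf{Adapting the changes.} The updater takes changes of type $\Changed{\Unit \times \sigma} = \SubRecord{} \times \Changed{\sigma}$. The corollary, however, speaks of applying $u_i$ to $\mvar{dx}_i$. We read this, as the paper does when it omits empty environment changes, as applying $u_i$ to $(\unit, \mvar{dx}_i)$. The empty subrecord is a valid change from $\unit$ to $\unit$, since $\changerel[\Unit]$ contains only that triple. By the product clause of $\changerel$, each $(\unit,\mvar{dx}_i)$ is therefore a valid change from $(\unit,x_i)$ to $(\unit,x_{i+1})$.

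\textbf{Induction.} We prove by induction on $i$ that
\[ u_i \in \ValidUpd[\toType{\Gamma}\leadsto\tau]{\sem{t}\circ\tplToCtx[\Gamma]}{(\unit, x_i)}. \]
The base case $i = 0$ is the instantiation above. For the step, unfold the greatest fixed point once: since $\ValidUpd{}{}$ is a fixed point, membership implies the defining condition. Applying it to the valid change $(\unit,\mvar{dx}_i)$ gives $(f(x_i), \mvar{dy}_i, f(x_{i+1})) \in \changerel[\tau]$, which is the claimed property at index $i$. It also gives $u_{i+1}$ valid at $(\unit, x_{i+1})$, which closes the induction.

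One small check is needed here. The \emph{semantic} application of $u_i$ must agree with the $\corelang$ rule for $\keyword{apply}$: the returned updater is the subtree $\mvar{xs} \mapsto f((x,\mvar{xs}))$. This matches the "applying $u$ to $\mvar{dx}$ yields $(\mvar{dy},u')$" used in the definition of $\ValidUpd{}{}$, so it holds by definition of the semantics in \cref{fig:core-semantics}.

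\textbf{Main obstacle.} There is no deep difficulty; the coinduction is already packaged in the Fundamental Lemma. The step needing care is bookkeeping: matching the context-tuple encoding and the empty environment change with the corollary's unadorned statement, and justifying that validity on ground types collapses the logical relation to equality.
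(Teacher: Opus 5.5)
Your proposal is correct and follows the paper's own one-line justification. You specialize the Fundamental Lemma to the one-variable context $x : \sigma$, where the ground-type logical relation collapses to equality and gives $y_0 = f(x_0)$, and then unfold the coinductive updater-validity predicate once per change along the sequence, with explicit bookkeeping for the tuple encoding $(\unit, x_i)$ and the empty environment change paired with each $\mvar{dx}_i$ that the paper leaves implicit.
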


\section{Case Studies}
\label{sec:case-studies}

We have implemented a prototype of our approach by embedding the $\genlang$ and $\inclang$ DSLs in Julia. We investigate the impact of incrementalization across a variety of models and custom MCMC kernels. We also compare to Gen~\citep{Cusumano-Towner19,Cusumano-TownerLM20}, a state-of-the-art PPL with incrementalization.
\paragraph{Models and Kernels.} To test our approach, we implement 16 MCMC kernels in 6 Bayesian models. (1) \textit{Robust regression~\citep{BoxT68}:} Bayesian linear regression with outliers. We implement Gaussian drift moves on the parameters and Metropolis-Hastings moves on outlier indicators.
(2) \textit{Binary Gaussian mixture~\citep{DieboltR94}:} 2D Gaussian mixture model with inferred parameters, mixing weight, and cluster assignments. Kernels update parameters and mixing weights, or one cluster assignment.
(3) \textit{Hidden Markov Model with Inferred Parameters~\citep{Scott02}:} A hidden Markov model with an a priori unknown transition matrix in $\RR^{100 \times 100}$, and continuous observations. Our MCMC kernels update a single row of the transition matrix or a single latent state.
(4) \textit{Mixture of Finite Mixtures~\citep{MillerH18}:} A 2D Gaussian mixture model where the number of clusters is unknown with Poisson prior. Empty birth/death moves create new empty clusters or remove unoccupied clusters. Singleton birth/death moves propose to split a datapoint into its own singleton cluster, or to merge a singleton cluster into an existing cluster. Cluster parameter changes propose updates to a single latent cluster's parameters. Mixing weight changes re-propose the mixing proportions of the current latent clusters. Cluster assignment changes reassign a single datapoint to a different (pre-existing) cluster.
 (5) \textit{Latent Dirichlet Allocation~\citep{BleiNJ03}:} A topic model. We consider MCMC moves that change the theme (topic mixture) of a single document, the lexicon (word mixture) of a single topic, or the topic assignment for a single word.
(6) \textit{Dirichlet Process Mixture Model~\citep{Neal00}:} A nonparametric mixture model with infinitely many clusters, marginalized via a Chinese Restaurant Process. We consider Gibbs updates to cluster assignments and parameters as described in Algorithm 8 of~\citet{Neal00}.

\paragraph{Systems.} We implement each model both in our Julia embedding of $\genlang$, and in Gen's static modeling language, with Gen's experimental support for fine-grained incremental computation turned on (using the {\footnotesize\textbf{\texttt{@gen (static, diffs) function}}} syntax). We implement each inference kernel in Julia, using three different strategies for computing the necessary model densities: (1) our own implementation of incremental density computation, (2) our own implementation of full density recomputation, and (3) Gen's implementation of incremental density computation.

\begin{table}[t]
\centering
\mycaption{Fixed-size benchmarks: median time to complete one iteration of a given MCMC kernel.}\vspace{-1em}
\label{tab:benchmarks}
\begin{tabular}{p{0.4\linewidth}rrrrrr}
\toprule
\textbf{Model} & \multicolumn{3}{c}{\textbf{Absolute Runtime}} & \multicolumn{2}{c}{\textbf{Relative Runtime}} \\
\quad\textit{MCMC Kernel} & {Incremental} & {Full} & {Gen} & $\frac{\mathbf{Full}}{\mathbf{Inc}}$ & $\frac{\mathbf{Gen}}{\mathbf{Inc}}$ \\
\midrule
\multicolumn{6}{l}{Robust Regression \textit{(\citet{BoxT68})}, $N=1000$} \\
\quad \textit{Parameter change} & 5.93 ms & 3.05 ms & 1.77 ms & \textcolor{red}{0.51×} & 0.30× \\
\quad \textit{Outlier status change} & 0.017 ms & 3.06 ms & 0.010 ms & 176× & 0.59× \\
\multicolumn{6}{l}{Binary Gaussian Mixture \textit{(\citet{DieboltR94})}, $N = 500$}  \\
\quad \textit{Parameters change} & 3.91 ms & 1.85 ms & 0.78 ms & \textcolor{red}{0.47×} & 0.20× \\
\quad \textit{Cluster assignment change} & 0.028 ms & 1.68 ms & 0.013 ms & 61× & 0.46× \\
\multicolumn{6}{l}{Hidden Markov Model with Inferred Parameters \textit{(\citet{Scott02})}, $T=1000$}  \\
\quad \textit{Transition matrix row update} & 0.211 ms & 3.88 ms & 1.24 ms & 18× & \textcolor{ForestGreen}{5.88×} \\
\quad \textit{Latent state update} & 0.034 ms & 3.81 ms & 0.014 ms & 112× & 0.41× \\
\multicolumn{6}{l}{Mixture of Finite Mixtures \textit{(\citet{MillerH18})}, $N = 200$}  \\
\quad \textit{Birth/death move (empty cluster)} & 0.238 ms & 2.35 ms & 3.61 ms & 9.9× & \textcolor{ForestGreen}{15.17×} \\
\quad \textit{Birth/death move (singleton cluster)} & 0.219 ms & 2.32 ms & 4.10 ms & 10.6× & \textcolor{ForestGreen}{18.72×} \\
\quad \textit{Cluster parameter change} & 2.13 ms & 2.19 ms & 0.761 ms & 1.03× & 0.36× \\
\quad \textit{Mixing weights change} & 0.148 ms & 2.37 ms & 1.37 ms & 16× & \textcolor{ForestGreen}{9.26×} \\
\quad \textit{Cluster assignment change} & 0.225 ms & 8.76 ms & 0.032 ms & 39× & 0.14× \\
\multicolumn{6}{l}{Latent Dirichlet Allocation \textit{(\citet{BleiNJ03})}, $N = 900~ (30 \times 30)$} \\
\quad \textit{Document theme change} & 0.206 ms & 9.07 ms & 0.031 ms & 44× & 0.15× \\
\quad \textit{Topic change} & 0.716 ms & 9.42 ms & 0.835 ms & 13× & {1.17×} \\
\quad \textit{Topic assignment change (MH)} & 0.069 ms & 9.07 ms & 0.019 ms & 131× & 0.28× \\
\multicolumn{6}{l}{Dirichlet Process Mixture Model \textit{(\citet{Neal00})}, $N = 200$} \\
\quad \textit{Cluster assignment update} & 0.183 ms & 6.64 ms & 5.14 ms & 36× & \textcolor{ForestGreen}{28.09×} \\
\quad \textit{Cluster parameter update} & 0.733 ms & 1.28 ms & 0.421 ms & 1.75× & 0.57× \\
\bottomrule
\end{tabular}\vspace{-1em}
\end{table}

\begin{figure}[t]
\begin{subfigure}{0.3\textwidth}
\includegraphics[width=\linewidth]{figures/robust_regression_scaling_outlier_indicator_mh_step_loglog.pdf}
\subcaption{Outlier Status Change}
\end{subfigure}%
\hfil
\begin{subfigure}{0.3\textwidth}
\includegraphics[width=\linewidth]{figures/robust_regression_scaling_line_drift_mh_step_loglog.pdf}
\subcaption{Parameter Drift Update}
\end{subfigure}%
\hfil
\begin{subfigure}{0.3\textwidth}
\includegraphics[width=\linewidth]{figures/poisson_gmm_scaling_singleton_birth_death_move_loglog.pdf}
\subcaption{Singleton Birth/Death}
\end{subfigure}%
\mycaption{Representative log-log scaling plots for individual MCMC moves. Although our approach incurs constant-factor overhead, it is the only one to achieve the correct asymptotic scaling across moves.}
\label{fig:scaling-log-log-individual}
\end{figure}

\begin{figure}[t]
\begin{subfigure}[b]{0.3\textwidth}
\includegraphics[width=\linewidth]{figures/poisson_gmm_scaling.pdf}
\subcaption{Mixture of Finite Mixtures}
\end{subfigure}%
\hfill
\begin{subfigure}[b]{0.3\textwidth}
\includegraphics[width=\linewidth]{figures/hmm_scaling.pdf}
\subcaption{Hidden Markov Model}
\end{subfigure}%
\hfill
\begin{subfigure}[b]{0.3\textwidth}
\includegraphics[width=\linewidth]{figures/dpmm_scaling.pdf}
\subcaption{Dirichlet Process Mixture}
\end{subfigure}
\mycaption{Linear scaling plots for entire MCMC parameter sweeps.}
\label{fig:scaling-aggregate}
\end{figure}

\paragraph{Results.}
Table~\ref{tab:benchmarks} presents absolute and relative runtimes of each MCMC kernel, for each approach to density (re)computation. Timings were recorded for each application of each kernel during 100 steps of MCMC inference on datasets of fixed size (chosen separately per model), and the median across all applications of a given kernel is reported. The log-log scaling plots in \cref{fig:scaling-log-log-individual} show how median times for executing a single kernel scale with dataset size. Each iteration of MCMC inference may execute $O(N)$ kernels (e.g., in a mixture model, to consider reassigning each datapoint's cluster assignment, then updating cluster parameters); \Cref{fig:scaling-aggregate} shows how these aggregate MCMC moves (and thus end-to-end inference) scale with the number of datapoints.

\paragraph{Findings.} The log-log scaling plots in Fig.~\ref{fig:scaling-log-log-individual} are representative of three patterns across our experiments. (a) In many cases, our approach successfully reduces the asymptotic complexity of an update (e.g., from linear-time to constant-time), but so does Gen's state-of-the-art incrementalization, often with better constant factors. (b) In rare cases (see red entries in \cref{tab:benchmarks}), the density must be fully recomputed by all three systems, and our approach to incrementalization adds overhead relative to the non-incremental version. (c) In some cases (see green entries in \cref{tab:benchmarks}), our approach delivers asymptotic savings not captured by Gen's state-of-the-art incrementalization. Our approach is the only one that achieves the desired asymptotic complexity for all updates. \Cref{fig:scaling-aggregate} illustrates how savings on individual updates affect the runtime of an MCMC algorithm that composes these updates into realistic schedules (e.g., ``run one parameter update and $N$ cluster assignment updates per iteration''). The plots again surface several interesting patterns. (a) In some cases, our approach reduces the cost of an update from linear to constant-time, enabling it to be efficiently run on every datapoint at every iteration. The failure of Gen to incrementalize these updates leads to $O(N^2)$ runtime overall. (b) When a single row of an HMM's transition matrix is changed, our approach (but not Gen's) efficiently revisits only the time steps that executed the changed transition. This leads to an $O(K)$ reduction in runtime per step. But this is just a constant-factor speed-up with respect to the sequence length. (c) In a Dirichlet process mixture, the expected number of latent clusters grows logarithmically with $N$, so each Gibbs cluster assignment move requires more density queries as $N$ increases. This means even our approach scales super-linearly, despite reducing the cost of each density evaluation to a constant. This is even more noticeable without incrementalization.

\paragraph{Soundness.} We found that Gen sometimes crashed, silently computed incorrect densities, or resorted to full recomputation, requiring various workarounds.
For example, our initial GMM program sampled {cluster assignments} and observations in separate loops.\footnote{By ``loop'' we mean an application of Gen's \texttt{Map} or \texttt{Unfold} combinator to a kernel implementing the loop body.} We had to merge the loops to achieve $O(1)$ assignment updates in Gen.
Conversely, we had to sample the latent means and covariances in \textit{separate} loops, because Gen did not correctly track changes to tuples.

\paragraph{Implementation Details.}
Our implementation faithfully realizes the algorithmic choices formalized in \cref{sec:calculi,sec:density-transformation,sec:incrementalization}, but differs in implementation details due to our choice of Julia as a host language (chosen for its scientific computing ecosystem and for fair comparison with Gen).
While our formalism uses explicit program transformations, in Julia we use a combination of macros, nonstandard interpretation, and function overloading via multiple dispatch to achieve the same effect.
This is a common way of embedding DSLs in Julia, also used in Gen~\citep{Cusumano-Towner19} and Turing~\citep{GeXG18}.
Our implementation features an abstract \texttt{Updater} class with separate subclasses for each primitive operation, where each subclass stores an operation-specific cache and has an \texttt{update} method corresponding to applying the updater to an input change.
In alignment with our theoretical presentation, the implementation uses immutable, persistent collections (including for its caches).

\section{Related Work and Discussion}
\label{sec:related-work}

\paragraph{Incrementalizing $\lambda$-Calculi.}
Our work is inspired by the incremental $\lambda$-calculus of \citet{CaiGRO14}, which introduces the notion of change types and an incrementalizing program transformation.
Their change type for functions $\sigma \to \tau$ is $\sigma \to \Changed{\sigma} \to \Changed{\tau}$, which is more general than our closure-based approach.
However, their work does not support caching.

\Citet{GiarrussoRS19} add caching of intermediate results in what they call \emph{cache-transfer style}.
They work in the untyped $\lambda$-calculus in order to avoid tracking cache types.
They require programs to be preprocessed into A'-normal form and lambda-lifted. \Citet{MatsudaFWW23} solve the typing issues in the first-order setting using existential types for the caches. \Citet{Morihata20} instead incrementalizes functions $\sigma \to \tau$ to \emph{caching derivative-associated} functions $\sigma \to \tau \times (\Delta \sigma \to \Delta \tau)$, where intermediate results are stored in the inner closure's environment, so the caches do not need to be represented in the types.
He also mentions that a series of modifications can be supported via ``recursively-caching'' functions, but does not elaborate.
This was inspiration for our updater types.

\paragraph{Dynamic Dependency Tracking.}
An influential \emph{dynamic} incrementalization technique is self-adjusting computation \cite{Acar05}, where the initial run produces a computation trace memoizing intermediate results.
Changed inputs require only the recomputation of affected parts of the trace (see \cite{Acar09}).
Adapton \cite{HammerKHF14} instead constructs a demand-driven dependency graph at runtime.
Follow-up work \cite{HammerDHLFHH15} also uses first-class names, recognizing them as a ``critical linguistic feature for efficient incremental computation''.
Whereas they use unique names to identify subcomputations, we use them to identify domain entities in probabilistic models.

\paragraph{Incrementalization in Probabilistic Programming.}
Several program transformations aim to speed up single-site Metropolis-Hastings specifically.
Shred \cite{YangHG14} optimizes changes to random choices that preserve the control flow by compiling to a straight-line program and only recomputing its transitive dependencies.
C3 \cite{RitchieSG16} avoids recomputation by combining continuation-passing style with call-site caching.
Similarly, \citet{CastellanP19} describe a translation from a probabilistic program to an event structure, tracking dependencies and avoiding unnecessary sampling.

For more complex inference algorithms, many systems analyze probabilistic programs using graphical models because changing a variable only affects its \emph{Markov blanket}.
An early instance of this is BLOG \cite{MilchR06}, a declarative PPL for open-universe models, whose Metropolis-Hastings inference maintains a dynamic dependency graph to avoid recomputing factors in the acceptance ratio.
As changing a variable can change the dependency graph, proposals require recomputing the dependency structure.
The Bean Machine \cite{TehraniALSNTTML20} extends this with programmable proposal distributions that can be composed and blocked across variables.
Venture \cite{MansinghkaSP14,MansinghkaSHRCR18} was perhaps the first to introduce dynamic dependency tracking for a higher-order generative language with programmable inference, enabling optimal asymptotic scaling for a broad range of inference algorithms.
Its dependency tracking also handles aliasing from memoization and was the first to be extended to sublinear-time inference updates, by subsampling dependencies.
\citet{BockC25} present a program analysis to factorize the density function in the presence of loops, which can speed up single-site MH, black-box variational inference, and sequential Monte Carlo.

Gen \cite{Cusumano-Towner19} allows users to customize incrementalization via an \emph{update} function in its generative function interface.
By default, its static modeling language (for straight-line probabilistic programs) automates incremental updates via dependency tracking and partial evaluation.
It also has undocumented, experimental support for incrementalization based on ``diffs''~\citep{Cusumano-Towner20}, disabled by default due to bugs and limited support.
Our work drew inspiration from Gen, and could inform future extensions to Gen, e.g., to handle features such as names, tuples, and higher-order functions.

\citet{LimLLRY26} also aim to improve the efficiency of density computations in probabilistic programs, but via auto-vectorization based on tensor operations rather than incremental computation; notably, incrementalization gives asymptotic speedups even on a single CPU, without parallel hardware.
Instead of incremental density evaluation, \citet{Cusumano-TownerBGVM18} tackle the related problem of incrementally generating \emph{samples} for changes to the input \emph{program} by translating their traces.

\paragraph{Density Semantics.}
\citet{LeeYRY20} define both measure and density semantics for an imperative, first-order probabilistic language, but their reference measure only supports densities when all primitive distributions are continuous, whereas our traces may contain heterogeneous types.
\citet{TassarottiT23} present a verified compilation pipeline from a simplified version of Stan to optimized density code in C, but their language and approach (operational semantics, imperative) are very different from ours.
\citet{BeckerHMWCSRSLRM26} use a similar semantics to ours, but their language does not feature name generation or unordered collections---key contributions of ours.

\paragraph{Nonparametric Models.}
Existing systems struggle with exchangeable sequences and nonparametric models.
Gen \cite{Cusumano-Towner19} can express nonparametric models, but its traces introduce artificial orderings on exchangeable sequences, making incrementalization less effective because insertions and deletions shift indices.
\citet{MatheosLGRCM21} automate involutive MCMC (which can express many MCMC algorithms) for open-universe models, but have to deal with complicated and error-prone bookkeeping of indices into variable-length sequences.
By contrast, our approach of unordered collections with name-based access offers stable identifiers for elements of exchangeable sequences, simplifying the implementation and improving the effectiveness of incrementalization.
\paragraph{Limitations.} Long, straight-line programs made of many individually inexpensive operations---e.g., many large Bayes nets---do not benefit from incrementalization in our approach, which must still execute an incrementalized version of each line of such a program.
As it stands, our approach is also limited to loops over finite data structures.
Extending our implementation to handle general $\keyword{while}$ loops would be straightforward, requiring only small changes to our existing density and incrementalization transformations.
General recursion would be more challenging, as traces of recursive probabilistic programs would need to be assigned \emph{recursive trace types}, for which we would need to define new change representations and incrementalization rules.
Whether via recursion or while loops, the possibility of non-termination would complicate our semantics and correctness proofs (requiring, e.g., the use of quasi-Borel \textit{predomains} \citep{VakarKS19}).

\paragraph{Extensions.} We have focused on MCMC, but incremental computation could also benefit other Monte Carlo inference algorithms (see, e.g., the SMC pseudocode in \cref{fig:marquee:inference-algorithms}).
For {variational inference}, the fit is less natural: traces are independently sampled rather than incrementally modified. Our pipeline works by generating \textit{deterministic} density programs and incrementalizing them. However, it is common to \textit{stochastically estimate} densities that would be intractable to compute exactly~\citep{AndrieuR09}. It would be interesting to develop versions of incrementalization that work for stochastic computations, perhaps building on ideas from~\citet{Cusumano-TownerBGVM18}.

\clearpage

\section*{Data Availability Statement}

The artifact for this paper (consisting of the Julia implementation and benchmarks) is archived on Zenodo \citep{ZaiserCRML26}.

\begin{acks}
We would like to thank McCoy Becker, Marco Cusumano-Towner, Mathieu Huot, George Matheos, and Tan Zhi-Xuan for helpful conversations at various stages of this project.
We are also grateful to Cameron Freer and the anonymous reviewers for their helpful feedback, which improved the presentation of the paper.
\end{acks}

\bibliographystyle{ACM-Reference-Format}
\bibliography{literature.bib}

\par\bigskip\noindent{\small\normalfont Received 2025-11-14; accepted 2026-04-03\par}

\clearpage

\makeatletter
\edef\@mainBodyTotPages{\arabic{TotPages}}%
\AtEndDocument{%
  \immediate\write\@mainaux{%
    \string\newlabel{TotPages}{%
      {\@mainBodyTotPages}{\@mainBodyTotPages}{}{page.\@mainBodyTotPages}{}%
    }%
  }%
}%
\makeatother

\appendix

\newcommand{\catset}{\mathsf{Set}}
\newcommand{\catmeas}{\mathsf{Meas}}
\newcommand{\homset}[3]{#1(#2,#3)}
\newcommand{\A}{\mathcal{A}}
\newcommand{\J}{\mathcal{J}}
\newcommand{\mProb}{\mathcal{T}}
\newcommand{\mMeas}{\mathcal{P}}
\newcommand{\qbs}{\mathsf{Qbs}}
\newcommand{\pshqbs}{\texorpdfstring{\ensuremath{\nu\qbs}}{νQbs}}
\newcommand{\curry}{\mathsf{curry}}
\newcommand{\counting}{\mathsf{counting}}
\newcommand{\score}{\mathsf{score}}

\newcommand{\mur}{\mu_{\Omega}}

\newcommand{\llbr}[1]{\llbracket #1 \rrbracket}
\newcommand{\Qsem}[1]{\sem{#1}}
\newcommand{\Ssem}[1]{\sem{#1}}
\newcommand{\Csem}[1]{\sem{#1}}

\tableofcontents

\clearpage
\allowdisplaybreaks 

\section{Full syntax, typing rules, and semantics}

The list of constants in our calculi are given in \cref{fig:constants-full}, the syntactic sugar is given in \cref{fig:sugar}, the restrictions on ground types are given in \cref{fig:ground-types}, the typing rules are given in \cref{fig:type-system-full}, and the type and term semantics for $\inclang$ and $\corelang$ are given in \cref{fig:type-semantics-full,fig:term-semantics-full} respectively.
The semantics of $\genlang$ is presented in its own subsection (\cref{sec:gen-semantics}).
Note that we regard $\Dist[\kappa]{\tau}$ as syntactic sugar in the appendix.

\begin{figure}\footnotesize
  \begin{align*}
    &\text{\emph{Distributions} (only \genlang):} & \\
    &\quad \const{bernoulli} : \Real \to \Dist[\Real]{\Bool} &&\text{Bernoulli distribution} \\
    &\quad\const{normal}_n : \Real^n \times \Real^{n \times n} \to \Dist[{\Real^n\times\Real^{n\times n}}]{\Real^n} &&\text{$n$-dim. Gaussian with parameters } \mu, \Sigma \\
    &\quad\const{beta} : \Real \times \Real \to \Dist[{\Real \times \Real}]{\Real} &&\text{Beta distribution} \\
    &\quad\const{poisson} : \Real \to \Dist[\Real]{\Nat} &&\text{Poisson distribution} \\
    &\quad\const{geometric} : \Real \to \Dist[\Real]{\Nat} &&\text{Geometric distribution} \\
    &\quad\const{dirac}_{\sigma} : \sigma \to \Dist[\sigma]{\sigma} &&\text{Dirac distribution for } \sigma \in \{ \Nat, \NameSet \} \\
    &\quad\const{unif} : \NameSet \to \Dist[\NameSet]{\Name} &&\text{Uniform distribution on a finite set} \\
    &\quad\const{inverseWishart}_n : \Real \times \Real^{n \times n} \to \Dist[{\Real \times \Real^{n \times n}}]{\Real^{n \times n}} &&\text{Inverse Wishart distribution} \\
    &\quad\const{dirichlet} : \NameSet \times \Real \to \Dist[{\NameSet \times \Real}]{\NameMap{\Real}} &&\text{Dirichlet distribution} \\
    &\quad\const{multinomial} : \NameMap{\Real} \times \Nat \to \Dist[{\NameMap{\Real} \times \Nat}]{\List{\Name}} &&\text{Multinomial distribution} \\
    &\quad\const{fresh} : \Dist[\kappa]{\Nat} \to \Dist[\kappa]{\NameSet} &&\text{random-size set of fresh names} \\[4pt]
    &\text{\emph{List operations} (all calculi):} & \\
    &\quad\const{nil}_{\tau} : \List{\tau} &&\text{Empty list} \\
    &\quad\const{cons}_{\tau} : \tau \times \List{\tau} \to \List{\tau} &&\text{Construct a list} \\
    &\quad\const{get}_{\tau} : \Nat \times \List{\tau} \to \tau &&\text{Value at given index} \\
    &\quad\const{length}_{\tau} : \List{\tau} \to \Nat &&\text{Length of a list} \\
    &\quad\const{unzip}_{\tau, \tau'} : \List{(\tau \times \tau')} \to \List{\tau} \times \List{\tau'} &&\text{Splits list of pairs into pair of lists} \\
    &\quad\const{range} : \Nat \to \List{\Nat} &&\text{$[1, 2, \dots, n]$ for input } n \\[4pt]
    &\text{\emph{Map operations} (all calculi):} & \\
    &\quad\const{empty}_{\tau} : \NameMap{\tau} &&\text{Empty map} \\
    &\quad\const{insert}_{\tau} : \Name \times \tau \times \NameMap{\tau} \to \NameMap{\tau} &&\text{Insert into a map} \\
    &\quad\const{remove}_{\tau} : \Name \times \NameMap{\tau} \to \NameMap{\tau} &&\text{Remove from a map} \\
    &\quad\const{get}_{\tau} : \Name \times \NameMap{\tau} \to \tau &&\text{Value at given key} \\[4pt]
    &\text{\emph{Arithmetic} (all calculi):} & \\
    &\quad\const{-}_{\tau} : \tau \to \tau &&\text{for } \tau \in \{\Nat, \Real\} \\
    &\quad\const{+}_{\tau}, \const{-}_{\tau} : \tau \times \tau \to \tau &&\text{for } \tau \in \{\Nat, \Real\} \\
    &\quad\const{\cdot}, \const{/} : \Real \times \Real \to \Real &&\text{multiplication and division} \\
    &\quad\const{product} : \List{\Real} \to \Real &&\text{product of list elements} \\
    &\quad\const{==}_{\sigma} : \sigma \times \sigma \to \Bool &&\text{for ground types } \sigma \\[4pt]
    &\text{\emph{Other} (all):} & \\
    &\quad\const{ite}_{\sigma} : \Bool \times \sigma \times \sigma \to \sigma &&\text{if-then-else at ground types} \\
    &\quad\const{default}_{\tau} : \tau &&\text{Default value for out-of-bounds access} \\[4pt]
    &\text{\emph{Change-related primitives} (only \corelang; see \cref{fig:diff-helpers}):} & \\
    &\quad\same[\tau] : \Changed{\tau} &&\text{trivial change (only for some $\tau$)} \\
    &\quad\sameAs[\tau] : \tau \to \Changed{\tau} &&\text{trivial change at a certain value (for ground types)} \\
    &\quad\apply[\tau] : \tau \times \Changed{\tau} \to \tau &&\text{apply a change to a value (for ground types $\tau$)}
  \end{align*}
  \caption{List of constant symbols $(c : \typeOf{c}) \in \Constants$.}
  \label{fig:constants-full}
\end{figure}

\begin{figure}\footnotesize
  \begin{align*}
    \Dist[\sigma]{\tau} &\quad\leadsto\quad \ProbTy[\sigma]{\tau}{\Unit} & \text{(in \genlang)} \\
    \sigma \to \tau &\quad\leadsto\quad \sigma \clos[\Unit] \tau & \text{(in \genlang and \inclang)} \\
    \forUsing{x}{\mvar{xs}: \NameSet}{\dots}{t} &\quad\leadsto\quad \forSetUsing{x}{\const{dirac} \; \mvar{xs}}{\dots}{t} & \text{(in \genlang)} \\
    \forUsing{x}{\mvar{xs}: \List{\tau}}{\dots}{t} &\quad\leadsto\quad \forRangeWithUsing{i}{\const{dirac}(\const{length} \; \mvar{xs})}{z}{\unit \\
    &\qquad \qquad }{x := \mvar{xs}[i], \dots}{y \gets t; \return{(y, \unit)}} & \text{(in \genlang)} \\
    \forWith{x}{\mvar{xs}}{z}{s}{t} &\quad\leadsto\quad \forWithUsing{x}{\mvar{xs}}{z}{s}{}{t} \\
    \forUsing{x}{\mvar{xs}: \List{\tau}}{\dots}{t} &\quad\leadsto\quad \forWithUsing{x}{\mvar{xs}}{z}{\unit \\
    &\qquad \qquad }{\dots}{y := t; (y, \unit)} & \text{(in \inclang)} \\
    \forIn{x}{\mvar{xs}}{t} &\quad\leadsto\quad \forUsing{x}{\mvar{xs}}{}{t} \\
    \NameSet &\quad\leadsto\quad \NameMap{\Unit} \\
    x \sim s; t &\quad\leadsto\quad x \gets \sample{s} \at x ; t \\
    s[t] &\quad\leadsto\quad \const{get}(t, s) \qquad \text{(see \cref{fig:constants-full})} \\
    (x, y) := s; t &\quad\leadsto\quad z := s; x := z.1; y := z.2; t \quad \text{ where $z$ is fresh} \\
    (x, y) \gets s; t &\quad\leadsto\quad z \gets s; x := z.1; y := z.2; t \quad \text{ where $z$ is fresh} \\
    [t_1, \dots, t_n] &\quad\leadsto\quad \const{cons}(t_1, \dots \const{cons}(t_n, \const{nil})\dots) \\
    \{k_1 \mapsto v_1, \dots, k_n \mapsto v_n\} &\quad\leadsto\quad \const{insert}(k_1, v_1, \dots \const{insert}(k_n, v_n, \const{empty})\dots)
  \end{align*}
  \caption{Syntactic sugar}
  \label{fig:sugar}
\end{figure}

\begin{figure}
  \centering
  \begin{mathpar}

    \inferrule
      { }
      {\isground{\Bool}}

    \inferrule
      { }
      {\isground{\Nat}}

    \inferrule
      { }
      {\isground{\Real}}

    \inferrule
      { }
      {\isground{\Name}}

    \inferrule
      {\isground{\sigma} \\ \isground{\sigma'}}
      {\isground{\sigma \times \sigma'}}

    \inferrule
      {\isground{\sigma}}
      {\isground{\List{\sigma}}}

    \inferrule
      {\isground{\sigma}}
      {\isground{\NameMap{\sigma}}}

    \genhighlight{\inferrule
      {\isground{\sigma} \\ \istype{\kappa}}
      {\istype{\Dist[\kappa]{\sigma}}}}

    \genhighlight{
      \inferrule
        {\isground{\sigma} \\ \istype{\tau} \\ \istype{\kappa}}
        {\istype{\ProbTy[\kappa]{\sigma}{\tau}}}
    }

    \mixhighlight{\inferrule
      {\istype{\sigma} \\ \istype{\tau} \\ \istype{\kappa}}
      {\istype{\sigma \clos[\kappa] \tau}}
    }

    \mixhighlight{\inferrule
      {L \subseteq \Labels \text{ finite} \\ \isground{\sigma_l} \text{ for all $l \in L$}}
      {\isground{\Record{l : \sigma_l \mid l \in L}}}
    }
  \end{mathpar}
  \caption{Restrictions on ground types.}
  \label{fig:ground-types}
\end{figure}

\begin{figure}\footnotesize
  \centering
  \begin{mathpar}
    \inferrule{ }{\toType{\bullet} = \Unit}

    \inferrule
      {\toType{\Gamma} = \tau}
      {\toType{\Gamma, x : \tau'} = \tau \times \tau'}

    \inferrule
      { }
      {\tyjudg{\Gamma, x : \tau, \Gamma'}{x}{\tau}}
      (x \notin \Gamma')

    \inferrule
      {\tyjudg{\Gamma}{s}{\tau \times \tau'}}
      {\tyjudg{\Gamma}{s.1}{\tau}}

    \inferrule
      {\tyjudg{\Gamma}{s}{\tau \times \tau'}}
      {\tyjudg{\Gamma}{s.2}{\tau'}}

    \inferrule
      { }
      {\tyjudg{\Gamma}{c}{\typeOf{c}}}
      (c \in \Constants)

    \inferrule
      {\tyjudg{\Gamma}{s}{\tau} \\ \tyjudg{\Gamma, x: \tau}{t}{\tau'}}
      {\tyjudg{\Gamma}{x := s; t}{\tau'}}

    \inferrule
      {\tyjudg{\Gamma}{s}{\tau} \\ \tyjudg{\Gamma}{t}{\tau'}}
      {\tyjudg{\Gamma}{(s, t)}{\tau \times \tau'}}

    \\

    \genhighlight{\inferrule
      {\tyjudg{\Gamma, x : \tau}{t} {\tau'}}
      {\tyjudg{\Gamma}{\lam{x} t}{\tau \clos[\toType{\Gamma|_{\freeVars{\lam{x} t}}}] \tau'}}}

    \genhighlight{\inferrule
      {\tyjudg{\Gamma}{s}{\tau \clos[\sigma] \tau'} \\ \tyjudg{\Gamma}{t}{\tau}}
      {\tyjudg{\Gamma}{s \; t}{\tau'}}}

    \genhighlight{\inferrule
      {\tyjudg{\Gamma}{t}{\tau}}
      {\tyjudg{\Gamma}{\return{t}}{\ProbTy[\toType{\Gamma|_{\freeVars{t}}}]{\Unit}{\tau}}}}

    \genhighlight{\inferrule
      {\tyjudg{\Gamma}{t}{\Dist[\kappa]{\sigma}}}
      {\tyjudg{\Gamma}{\sample{t}}{\ProbTy[\kappa]{\sigma}{\sigma}}}}

      \genhighlight{\inferrule
      {\tyjudg{\Gamma}{t}{\ProbTy[\kappa]{\sigma}{\tau}}}
      {\tyjudg{\Gamma}{t \at \ell}{\ProbTy[\kappa]{\Record{\ell: \sigma}}{\tau}}}}

    \genhighlight{\inferrule
      {\tyjudg{\Gamma}{s}{\ProbTy[\kappa]{\Record{\ell_1: \sigma_1, \dots, \ell_m: \sigma_m}}{\tau}} \\
      \tyjudg{\Gamma, x : \tau}{t}{\ProbTy[\kappa']{\Record{\ell_1': \sigma_1', \dots, \ell_n': \sigma_n'}}{\tau'}} \\
      \{ \ell_1, \dots, \ell_m \} \cap \{ \ell_1', \dots, \ell_n' \} = \emptyset}
      {\tyjudg{\Gamma}{x \gets s; t}{\ProbTy[\toType{\Gamma|_{\freeVars{x \gets s; t}}}]{\Record{\ell_1: \sigma_1, \dots, \ell_m: \sigma_m, \ell_1': \sigma_1', \dots, \ell_n': \sigma_n'}}{\tau'}}}}

    \genhighlight{\inferrule
      {\tyjudg{\Gamma}{w}{\Dist[\kappa]{\Nat}} \\
      \tyjudg{\Gamma}{s}{\rho} \\
      \tyjudg{\Gamma, x: \Nat, z: \rho}{y_i[u_i]}{\theta_i} \\
      \tyjudg{\Gamma, x: \Nat, z: \rho, v_1: \theta_1, \dots, v_n: \theta_n}{t}{\ProbTy[\kappa']{\sigma}{(\tau \times \rho)}}}
      {\tyjudg{\Gamma}{\forRangeWithUsing{x}{w}{z}{s}{v_1 := y_1[u_1], \dots, v_n := y_n[u_n]}{t}}{\ProbTy[\toType{\Gamma|_{\freeVars{...}}}]{(\List \sigma)}{(\List{\tau})}}}}

    \genhighlight{\inferrule
      {\tyjudg{\Gamma}{\mvar{xs}}{\Dist[\kappa]{\NameSet}} \\
      \tyjudg{\Gamma, x: \Name}{y_i[u_i]}{\theta_i} \\
      \tyjudg{\Gamma, x: \Name, v_1: \theta_1, \dots, v_n: \theta_n}{t}{\ProbTy[\kappa']{\sigma}{\tau'}}}
      {\tyjudg{\Gamma}{\forSetUsing{x}{\mvar{xs}}{v_1 := y_1[u_1], \dots, v_n := y_n[u_n]}{t}}{\ProbTy[\toType{\Gamma|_{\freeVars{...}}}]{(\NameMap{\sigma})}{(\NameMap{\tau'})}}}}

    \inchighlight{\inferrule
      {\tyjudg{\Gamma, x: \tau}{t}{\tau'}}
      {\tyjudg{\Gamma}{\lam{x} t}{\tau \clos[\toType{\Gamma|_{\freeVars{\lam{x} t}}}] \tau'}}}

    \inchighlight{\inferrule
      {\tyjudg{\Gamma}{s}{\tau \clos[\sigma] \tau'} \\ \tyjudg{\Gamma}{t}{\tau}}
      {\tyjudg{\Gamma}{s \; t}{\tau'}}}

    \inchighlight{\inferrule
      {\tyjudg{\Gamma}{\mvar{xs}}{\List \tau} \\
      \tyjudg{\Gamma}{s}{\rho} \\
      \tyjudg{\Gamma, x: \tau, z: \rho}{y_i[u_i]}{\theta_i} \\
      \tyjudg{\Gamma, x: \tau, z: \rho, v_1: \theta_1, \dots, v_n: \theta_n}{t}{\tau' \times \rho}}
      {\tyjudg{\Gamma}{\forWithUsing{x}{\mvar{xs}}{z}{s}{v_1 := y_1[u_1], \dots, v_n := y_n[u_n]}{t}}{\List{\tau'}}}}

    \inchighlight{\inferrule
      {\tyjudg{\Gamma}{\mvar{xs}}{\NameSet} \\
      \tyjudg{\Gamma, x: \Name}{y_i[u_i]}{\theta_i} \\
      \tyjudg{\Gamma, x: \Name, v_1: \theta_1, \dots, v_n: \theta_n}{t}{\tau'}}
      {\tyjudg{\Gamma}{\forUsing{x}{\mvar{xs}}{v_1 := y_1[u_1], \dots, v_n := y_n[u_n]}{t}}{\NameMap{\tau'}}}}

    \mixhighlight{\inferrule
      {\tyjudg{\Gamma}{t_i}{\tau_i} \text{ for } i = 1, \dots, n}
      {\tyjudg{\Gamma}{\recordLit{\ell_1: t_1, \dots, \ell_n: t_n}}{\Record{\ell_1: \tau_1, \dots, \ell_n: \tau_n}}}}

    \mixhighlight{\inferrule
      {\tyjudg{\Gamma}{t}{\Record{\ell_1: \tau_1, \dots, \ell_n: \tau_n}}}
      {\tyjudg{\Gamma}{t.\ell_i}{\tau_i}}}

    \mixhighlight{\inferrule
      {\tyjudg{\Gamma}{t}{\Record{\ell_1: \tau_1, \dots, \ell_n: \tau_n}} \\
      \{i_1 < \dots < i_k\} \subseteq \{1, \dots, n\}}
      {\tyjudg{\Gamma}{\restrict{t}{\{\ell_{i_1}, \dots, \ell_{i_k}\}}}{\Record{\ell_{i_1}: \tau_{i_1}, \dots, \ell_{i_k}: \tau_{i_k}}}}}

    \corehighlight{\inferrule
      {\tyjudg{\Gamma, x : \tau}{t}{\tau'}}
      {\tyjudg{\Gamma}{\lam{x} t}{\tau \to \tau'}}}

    \corehighlight{\inferrule
      {\tyjudg{\Gamma}{s}{\tau \to \tau'} \\ \tyjudg{\Gamma}{t}{\tau}}
      {\tyjudg{\Gamma}{s \; t}{\tau'}}}

    \corehighlight{\inferrule
      {\tyjudg{\Gamma}{s}{\sigma} \\ \tyjudg{\Gamma}{t}{\tau \times \sigma \to \tau' \times \sigma}}
      {\tyjudg{\Gamma}{\mkUpd s t}{\Upd{\tau}{\tau'}}}}

    \corehighlight{\inferrule
      {\tyjudg{\Gamma}{s}{\Upd{\tau}{\tau'}} \\ \tyjudg{\Gamma}{t}{\tau}}
      {\tyjudg{\Gamma}{\applyUpd{s}{t}}{\tau' \times \Upd{\tau}{\tau'}}}}

    \corehighlight{\inferrule
      {\tyjudg{\Gamma}{t_j}{\tau_{i_j}} \text{ for } j = 1, \dots, k}
      {\tyjudg{\Gamma}{\recordLit{\ell_{i_1}: t_1, \dots, \ell_{i_k}: t_k}}{\SubRecord{\ell_1: \tau_1, \dots, \ell_n: \tau_n}}}
      (i_j\text{'s are distinct indices in } \{1, \dots, n\})
    }

    \corehighlight{\inferrule
      {\tyjudg{\Gamma}{t}{\SubRecord{\ell_1: \tau_1, \dots, \ell_n: \tau_n}}}
    {\tyjudg{\Gamma}{t.\ell_i}{\tau_i}}}

    \corehighlight{\inferrule
      {\tyjudg{\Gamma}{t}{\SubRecord{\ell_1: \tau_1, \dots, \ell_n: \tau_n}}}
      {\tyjudg{\Gamma}{\has[\ell_i]{t}}{\Bool}}}

    \corehighlight{\inferrule
      {\tyjudg{\Gamma}{t}{\SubRecord{\ell_1: \tau_1, \dots, \ell_n: \tau_n}} \\
      \{i_1 < \dots < i_k\} \subseteq \{1, \dots, n\}}
      {\tyjudg{\Gamma}{\restrict{t}{\{\ell_{i_1}, \dots, \ell_{i_k}\}}}{\SubRecord{\ell_{i_1}: \tau_{i_1}, \dots, \ell_{i_k}: \tau_{i_k}}}}}
  \end{mathpar}
  \caption{Typing Rules. We write $\toType{\Gamma}$ for the product of the types appearing in $\Gamma$; $\Gamma|_S$ for the restriction of $\Gamma$ to the variables in $S$; and $\freeVars{t}$ for the set of free variables in $t$.}
  \label{fig:type-system-full}
\end{figure}

\begin{figure}\footnotesize
\begin{align*}
  \sem{\Bool} &= \{ 0, 1 \} \\
  \sem{\Nat} &= \NN \\
  \sem{\Real} &= \RR \\
  \sem{\Name} &= [0, 1] \\
  \sem{\sigma \times \tau} &= \sem{\sigma} \times \sem{\tau} \\
  \sem{\List{\sigma}} &= \bigcup_{n \in \NN} \sem{\sigma}^n \\
  \sem{\NameMap{\tau}} &= \{ f : A \to \sem{\tau} \mid A \subseteq \sem{\Name} \text{ finite} \} \\
  \sem{\Record{\ell_1: \tau_1, \dots, \ell_n: \tau_n}} &= \{ r : \{ \ell_1, \dots, \ell_n \} \to \bigcup_{i=1}^n \sem{\tau_i} \mid r(\ell_i) \in \sem{\tau_i} \} \\
  \sem{\SubRecord{\ell_1: \tau_1, \dots, \ell_n: \tau_n}} &= \{ r : S \to \bigcup_{i=1}^n \sem{\tau_i} \mid S \subseteq \{\ell_1, \dots, \ell_n\}, r(\ell_i) \in \sem{\tau_i} \} \\
  \sem{\tau \clos[\kappa] \tau'} &= \sem{\kappa} \times (\sem{\kappa} \times \sem{\tau} \To \sem{\tau'}) \\
  \sem{\tau \to \tau'} &= \sem{\tau} \To \sem{\tau'} \\
  \sem{\Upd{\tau}{\tau'}} &= \sem{\tau}^+ \To \sem{\tau'} \qquad \text{ where } S^+ = \bigcup_{n = 1}^\infty S^n \\
  \sem{\Gamma} &= \{ \gamma : \{ x_1, \dots, x_n \} \to \bigcup_{i=1}^n \sem{\tau_i} \mid \gamma(x_i) \in \sem{\tau_i} \} \text{ for contexts } \Gamma = x_1: \tau_1, \dots, x_n: \tau_n
\end{align*}
\caption{Type semantics for $\inclang$ and $\corelang$.}
\label{fig:type-semantics-full}
\end{figure}

\begin{figure}
\begin{align*}
  \sem{x}(\gamma) &= \gamma[x] \\
  \sem{c}(\gamma) &= \text{semantics of primitive } c \\
  \sem{\falseLit}(\gamma) &= 0 \\
  \sem{\trueLit}(\gamma) &= 1 \\
  \sem{x := s; t}(\gamma) &= \sem{t}(\gamma[x \mapsto \sem{s}(\gamma)]) \\
  \sem{\lam{x} t}(\gamma) &= \big((\gamma[x_1], \dots, \gamma[x_n]), \\
  &\qquad ((v_1, \dots, v_n), v) \mapsto \sem{t}(\gamma[x_1 \mapsto v_1, \dots, x_n \mapsto v_n, x \mapsto v])\big) \\
  &\qquad\text{ where } \Gamma|_{\freeVars{\lam{x} t}} = x_1: \tau_1, \dots, x_n: \tau_n \text{ in } \inclang \\
  \sem{\lam{x} t}(\gamma) &= v \mapsto \sem{t}(\gamma[x \mapsto v]) \text{ in } \corelang \\
  \sem{s \; t}(\gamma) &= f(\mvar{env}, \sem{t}(\gamma)) \text{ where } \sem{s}(\gamma) = (\mvar{env}, f) \text{ in } \inclang \\
  \sem{s \; t}(\gamma) &= \sem{s}(\gamma)(\sem{t}(\gamma)) \text{ in } \corelang \\
  \sem{(s, t)}(\gamma) &= (\sem{s}(\gamma), \sem{t}(\gamma)) \\
  \sem{t.1}(\gamma) &= \pi_1(\sem{t}(\gamma)) \\
  \sem{t.2}(\gamma) &= \pi_2(\sem{t}(\gamma)) \\
  \sem{\ite{s}{t}{t'}}(\gamma) &= \begin{cases}
    \sem{t}(\gamma) & \text{if } \sem{s}(\gamma) = \sem{\trueLit} \\
    \sem{t'}(\gamma) & \text{if } \sem{s}(\gamma) = \sem{\falseLit}
\end{cases} \\
  \sem{t.\ell}(\gamma) &= \begin{cases} \sem{t}(\gamma)(\ell) & \text{if } \ell \in \domain(\sem{t}(\gamma)) \\ \default[\tau] & \text{otherwise, where $t.\ell : \tau$} \end{cases} \\
  \sem{\restrict{t}{\{\ell_1, \dots, \ell_k\}}}(\gamma) &= \ell_i \mapsto \sem{t}(\gamma)(\ell_i) \text{ for } i = 1, \dots, k \\
  \sem{\has[\ell]{t}}(\gamma) &= \begin{cases}
    \sem{\trueLit} & \text{if } \ell \in \domain(\sem{t}(\gamma)) \\
    \sem{\falseLit} & \text{otherwise}
  \end{cases} \\
  \sem{\mkUpd{s}{t}}(\gamma) &= \mkUpdSem[\sem{\sigma},\sem{\tau},\sem{\tau'}](\sem{s}(\gamma), \sem{t}(\gamma)) \text{ where }  \\
  &\qquad \mkUpdSem[S,T,T'] \in S \times (T \times S \To T' \times S) \To T^+ \To T' \\
  &\qquad \mkUpdSem[S,T,T'](c, f)((x_1, \dots, x_n) \in T^n) = y_n \text{ for all } n \in \NN \\
  &\qquad \quad \text{where } c_1 = c \text{ and } (y_i, c_{i+1}) = f(x_i, c_i) \text{ for } i \in \NN \\
  \sem{\applyUpd{s}{t}}(\gamma) &= \stepUpd[\sem{\tau}, \sem{\tau'}](\sem{s}(\gamma), \sem{t}(\gamma)) \text{ where } s : \Upd{\tau}{\tau'}, \\
  &\qquad \stepUpd[T, T'] \in (T^+ \To T') \times T \To T' \times (T^+ \To T') \\
  &\qquad \stepUpd[T, T'](u, x) = (u(x \in T^1), (\mvar{xs} \in T^n) \mapsto u((x, \mvar{xs}) \in T^{n+1})) \\
  \sem{\scriptsize\begin{aligned}&\forWithUsing{x}{\mvar{xs}}{z}{s \\&}{v_1 := y_1[u_1], \\&\quad \dots, v_n := y_n[u_n]}{t}\end{aligned}}(\gamma) &= (t_1, \dots, t_N) \text{ where }  \\
  &\qquad \sem{\mvar{xs}}(\gamma) = (a_1, \dots, a_N), \; z_0 = \sem{s}(\gamma), \text{ and for } j = 1, \dots, N, \\
  &\qquad (t_j, z_j) = \sem{v_1 := y_1[u_1]; \dots; v_n := y_n[u_n]; t}(\gamma[x \mapsto a_j, z \mapsto z_{j-1}]) \\
  \sem{\scriptsize\begin{aligned}&\forUsing{x}{\mvar{xs}}{v_1 := y_1[u_1], \\& \quad \dots, v_n := y_n[u_n]}{t}\end{aligned}}(\gamma) &= \{ a \mapsto t_a \mid a \in \domain(\sem{\mvar{xs}}(\gamma)) \} \text{ where, for each } a \in \domain(\sem{\mvar{xs}}(\gamma)) \text{,} \\
  &\qquad t_a = \sem{v_1 := y_1[u_1]; \dots; v_n := y_n[u_n]; t}(\gamma[x \mapsto a])
\end{align*}
\caption{Term semantics for $\inclang$ and $\corelang$.}
\label{fig:term-semantics-full}
\end{figure}

\clearpage 

\subsection{Denotational Semantics for \genlang{} in \pshqbs}
\label{sec:gen-semantics}

The main paper gives a semantics of $\genlang$ in the category of quasi-Borel spaces. Some additional structure is useful for reasoning about names. We develop this extended semantic setting here.

\paragraph{Intuition.} The key idea is to define, for each type $\tau$ in $\genlang$, a family of subspaces $\sem{\tau}_U \subseteq \sem{\tau}$, indexed by countable sets $U \subseteq [0,1]$.\footnote{We also allow $U=[0,1]$, with $\sem{\tau}_U = \sem{\tau}$. When $U \subseteq V$, we ensure $\sem{\tau}_U \subseteq \sem{\tau}_V$, and for all $U,V$, $\sem{\tau}_{U \cap V} = \sem{\tau}_U \cap \sem{\tau}_V$. Such indexed families form a cartesian-closed category $\pshqbs$, with morphisms $f : X \to Y$ satisfying $f(X_U)\subseteq Y_U$; we develop this formally below.} Intuitively, $\sem{\tau}_U$ contains only those values of $\sem{\tau}$ that use \textit{at most} the names in $U$. Thus, types that do not use names (e.g., $\Bool$ and $\Real$) have $\sem{\sigma}_U=\sem{\sigma}$ for all $U$, whereas $\sem{\Name}_U=U$. Products and other containers use names at most in $U$ if all their elements do. Given two indexed families $X_U \subseteq X$ and $Y_U \subseteq Y$, we can define the indexed family $(X \Rightarrow Y)_U \subseteq X \Rightarrow Y$ by $\{f \in X \Rightarrow Y \mid \forall \text{ countable } V \subseteq [0,1],\, \forall x \in X_V, f(x) \in Y_{U \cup V}\}$. We use this construction to interpret function types: it encodes that a function uses a name if it can return values that use that name even on inputs that do not use the name. Given an indexed family $X_U \subseteq X$, we can also construct an indexed family $(\mathsf{Prob}~X)_U \subseteq \mathsf{Prob}~X$, as follows. First, following~\citet{DashKPS23}, define $\Omega \coloneq [0,1]^{\mathbb{N}^*}$ to be the quasi-Borel space of rose trees, which admits a uniform probability measure $\mu_\Omega$. Then define the indexed family of subsets $\Omega \supseteq \Omega_U \coloneq U^{\mathbb{N}^*}$. Then $(\mathsf{Prob}~X)_U \subseteq \mathsf{Prob}~X$ is the set of all probability measures $\mu = g_*\mu_\Omega$ that arise as pushforwards by some $g \in (\Omega \Rightarrow X)_U$. Intuitively, a probability measure uses a name if the event that the outcome uses that name occurs with positive probability. Probability measures do not use the \textit{fresh} names they generate, which are continuous samples and each occur with probability 0. Together with the function space definition, the $\mathsf{Prob}$ construction lets us define $\sem{\ProbTy{\sigma}{\tau}}_U = \{(\mu, f) \mid f \in \sem{\sigma \to \tau}_U, \mu \in (\mathsf{Prob}~\sem{\sigma})_U\}$.

We develop this perspective more formally below, showing how these indexed families of quasi-Borel spaces arise as the objects of a particular functor category.

\begin{definition}[$\pshqbs$]
  $\pshqbs$ is the full subcategory of $[\A, \qbs]$
  containing functors that preserve pullbacks,
  where $\qbs$ is the category of quasi-Borel spaces \cite{HeunenKSY17}
  and $\A$ is the category whose objects are countable subsets  of the interval $[0,1]$, as well as the interval $[0, 1]$ itself, and whose morphisms are given by subset inclusion.
\end{definition}

\begin{proposition}[$\A$ is bicartesian]
  The category $\A$ is finitely complete and finitely cocomplete.
\end{proposition}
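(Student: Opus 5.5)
The category $\A$ is a preorder: objects are the countable subsets of $[0,1]$ together with $[0,1]$ itself, and there is at most one morphism $U \to V$, present exactly when $U \subseteq V$. In a preorder, limits are greatest lower bounds and colimits are least upper bounds. So the plan is to show that $\A$ has binary meets, binary joins, a top and a bottom element. Finite (co)completeness then follows, since equalizers and coequalizers of parallel pairs are trivial in a thin category: both arrows coincide, so the identity serves as both.

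\textbf{Terminal and initial objects.} The terminal object is $[0,1]$, since every object is a subset of $[0,1]$. The initial object is $\emptyset$, which is countable and hence an object, and is a subset of every object.

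\textbf{Binary products.} I take the product of $U$ and $V$ to be $U \cap V$, with the inclusions as projections. I must check that $U \cap V$ is an object. If either $U$ or $V$ is countable, then $U \cap V$ is countable. If both equal $[0,1]$, then $U \cap V = [0,1]$. For the universal property: any $W$ with $W \subseteq U$ and $W \subseteq V$ satisfies $W \subseteq U \cap V$, and uniqueness of the mediating map is automatic in a preorder.

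\textbf{Binary coproducts.} I take the coproduct of $U$ and $V$ to be $U \cup V$. If both are countable, then $U \cup V$ is countable. If either is $[0,1]$, then $U \cup V = [0,1]$. So in all cases $U \cup V$ is an object. Any $W$ containing both $U$ and $V$ contains $U \cup V$, which gives the universal property.

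\textbf{Assembling the result.} With a terminal object, binary products and (trivial) equalizers, $\A$ has all finite limits; dually, the initial object, binary coproducts and (trivial) coequalizers give all finite colimits. It is also worth noting that pullbacks of $U \subseteq W \supseteq V$ are simply $U \cap V$, since this is what the pullback-preservation condition in the definition of $\pshqbs$ refers to.

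The only step needing any care is closure of the object class under $\cap$ and $\cup$, precisely because $[0,1]$ is the one uncountable object. The case split above handles it, so I expect no real obstacle.
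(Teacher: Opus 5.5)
Your proof is correct and takes essentially the same approach as the paper: $[0,1]$ is terminal, $\emptyset$ is initial, intersection gives the limits and union the colimits. The paper invokes ``terminal object plus pullbacks,'' while you use ``terminal object plus binary products plus trivial equalizers,'' and these coincide in a thin category. Your explicit check that the objects of $\mathcal{A}$ are closed under $\cap$ and $\cup$ is a detail the paper leaves implicit.
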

\begin{proof}
  Clearly, $[0,1]$ is terminal in $\A$,  and pullbacks in $\A$ are given by set intersection.
  Hence, $\A$ is finitely complete \cite[Exercise III.4.10]{MacLane98}.
  Dually, $\emptyset$ is initial in $\A$,
  and pushouts in $\A$ are given by set union.
  This means that $\A$ is finitely cocomplete.
\end{proof}

\begin{proposition}[$\pshqbs$ is complete] \label{prop:pshqbs-complete}
  The category $\pshqbs$ is (small) complete,
  with limits given pointwise.
  Moreover, the inclusion functor $\pshqbs \hookrightarrow [\A,\qbs]$
  preserves these limits.
\end{proposition}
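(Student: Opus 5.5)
The plan is to reduce the statement to two standard facts: limits in a functor category $[\A,\qbs]$ are computed pointwise, and pullback-preserving functors are closed under pointwise limits because limits commute with limits. First, $\qbs$ is complete (it is a quasitopos; limits are formed on underlying sets with the evident quasi-Borel structure). Hence for any small diagram $D : \J \to \pshqbs$, composing with the inclusion gives $D : \J \to [\A,\qbs]$. Its limit exists and is given pointwise by $(\lim D)(U) = \lim_{j} D_j(U)$, with the action on inclusions $U \subseteq V$ induced by the universal property. This is the standard result that $[\A,\mathcal{C}]$ inherits limits pointwise from $\mathcal{C}$.

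The main step is to show that $L := \lim_j D_j$ again preserves pullbacks, so that it lies in the full subcategory $\pshqbs$. Take a pullback square in $\A$. By the earlier proposition that $\A$ is bicartesian, it has the form $U\cap V \to U, V \to W$ with $U,V \subseteq W$. Each $D_j$ sends this square to a pullback, so $D_j(U\cap V) \cong D_j(U)\times_{D_j(W)} D_j(V)$, naturally in $j$. Applying $\lim_j$ and using that limits commute with limits (both are limits over the product category $\J \times \{\cdot\to\cdot\leftarrow\cdot\}$ in $\qbs$), we get $L(U\cap V) \cong L(U)\times_{L(W)} L(V)$. One must check that this iso is the canonical comparison map. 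That follows because both maps are induced by the same cone components, so uniqueness in the universal property identifies them. I expect this naturality bookkeeping to be the only delicate point. If it becomes awkward, I would argue concretely instead: compute everything on underlying sets, where pullbacks and limits in $\qbs$ have underlying sets computed in $\catset$, and note that the quasi-Borel structures on both sides are the initial ones induced by the projections.

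Finally, since $\pshqbs$ is a full subcategory containing $L$ and the limiting cone, the cone is also limiting in $\pshqbs$. Any cone in $\pshqbs$ is a cone in $[\A,\qbs]$, so it factors uniquely through $L$, and that factorization is a morphism of $\pshqbs$ by fullness. Thus $\pshqbs$ is complete, its limits are pointwise, and the inclusion preserves them: the inclusion sends the chosen limit to the pointwise limit in $[\A,\qbs]$, which is the limit there.
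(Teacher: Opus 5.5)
Your proposal is correct and follows essentially the same route as the paper. Both compute limits pointwise in $[\mathcal{A},\mathsf{Qbs}]$ using completeness of $\mathsf{Qbs}$, use interchange of limits to show the pointwise limit still preserves pullbacks, and conclude by fullness. Your explicit check that the resulting isomorphism is the canonical comparison map, and your spelled-out fullness argument, are small refinements that the paper's chain of isomorphisms leaves implicit.
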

\begin{proof}
  Consider a pullback diagram $F: (\gets\cdot\to) \to \A$
  and a diagram $G: \J \to [\A, \qbs]$,
  where $\J$ is a small category
  and $G(j)$ is a pullback-preserving functor in $[\A,\qbs]$ for all $j$ in $\J$.
  Now, define $H = \lim_{j \in \J} G(j)$.
  We may then compute:
  \begin{align*}
    H\left(\lim_{X \in \gets\cdot\to} F(X)\right)
    & = \left(\lim_{j \in \J} G(j)\right)\left(\lim_{X \in \gets\cdot\to} F(X)\right) \\
    & \cong \lim_{j \in \J} G(j)\left(\lim_{X \in \gets\cdot\to} F(X)\right) & \textrm{limits computed pointwise \cite[\S I]{MacLaneM94}} \\
    & \cong \lim_{j \in \J} \lim_{X \in \gets\cdot\to} G(j)(F(X)) & G(j) \textrm{ preserves pullbacks} \\
    & \cong \lim_{X \in \gets\cdot\to} \lim_{j \in \J} G(j)(F(X)) & \textrm{interchange of limits \cite[\S IX]{MacLane98}} \\
    & \cong \lim_{X \in \gets\cdot\to} \left(\lim_{j \in \J} G(j)\right)(F(X)) & \textrm{limits computed pointwise} \\
    & = \lim_{X \in \gets\cdot\to} H(F(X))
  \end{align*}
  Note,
  limits in $[\A,\qbs]$
  are computed pointwise
  because $\qbs$ is (small) complete \cite{HeunenKSY17,MatacheMS22,BaezH11}.
  Thus, $H: \A \to \qbs$ preserves pullbacks.
  This means that $\pshqbs$
  is complete
  and inherits all limits
  from $[\A,\qbs]$.
\end{proof}

\begin{proposition}
  There is a faithful functor $\mathcal{C} : \pshqbs \to \qbs$
  that maps each functor $F: \A \to \qbs$ to $F[0,1]$
  and each natural transformation $\alpha: F \to G$ to $\alpha_{[0,1]} : F[0,1] \to G[0,1]$.
\end{proposition}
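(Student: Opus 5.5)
The functor $\mathcal{C}$ is evaluation at the terminal object $[0,1]$ of $\A$, so I would check functoriality first and then faithfulness. \emph{Functoriality} is immediate. On objects, $\mathcal{C}(F) = F[0,1]$ is a quasi-Borel space. On morphisms, a natural transformation $\alpha : F \to G$ in $\pshqbs$ has components in $\qbs$, so $\alpha_{[0,1]} : F[0,1] \to G[0,1]$ is a morphism of $\qbs$. Composition and identities in the functor category $[\A,\qbs]$ are computed componentwise, and $\pshqbs$ is a full subcategory. Hence $(\beta\circ\alpha)_{[0,1]} = \beta_{[0,1]}\circ\alpha_{[0,1]}$ and $(\id[F])_{[0,1]} = \id[F[0,1]]$. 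Equivalently, $\mathcal{C}$ is the restriction to $\pshqbs$ of the evaluation functor $\mathrm{ev}_{[0,1]} : [\A,\qbs] \to \qbs$.

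\emph{Faithfulness} is the substantive part. Let $\alpha,\beta : F \to G$ with $\alpha_{[0,1]} = \beta_{[0,1]}$; I need $\alpha_U = \beta_U$ for every object $U$ of $\A$. Since $[0,1]$ is terminal, there is a unique morphism $i_U : U \to [0,1]$. Naturality gives $G(i_U)\circ\alpha_U = \alpha_{[0,1]}\circ F(i_U) = \beta_{[0,1]}\circ F(i_U) = G(i_U)\circ\beta_U$. So it suffices to show that $G(i_U)$ is a monomorphism in $\qbs$.

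This is where pullback preservation is used, and it is the main (if modest) obstacle. In $\A$ every morphism is an inclusion, hence a monomorphism. Concretely, the square with $U$ in three corners, identity legs, and $i_U$ to $[0,1]$ is a pullback in $\A$, because pullbacks are intersections and $U \cap U = U$. Because $G$ preserves pullbacks, the square with $G(U)$ in three corners, identity legs, and $G(i_U)$ to $G[0,1]$ is a pullback in $\qbs$. A morphism whose kernel pair is given by the identities is a monomorphism. Explicitly, if $G(i_U)\circ a = G(i_U)\circ b$, the pullback property yields a unique mediating map $m$ with $\id\circ m = a$ and $\id\circ m = b$, so $a = b$.

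Therefore $G(i_U)$ can be cancelled on the left, giving $\alpha_U = \beta_U$ for all $U$, so $\alpha = \beta$ and $\mathcal{C}$ is faithful. The remaining verifications are the routine functoriality checks above and the fact, already established when showing $\A$ is finitely complete, that pullbacks in $\A$ are intersections.
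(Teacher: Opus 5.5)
Your proof is correct and follows the paper's argument. Naturality along the inclusion $U \subseteq [0,1]$ reduces faithfulness to left-cancelling $G(U \subseteq [0,1])$, which is monic because $G$ preserves pullbacks; the only difference is that you verify this monicity directly via the kernel-pair square, whereas the paper cites the standard fact that pullback-preserving functors preserve monomorphisms.
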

\begin{proof}
  $\mathcal{C}$ clearly forms a functor.
  To show that $\mathcal{C}$ is faithful,
  suppose that $\alpha_{[0,1]} = \beta_{[0,1]}$
  for two natural transformations $\alpha, \beta: F \to G$.
  We aim to show that this implies $\alpha = \beta$---or, equivalently,
  that $\alpha_U = \beta_U$ for all $U$ in $\A$.
  Because $\alpha$ and $\beta$ are natural, the following two diagrams commute:
  \begin{center}
    \begin{tikzpicture}[node distance=2cm]
      \node (0) {$F(U)$};
      \node[right of=0] (1) {$F[0,1]$};
      \node[below of=0] (2) {$G(U)$};
      \node[below of=1] (3) {$G[0,1]$};

      \path [>->] (0) edge node[above] {$F(\subseteq)$} (1)
        (2) edge node[below] {$G(\subseteq)$} (3);
      \path [->] (0) edge node[left] {$\alpha_U$} (2)
        (1) edge node[right] {$\alpha_{[0,1]}$} (3);
    \end{tikzpicture}
    \hspace*{1cm}
    \begin{tikzpicture}[node distance=2cm]
      \node (0) {$F(U)$};
      \node[right of=0] (1) {$F[0,1]$};
      \node[below of=0] (2) {$G(U)$};
      \node[below of=1] (3) {$G[0,1]$};

      \path [>->] (0) edge node[above] {$F(\subseteq)$} (1)
        (2) edge node[below] {$G(\subseteq)$} (3);
      \path [->] (0) edge node[left] {$\beta_U$} (2)
        (1) edge node[right] {$\beta_{[0,1]}$} (3);
    \end{tikzpicture}
  \end{center}
  for any $U$ in $\A$.
  The above diagrams then imply that:
  \begin{equation*}
    G(\subseteq) \circ \alpha_U = \alpha_{[0,1]} \circ F(\subseteq) = \beta_{[0,1]} \circ F(\subseteq) = G(\subseteq) \circ \beta_U
  \end{equation*}
  Since $F$ and $G$ preserve pullbacks, they also preserve monomorphisms \cite[\S II]{MacLaneM94}.
  In particular,
  $G(\subseteq)$ is monic.
  Hence, we may conclude that $\alpha_U = \beta_U$.
\end{proof}

\begin{corollary}[$\pshqbs$ is concrete] \label{cor:pshqbs-concrete}
  There is a faithful functor $\_ : \pshqbs \to \catset$
  that maps each functor $F : \A \to \qbs$ to $|F[0,1]|$
  and each natural transformation $\alpha: F \to G$
  to $|\alpha_{[0,1]}| : |F[0,1]| \to |G[0,1]|$.
  In other words, the category $\pshqbs$ is concrete.
\end{corollary}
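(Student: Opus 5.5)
The plan is to obtain the functor as the composite of the faithful functor $\mathcal{C} : \pshqbs \to \qbs$ from the preceding proposition with the forgetful functor $|{-}| : \qbs \to \catset$. This composite is exactly the functor described in the statement: on objects it sends $F$ to $|F[0,1]|$, and on morphisms it sends $\alpha$ to $|\alpha_{[0,1]}|$. Functoriality is immediate, since a composite of functors is a functor. It therefore remains only to check faithfulness, which follows once we know that each of the two factors is faithful, because composites of faithful functors are faithful.

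Faithfulness of $\mathcal{C}$ is exactly the preceding proposition. For the forgetful functor $|{-}| : \qbs \to \catset$, I would recall the definitions from \citet{HeunenKSY17}. A quasi-Borel space is a set $X$ equipped with a collection $M_X$ of random elements $\RR \to X$ satisfying certain closure conditions. A morphism $f : X \to Y$ is a set-function that maps random elements to random elements, i.e.\ $f \circ \alpha \in M_Y$ for all $\alpha \in M_X$. So a morphism is literally its underlying function. Consequently, two morphisms with the same underlying function are equal, which is exactly faithfulness of $|{-}|$.

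Combining the two facts, suppose $|\alpha_{[0,1]}| = |\beta_{[0,1]}|$. Faithfulness of the forgetful functor gives $\alpha_{[0,1]} = \beta_{[0,1]}$, and faithfulness of $\mathcal{C}$ then gives $\alpha = \beta$. Hence the composite is faithful, and $\pshqbs$ is concrete by definition.

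I do not expect any step to be a real obstacle. The only point needing care is the faithfulness of $|{-}|$ on $\qbs$, which is a direct consequence of how morphisms of quasi-Borel spaces are defined.
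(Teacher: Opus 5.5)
Your proposal is correct and matches the paper's proof: the functor is the composite of the faithful functor $\mathcal{C} : \pshqbs \to \qbs$ from the preceding proposition with the forgetful functor $\qbs \to \catset$, and a composite of faithful functors is faithful. The only minor difference is that you derive faithfulness of the forgetful functor directly from the definition of quasi-Borel morphisms as structure-preserving functions, whereas the paper cites this fact via the concrete-sheaf description of quasi-Borel spaces (Baez--Hoffnung).
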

\begin{proof}
  By the previous proposition, the functor
  that maps each $F: \A \to \qbs$ to $F[0,1]$ is faithful.
  It is also well-known that mapping each quasi-Borel space $(X, M_X)$
  to $X$---i.e., mapping the corresponding concrete sheaf $X: \mathsf{Sbs} \to \catset$ to $X(1)$---also forms a faithful functor \cite[Proposition 28]{BaezH11}.
  Hence, their composition is faithful.
\end{proof}

\begin{proposition}[Coproducts computed pointwise]
  Consider objects $F,G: \A \to \qbs$ in $\pshqbs$.
  Their coproduct $F + G: \A \to \qbs$
  can be computed pointwise:
  \begin{equation*}
    (F + G)(U) = F(U) + G(U)
  \end{equation*}
\end{proposition}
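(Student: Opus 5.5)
The plan is to show that the pointwise functor $H := U \mapsto F(U) + G(U)$ (with $H(U \subseteq V) = F(U\subseteq V) + G(U \subseteq V)$) is an object of $\pshqbs$, i.e.\ preserves pullbacks. Then we show that it has the universal property of the coproduct inside $\pshqbs$. The second part is cheap. Coproducts in the functor category $[\A,\qbs]$ are computed pointwise because $\qbs$ has coproducts (disjoint unions). Since $\pshqbs$ is a \emph{full} subcategory, any coproduct in $[\A,\qbs]$ whose apex lies in $\pshqbs$ is automatically a coproduct in $\pshqbs$. The injections $F \to H$ and $G \to H$ are the pointwise coprojections, and their naturality is immediate.

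For pullback preservation, the only nontrivial pullbacks in $\A$ are squares $U \cap V \to U, V \to W$ with $U, V \subseteq W$; more generally, the pullback of $U \to W \leftarrow V$ is $U \cap V$. We must show that
\[ F(U\cap V) + G(U \cap V) \to (F(U)+G(U)) \times_{F(W)+G(W)} (F(V)+G(V)) \]
is an isomorphism in $\qbs$. I would use that coproducts in $\qbs$ are disjoint and universal (stable under pullback), as in any quasitopos or extensive category. Then the pullback of a coproduct of maps $F(U)+G(U) \to F(W)+G(W)$ that respects summands decomposes as
\[ (F(U)\times_{F(W)}F(V)) + (G(U)\times_{G(W)}G(V)), \]
because cross terms such as $F(U)\times_{F(W)+G(W)} G(V)$ are empty by disjointness. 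Each summand is then isomorphic to $F(U\cap V)$, respectively $G(U\cap V)$, by the hypothesis that $F$ and $G$ preserve pullbacks. Finally, I would check that the composite comparison map is the canonical one.

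The main obstacle is justifying extensivity of $\qbs$ precisely enough: that binary coproducts are disjoint and that pullbacks of coproduct-respecting maps split as the sum of componentwise pullbacks. I would argue this concretely. The underlying set of the pullback is the set-theoretic fibre product, which visibly splits into the two summands because an element of $F(U)$ and an element of $G(V)$ never have equal images in $F(W)+G(W)$. For the random elements, a map $\alpha:\RR \to$ pullback is a pair of random elements into $F(U)+G(U)$ and $F(V)+G(V)$. By the definition of coproduct random elements in $\qbs$, each is determined by a countable measurable partition of $\RR$ with random elements of the summands on the pieces. Agreement in $F(W)+G(W)$ forces the two partitions to coincide pointwise, so on each piece we obtain a random element of $F(U)\times_{F(W)}F(V)$ or of the $G$-analogue. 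This is exactly a random element of the sum. If that is clean enough, one can alternatively cite extensivity of $\qbs$ from \citet{HeunenKSY17} (it is a quasitopos with disjoint universal coproducts) and skip the hand computation.
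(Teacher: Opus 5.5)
The paper states this proposition without proof, so there is no argument of its own to compare against. Your argument is correct and is the natural one. It has the same shape as the paper's proof of \cref{prop:pshqbs-complete}: compute pointwise in $[\A,\qbs]$, use that $\pshqbs$ is a full subcategory, then check pullback preservation. The difference is that where the paper can invoke interchange of limits with limits, you need a $\qbs$-specific fact, namely that pullbacks commute with binary coproducts (extensivity). You rightly identify this as the real content.

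There are two small points in how you justify extensivity.

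\emph{Drop the appeal to ``any quasitopos''.} Universality of coproducts does follow from local cartesian closure, but disjointness does not. A nontrivial Heyting algebra, viewed as a thin category, is a quasitopos in which $\top+\top=\top$, so the two coproduct injections have pullback $\top$ rather than $\bot$. It is your direct computation in $\qbs$ that actually carries the argument.

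\emph{Extend from the piece to a total random element.} Take the measurable piece $S\subseteq\RR$ on which both random elements land in the $F$-summands. The witnessing random elements $\beta:\RR\to F(U)$ and $\delta:\RR\to F(V)$ agree in $F(W)$ only on $S$. So $(\beta,\delta)$ is not yet a random element of $F(U)\times_{F(W)}F(V)$. Fix this when $S\neq\emptyset$ as follows. Pick $r_0\in S$ and replace $\beta$ and $\delta$ outside $S$ by the constants $\beta(r_0)$ and $\delta(r_0)$. Constants are random elements, and the countable gluing axiom makes the modified maps random elements. This gives a total random element of the fibre product that agrees with $(\beta,\delta)$ on $S$.

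With that fix, the comparison map is a bijection on points and on random elements, hence an isomorphism in $\qbs$, and the final naturality check is routine.
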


\begin{proposition}[$\pshqbs$ is cartesian closed]
  The category $\pshqbs$ is cartesian closed,
  with the exponential object $F^G: \A \to \qbs$
  given by:
  \begin{equation*}
    F^G(U) = \{ h \in \homset{\qbs}{G[0,1]}{F[0,1]} \mid \forall V \in \mathcal{A}, h(G(V)) \subseteq F(U \cup V)\}
  \end{equation*}
  viewed as a subobject of the $\qbs$ exponential $F[0,1]^{G[0,1]}$.
\end{proposition}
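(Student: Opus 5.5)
I will verify directly that the given formula $F^G(U)$ defines an object of $\pshqbs$ and that it satisfies the universal property of the exponential, with products computed pointwise by \cref{prop:pshqbs-complete}.

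\textbf{Step 1: $F^G$ is an object.} Give $F^G(U)$ the subspace quasi-Borel structure inherited from the $\qbs$ exponential $F[0,1]^{G[0,1]}$; note that $G(V)$ and $F(U\cup V)$ are identified with subobjects of $G[0,1]$ and $F[0,1]$ via the monic maps $G(\subseteq)$ and $F(\subseteq)$, which are monic because pullback-preserving functors preserve monos. Functoriality is immediate: if $U\subseteq U'$ then $F(U\cup V)\subseteq F(U'\cup V)$, so $F^G(U)\subseteq F^G(U')$, and the inclusions compose. Also $F^G([0,1])$ must be the whole exponential, and it is, because $U\cup V=[0,1]$. For pullback preservation, pullbacks in $\A$ are intersections, and pullbacks of subspace inclusions in $\qbs$ are intersections carrying the subspace structure. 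So it suffices to show $F^G(U)\cap F^G(U')=F^G(U\cap U')$, which reduces to $F(U\cup V)\cap F(U'\cup V)=F((U\cap U')\cup V)$. This holds because $F$ preserves pullbacks and $(U\cup V)\cap(U'\cup V)=(U\cap U')\cup V$. One subtlety must be checked separately: unions of countable sets with $[0,1]$ stay in $\A$, so $U\cup V$ is always an object.

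\textbf{Step 2: evaluation.} Define $\mathrm{ev}_U : F^G(U)\times G(U)\to F(U)$ by $(h,x)\mapsto h(x)$. Taking $V=U$ in the definition gives $h(x)\in F(U\cup U)=F(U)$. The map is a $\qbs$ morphism because it is the restriction of $\qbs$ evaluation to subspaces, and naturality is trivial since every component is a restriction of the evaluation at $[0,1]$.

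\textbf{Step 3: currying and uniqueness.} This is the main obstacle. Given $\alpha : H\times G\to F$, define $(\lambda\alpha)_U(k)=\alpha_{[0,1]}(k,-)$, viewing $k\in H(U)\subseteq H[0,1]$. It is a $\qbs$ morphism into $F[0,1]^{G[0,1]}$ by cartesian closure of $\qbs$. The key point is that it lands in $F^G(U)$. Take $x\in G(V)$. Both $k$ and $x$ lie in $H(U\cup V)$ and $G(U\cup V)$ respectively, by functoriality of the inclusions. Naturality of $\alpha$ at $U\cup V\subseteq[0,1]$ then gives $\alpha_{[0,1]}(k,x)=\alpha_{U\cup V}(k,x)\in F(U\cup V)$. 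This is exactly where the "$\forall V$" clause in the definition earns its place: it is what makes $\lambda\alpha$ well defined at every stage $U$.

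For the universal property, $\mathrm{ev}\circ(\lambda\alpha\times G)=\alpha$ can be checked at $[0,1]$, and this suffices by faithfulness (\cref{cor:pshqbs-concrete}). Uniqueness follows the same way. Any $\beta$ with $\mathrm{ev}\circ(\beta\times G)=\alpha$ must satisfy $\beta_{[0,1]}(k)=\alpha_{[0,1]}(k,-)$ by uniqueness of currying in $\qbs$, and faithfulness of $\mathcal{C}$ then forces $\beta=\lambda\alpha$. Together with the pointwise terminal object, which is constant at $1$ and clearly preserves pullbacks, this gives cartesian closure.
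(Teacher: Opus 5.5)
Your Steps 1 and 3 are sound, but Step 2 has a genuine gap. It cannot be closed as stated. The paper gives no proof of this proposition, and with the subspace structure that you (and the statement) put on $F^G(U)$, the formula is not the exponential for general objects of $\nu\mathsf{Qbs}$.

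Pullback preservation only makes the restriction maps $F(U)\to F[0,1]$ monic. A monomorphism in $\mathsf{Qbs}$ is merely an injective morphism, not a subspace embedding. Identifying $F(U)$ with a subset of $F[0,1]$ is harmless for defining the \emph{set} $F^G(U)$. However, your argument that $\mathrm{ev}_U$ is a morphism (``the restriction of $\mathsf{Qbs}$ evaluation to subspaces'') only shows that $F^G(U)\times G(U)\to F[0,1]$ is a morphism whose image lies in $F(U)$. This is a morphism \emph{into} $F(U)$ only if $F(U)$ carries the structure induced from $F[0,1]$.

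Here is a concrete counterexample. Let $\mathbb{R}_d$ be $\mathbb{R}$ with the discrete quasi-Borel structure, whose random elements are the maps constant on the pieces of a countable Borel partition. Let $\mathbb{R}_s$ be $\mathbb{R}$ with the standard structure. Put $F(U)=\mathbb{R}_d$ for countable $U$ and $F[0,1]=\mathbb{R}_s$, with identity functions as restriction maps. A pullback of such identity-on-points maps is $\mathbb{R}_d$ as soon as one leg is $\mathbb{R}_d$, and $\mathbb{R}_s$ otherwise, which is exactly $F(U\cap U')$; so $F$ is an object of $\nu\mathsf{Qbs}$. For $G=1$ the formula gives $F^G(U)\cong\mathbb{R}_s$ at every stage, and $\mathrm{ev}_U$ is the identity $\mathbb{R}_s\to\mathbb{R}_d$, which is not a morphism. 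No other choice of evaluation helps either. In a cartesian closed category $F^1\cong F$, but $\mathbb{R}_s\not\cong\mathbb{R}_d$, since an isomorphism would send the random element $\mathrm{id}_{\mathbb{R}}$ to a countably-valued bijection.

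There are two ways to repair this.
\begin{itemize}
\item \textbf{Assume embeddings.} The option closest to your proof is to assume that the restriction maps of $F$ are subspace embeddings (strong monos). This is what the paper's informal picture of ``families of subspaces'' presupposes. It holds for the base types: a countable $U\subseteq[0,1]$ with the measurable maps $\mathbb{R}\to U$ carries exactly the subspace structure. It is also inherited by your $F^G$, whose stages are subspaces of $F[0,1]^{G[0,1]}$ by construction. Under this hypothesis on $F$ alone, Step 2 becomes correct and the rest of your proof goes through unchanged.
\item \textbf{Change the structure on $F^G(U)$.} For arbitrary $F$, keep the underlying set but use the end-style structure from $[\mathcal{A},\mathsf{Qbs}]$. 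Require each restriction $h|_{G(W)}$ with $W\supseteq U$ to be a morphism $G(W)\to F(W)$. Declare $\phi:\mathbb{R}\to F^G(U)$ a random element iff every $(r,x)\mapsto\phi(r)(x)$ is a morphism $\mathbb{R}\times G(W)\to F(W)$. Then $\mathrm{ev}_U$ is a morphism by taking $W=U$. Your currying still lands in $F^G(U)$ as a morphism, because $\alpha_{[0,1]}(k,-)$ restricts to $\alpha_W(k,-)$ by naturality of $\alpha$. Pullback preservation follows from that of $F$ via $W=(W\cup U)\cap(W\cup U')$ for $W\supseteq U\cap U'$.
\end{itemize}
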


\begin{definition}[Rose trees]
  Let $\Omega = [0,1]^{\Nat^*}$ denote the
  quasi-Borel space of rose trees \cite[\S 5.4.1]{DashKPS23}
  and take $\mur$ to be the uniform probability
  distribution on $\Omega$.
  The space $\Omega$ extends to a functor:
  \begin{equation*}
    \Omega(U) = \Bigl\{
      \omega \in \Omega \mid \forall \rho \in \Nat^*,
      \omega(\rho) \in U
    \Bigr\}
  \end{equation*}
\end{definition}

\begin{definition}[Probability monad]
  The \emph{probability monad} on $\pshqbs$ is given by the quadruple $(\mMeas, \eta, \mu, \tau)$, where the endofunctor $\mMeas : \pshqbs \to \pshqbs$ is defined pointwise by
  \begin{equation*}
    \mMeas(F)(U) = \Bigl\{
      f_* \mu_{\Omega}
      \mid
      f \in (\Omega \Rightarrow F)(U)
    \Bigr\}
  \end{equation*}
  and the natural transformations $\eta$, $\mu$, and $\tau$ for each $U$ are the appropriate restrictions of the unit, multiplication, and strength of the Qbs probability monad.
\end{definition}

\begin{definition}[Trace probability functor]
  For each $F$ in $\pshqbs$, the \emph{trace probability functor}
  is given by $\mProb(F, -):  \pshqbs \to \pshqbs$,
  defined pointwise by:
  \begin{equation*}
    \mProb(F, G)(U) = \mMeas(F)(U) \times (F \Rightarrow G)(U)
  \end{equation*}
\end{definition}

In \cref{fig:nuqbs-type-semantics,fig:nuqbs-term-semantics}, we give a denotational semantics $\Qsem{-}$ to the types and terms of $\genlang{}$. Types are interpreted as objects in $\pshqbs$, and terms as morphisms (i.e., natural transformations). For readability,
we present the $\catset$
concretization
of each term interpretation
(given by the faithful functor $\_ : \pshqbs \to \catset$). Note that for each $U$, we have $\Qsem{\Gamma \vdash t : \tau}_U = \Csem{\Gamma \vdash t : \tau}|_{\Qsem{\Gamma}(U)}$. Throughout, when $\Csem{t}(\gamma)$ is a pair (as for $t : \ProbTy{\sigma}{\tau}$, where it is a distribution-and-return-map pair), we write $\Csem{t}_i := \pi_i \circ \Csem{t}$ for the $i$-th projection.

\begin{figure}
\begin{align*}
  \Qsem{\Bool}(U) & = (\Bool, \homset{\catmeas}{\Real}{\Bool}) & \Bool \textrm{ standard Borel} \\
  \Qsem{\Real}(U) & = (\Real, \homset{\catmeas}{\Real}{\Real}) & \Real \textrm{ standard Borel} \\
  \Qsem{\Nat}(U) & = (\Nat, \homset{\catmeas}{\Real}{\Nat}) & \Nat \textrm{ standard Borel} \\
  \Qsem{\Name}(U) & = (U, \homset{\catmeas}{\Real}{U}) & U \textrm{ standard Borel (discrete or $[0,1]$)} \\
  \Qsem{\sigma \times \tau}(U) & = \Qsem{\sigma}(U) \times \Qsem{\tau}(U) \\
  \Qsem{\List{\sigma}}(U) & = \coprod_{n \in \Nat} \Qsem{\sigma}^n(U) \\
  \Qsem{\NameMap{\tau}}(U) & = \left\{
    \ell \in \coprod_{n \in \Nat} (U \times \Qsem{\tau})^n
    \mid
    \ell \textrm{ sorted}
  \right\} \\
  \Qsem{\ProbTy[\kappa]{\sigma}{\tau}}(U)
  & = \mProb(\Qsem{\sigma}, \Qsem{\tau})(U) \\
  \Qsem{\tau \clos[\sigma] \tau'}(U) & = (\Qsem{\tau} \Rightarrow \Qsem{\tau'})(U) \\
  \Qsem{\Record{\ell_1: \tau_1, \dots, \ell_n: \tau_n}}(U)
  & = \prod_{(\ell_i: \tau_i) \in \{\ell_1 : \tau_1, \dots, \ell_n: \tau_n\}} \Qsem{\tau_i}(U)
\end{align*}
\caption{Semantics of $\genlang$ types as $\pshqbs$ objects.}\label{fig:nuqbs-type-semantics}
\end{figure}

\begin{figure}
\begin{align*}
  \Csem{\Gamma \vdash x : \tau} & = \pi_{(x, \tau)} \\
  \Csem{\Gamma \vdash (s,t): \sigma \times \tau}(\gamma) & = (\Csem{s}(\gamma), \Csem{t}(\gamma)) \\
  \Csem{\Gamma \vdash t.i : \tau_i}(\gamma) & = \pi_i(\Csem{t}(\gamma)) \\
  \Csem{\tyjudg{\Gamma}{\ite{s}{t}{t'}}{\sigma}}(\gamma) & =
  \begin{cases}
    \Csem{t}(\gamma) & \Csem{s}(\gamma) = \top \\
    \Csem{t'}(\gamma) & \textrm{otherwise}
  \end{cases} \\
  \Csem{\Gamma \vdash \lam{x}{t} : \tau \clos[\Gamma'] \tau'}(\gamma)
  & = v \mapsto \Csem{t}(\gamma, v) \\
  \Csem{\tyjudg{\Gamma}{s \; t}{\tau'}}(\gamma)
  & = \Csem{s}(\gamma)(\Csem{t}(\gamma)) \\
  \Csem{\tyjudg{\Gamma}{\return{t}}{\ProbTy[\kappa]{\Unit}{\tau}}}(\gamma)
  & = (\delta_{\Csem{\Unit}}, \_ \mapsto \Csem{t}(\gamma)) \\
  \Csem{\tyjudg{\Gamma}{\sample{t}}{\ProbTy[\kappa]{\sigma}{\sigma}}}(\gamma)
  & = (\pi_1(\Csem{t}(\gamma)), \id[\Csem{\sigma}]) \\
  \Csem{\tyjudg{\Gamma}{x \gets t; s}{\ProbTy[\kappa]{\Record{\cdots}}{\tau'}}}(\gamma)
  & = \begin{aligned}[t]
      \bigl( & \rho \leftsquigarrow (\pi_1 \circ \Csem{t})(\gamma); v =  (\pi_2 \circ \Csem{t})(\gamma)(\rho); \\
      & \rho' \leftsquigarrow (\pi_1 \circ \Csem{s})(\gamma,v); \ret{(\rho\concat\rho')}, \\
      & \rho \mapsto (\pi_2 \circ \Csem{s})(\gamma, (\pi_2 \circ \Csem{t})(\gamma)(\rho|_{L_1}))(\rho|_{L_2})
      \bigr)
  \end{aligned} \\
  \Csem{\tyjudg{\Gamma}{t \at \ell}{\ProbTy[\kappa]{\Record{\ell: \sigma}}{\tau}}}(\gamma)
  & = \begin{aligned}[t]\bigl(
    & (x \mapsto \{\ell \mapsto x\})_*((\pi_1 \circ \Csem{t})(\gamma)), \\
    & u \mapsto (\pi_2 \circ \Csem{t})(\gamma)(\pi_{\ell} u)
  \bigr)
  \end{aligned} \\
  \Csem{\tyjudg{\Gamma}{\forInRange{x}{t}{s}}{\ProbTy[\kappa]{(\List{\sigma})}{(\List{\tau})}}}(\gamma) 
  &= \begin{aligned}[t] ( &N \leftsquigarrow \sem{t}_1(\gamma); \bigotimes_{i=1}^N \sem{s}_1(\gamma, i), \\
     & \rho \mapsto (\sem{s}_2(\gamma, i)(\pi_i\rho))_{i=1}^{|\rho|})
     \end{aligned}
  \\
  \Csem{\tyjudg{\Gamma}{\const{fresh}~t}{\ProbTy[\kappa]{\NameSet}{\Unit}}}(\gamma)
  & = \begin{aligned}[t]
      \bigl( & n \leftsquigarrow (\pi_1 \circ \Csem{t})(\gamma); u \leftsquigarrow \bigotimes_{i=1}^n \Lambda_{[0,1]}; \ret(\mathsf{sort}(u)), \\
      & \_ \mapsto \unit
      \bigr)
  \end{aligned}
\end{align*}
\caption{Semantics of $\genlang$ terms as $\pshqbs$ morphisms. For readability, we present the $\catset$ concretization of each interpretation, $\Csem{\Gamma \vdash t : \tau} = \Qsem{\Gamma \vdash t : \tau}_{[0,1]}$. In general, $\Qsem{\Gamma \vdash t : \tau}_{U} = \Csem{\Gamma \vdash t : \tau}|_{\Qsem{\Gamma}(U)}$.}\label{fig:nuqbs-term-semantics}
\end{figure}

\clearpage 

\section{Density Transformation: Full Definition and Correctness}\label{sec:full-correctness-density}

The full type and term translations are given in \cref{fig:gen-to-inc-types-full,fig:gen-to-inc-terms-full}, extending the selected rules from \cref{fig:gen-to-inc-types,fig:gen-to-inc-terms}.

\begin{figure}
  \begin{align*}
    \GenToInc{\sigma} & = \sigma & \sigma \in \{\Bool, \Nat, \Real, \Name\} \\
    \GenToInc{\sigma \times \tau} & = \GenToInc{\sigma} \times \GenToInc{\tau} \\
    \GenToInc{\List{\sigma}} & = \List{\GenToInc{\sigma}} \\
    \GenToInc{\NameMap{\tau}} & = \NameMap{\GenToInc{\tau}} \\
    \GenToInc{\NameSet} & = \GenToInc{\NameMap{\Unit}} \\
    \GenToInc{\Record{\ell_1: \tau_1, \dots, \ell_n: \tau_n}} & = \Record{
      \ell_1: \GenToInc{\tau_1},
      \dots,
      \ell_n: \GenToInc{\tau_n}
    } \\
    \GenToInc{\ProbTy[\kappa]{\sigma}{\tau}} & =
    \GenToInc{\sigma} \times \NameSet
    \clos[\GenToInc{\kappa}]
    \GenToInc{\tau} \times \Real \\
    \GenToInc{\Dist[\kappa]{\sigma}} & = \GenToInc{\ProbTy[\kappa]{\sigma}{\Unit}} \\
    \GenToInc{\sigma \clos[\kappa] \tau} & =
    \GenToInc{\sigma} \clos[\GenToInc{\kappa}] \GenToInc{\tau}
  \end{align*}
  \caption{Density transformation from $\genlang$ to $\inclang$ on types.}
  \label{fig:gen-to-inc-types-full}
\end{figure}

\begin{figure}
  \begin{align*}
    \GenToInc{x} &= x \\
    \GenToInc{c} &= c_{\mathsf{inc}} \\
    \GenToInc{\return{t}} &= \lam{(\tr, \var{U})}{(\GenToInc{t}, 1)} \\
    \GenToInc{x \gets t_1; t_2} &= \lam{(\tr, \var{U})} (x, w_1) := \GenToInc{t_1}(\restrict{\tr}{L_1}, \var{U});\\
      &\qquad\qquad\;\;\; (y, w_2) := \GenToInc{t_2} (\restrict{\tr}{L_2}, \var{U} \cup \names{\restrict{\tr}{L_1}}); \\
      &\qquad\qquad\;\;\; (y, w_1 \cdot w_2) \\
      & \text{ where } t_1 : \ProbTy{\Record{\ell: \sigma_\ell \mid \ell \in L_1}}{\tau_1} \text{ and } t_2 : \ProbTy{\Record{\ell: \sigma_\ell' \mid \ell \in L_2}}{\tau_2} \\
    \GenToInc{\sample{t}} &= \lam{(\tr, \var{U})} (\tr, (\GenToInc{t}(\tr, \var{U})).2) \\
    \GenToInc{t \at \ell} &= \lam{(\tr, \var{U})} \GenToInc{t}(\tr.\ell, \var{U}) \\
    \scalebox{0.9}{\(\displaystyle\GenToInc{\begin{aligned}
      &\forRangeWithUsing{x}{d: \Dist{\Nat}}{z}{s \\
      &}{v_1 := y_1[u_1], \dots, v_n := y_n[u_n]}{t}
    \end{aligned}}\)} &= \scalebox{0.9}{\(\displaystyle \left(\begin{aligned}
      \lam{(\tr, \var{U})} {} & \var{N} := \const{length} \; \tr; \; (\_, \var{w}_{\var{N}}) := \GenToInc{d}(\var{N}, \var{U}); \\
        &\var{ys} := \forWithUsing{x}{\const{range} \; \var{N}}{(z, \var{V})}{(\GenToInc{s}, \var{U}) \\
        &\qquad }{v_1 := y_1[\GenToInc{u_1}], \dots, v_n := y_n[\GenToInc{u_n}], \\
        &\qquad \quad \tr' := \tr[x]}{  ((r, z), w) :=  \GenToInc{t}(\tr', \var{V}); \\ &\qquad \quad ((r, w), (z, \var{V} \cup \names{\tr'})) }; \\
        &(\var{rs}, \var{ws}) := \const{unzip} \; \var{ys}; \; (\var{rs}, \var{w}_{\var{N}} \cdot \const{product} \; \var{ws})
    \end{aligned} \right)\)} \\
    \GenToInc{\lam{x}{t}} & = \lam{x}{\GenToInc{t}} \\
    \GenToInc{s \; t} &= \GenToInc{s} \; \GenToInc{t} \\
    \GenToInc{(s,t)} &= (\GenToInc{s}, \GenToInc{t}) \\
    \GenToInc{t.i} &= \GenToInc{t}.i \\
    \GenToInc{\ite{s}{t}{t'}} &= \ite{\GenToInc{s}}{\GenToInc{t}}{\GenToInc{t'}}
  \end{align*}
  \caption{Density transformation from $\genlang$ to $\inclang$ on terms.}
  \label{fig:gen-to-inc-terms-full}
\end{figure}

\begin{figure}
\[
  \sem{\const{fresh}_{\mathsf{inc}}}((\gamma, d)) = \left(\gamma,  (\gamma, (\tr, \var{U})) \mapsto \begin{cases}
  (\unit, \pi_2(d(\gamma, (|\tr|, U))) \cdot |\tr|!) & \domain(\tr) \cap U = \emptyset \\
  (\unit, 0) & \textsf{otherwise}
  \end{cases}\right)
\]
\caption{Semantics for the density transformation of $\const{fresh}$. $|\tr|$ indicates the size of the name set $\tr$.}
\end{figure}

\subsection{Stock Measures}
\label{sec:stock-measures}

As presented in \cref{sec:density-transformation}, we define for each ground type $\sigma$ in $\genlang$ a family of reference measures $\nu_U^\sigma$, indexed by \textit{countable} name sets $U$.

\begin{align*}
  \nu_{U}^{\sigma} & = \counting_{\Csem{\sigma}} & \sigma \in \{\Bool,\Nat\} \\
  \nu_{U}^{\Real} & = \Lambda_{\Real} \\
  \nu_{U}^{\Name} & = \counting_{U} + \Lambda_{[0,1]} \\
  \nu_U^{\sigma \times \tau} & = \begin{aligned}[t]
      & \rho_1 \leftsquigarrow \nu_U^\sigma; \\
      & \rho_2 \leftsquigarrow
      \nu_{U \cup \names{\rho_1}}^\tau; \\
      & \ret{(\rho_1,\rho_2)}
  \end{aligned} \\
  \nu_{U}^{\Record{\ell_1: \tau_1, \dots, \ell_n: \tau_n}}
  & = \nu_U^{\tau_1 \times \cdots \times \tau_n}
  & (\nu_U^{\Unit} = \delta_{\Csem{\Unit}}) \\
  \nu_U^{\List{\sigma}} & = \sum_{n \in \Nat} (\iota_n)_* \nu_U^{\sigma^n} & (\textsf{with coproduct injections } \iota_n) \\
  \nu_U^{\NameMap{\sigma}}
  & = \sum_{n \in \Nat} (\kappa_n)_* (\nu_U^{(\Name \times \sigma)^n}) & (\textsf{with coproduct injections } \kappa_n)
\end{align*}

\paragraph{Symmetry of product reference measures.}
A footnote in \cref{sec:density-transformation} mentions that the product clause above is order-independent. We give a proof below.

\begin{lemma}
  \label{lem:nu-product-swap}
  For all countable $U \subseteq [0,1]$ and all ground types $\sigma,\tau$,
  \[
    \nu_U^{\sigma \times \tau}
    = \swap_* \nu_U^{\tau \times \sigma},
  \]
  where $\swap(x,y) = (y,x)$.
\end{lemma}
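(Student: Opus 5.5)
We prove the identity by testing both sides against an arbitrary measurable $f : \sem{\sigma}\times\sem{\tau}\to[0,\infty]$; equality of expectation operators is equality of measures. Unfolding the product clause, the claim becomes
\[
\int\!\!\int f(a,b)\,\nu^\tau_{U\cup\names{a}}(\dif{b})\,\nu^\sigma_U(\dif{a}) \;=\; \int\!\!\int f(a,b)\,\nu^\sigma_{U\cup\names{b}}(\dif{a})\,\nu^\tau_U(\dif{b}).
\]
The key auxiliary result is a decomposition of each $\nu^\sigma_U$ according to which names are drawn discretely from $U$ and which are continuous (``fresh''). We prove this first, by induction on the ground type $\sigma$.

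For the decomposition, we show that $\nu^\sigma_{U\cup W}$ splits as a sum over which positions of a value use atoms of $W$. Concretely, we claim
\[
\nu^\sigma_{U\cup W}(g) = \int g\,\dif{\nu^\sigma_U} + \text{(terms in which some name-position is a counting atom in } W\setminus U).
\]
Equivalently, and more usefully, we prove a \emph{symmetric closed form}. The form is defined by cases on $\sigma$:
\begin{itemize}
\item $\nu^\Name_U = \#_U+\Lambda$.
\item For products, lists and maps, $\nu^\sigma_U$ is a sum over ``sharing patterns''. A pattern is a set of name-slots together with an equivalence relation on them. Each class is either assigned an element of $U$ by counting, or is a fresh name integrated by Lebesgue measure. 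Every non-first occurrence within a class is forced by counting to equal the class's first occurrence. For this we use that, on the complement of a Lebesgue-null set, the fresh draws are distinct and avoid $U$.
\end{itemize}
The induction is mostly bookkeeping. For $\Name$ the form is immediate. The cases $\sigma\times\tau$, $\List{\sigma}$ and $\NameMap{\sigma}$ follow by unfolding, using Tonelli and the induction hypothesis.

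With the closed form in hand, both sides of the claimed identity become sums over sharing patterns on the combined slots of the pair $(a,b)$:
\begin{itemize}
\item On the left, a class is fresh if its first slot is in $a$. A class whose first slot is in $b$ but which also meets $a$ would be a counting choice from $\names{a}$.
\item On the right, the roles of $a$ and $b$ are swapped.
\end{itemize}
Both sides therefore enumerate the same set of patterns, namely equivalence relations on the slots with each class labelled ``$U$-atom'' or ``fresh''. They also assign the same weight to each pattern: one Lebesgue integral per fresh class and one counting sum per $U$-class, composed with Dirac constraints for the repeats. Tonelli reorders these integrals, so the two expressions agree term by term.

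The main obstacle is making the closed form precise for lists and maps, whose number of slots varies. We handle this by working component-wise over the coproduct injections $\iota_n,\kappa_n$: each summand has finitely many name-slots, so finitely many patterns per summand. Null sets need care: configurations where two fresh draws coincide, or a fresh draw lands in a countable set, must be shown negligible so that every pattern is counted exactly once. We dispose of these by noting that countable sets and diagonals are Lebesgue-null. If the general closed form becomes too cumbersome, we would fall back to proving only the special case $\nu^\sigma_{U}$ versus $\nu^\sigma_{U\cup\{p\}}$ (adding one atom), and iterating over $\names{a}$ by induction on its finite size.
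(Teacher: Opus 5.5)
Your route is genuinely different from the paper's, and it works once one correction is made to your normal form.

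\textbf{How the two routes differ.} The paper never describes $\nu_U^\sigma$ explicitly. It proves the swap identity $P(\sigma,\tau)$, quantified over all countable base sets, by induction on the pair of types. Explicit computation happens only in two base cases:
\begin{itemize}
\item name-free types, where the measures ignore the base set and commutativity of the measure monad suffices;
\item $\Name\times\Name$.
\end{itemize}
The product case $P(\sigma_1\times\sigma_2,\tau)$ is a ``bubble'' argument. First $P(\sigma_2,\tau)$ at base $U\cup\names{x}$ moves $z$ past $y$. Then $P(\sigma_1,\tau)$ at base $U$ moves it past $x$. Lists and maps reduce to $P(\rho^n,\tau)$ through the coproduct decomposition and an inner induction on $n$.

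You instead compute a closed form for every $\nu_U^\sigma$, shape by shape. The lemma then follows because the pattern weights do not depend on the order of the slots. This gives more information: an explicit description of the reference measures, and immediate invariance under arbitrary reorderings of slots, not just one swap. The price is heavier bookkeeping over shapes, patterns and null sets. The paper's argument is more local, since it uses the induction hypotheses only as rewrite rules.

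\textbf{The correction.} Coincidences among $U$-atoms carry positive counting mass. Unlike fresh--fresh and fresh--$U$ coincidences, they cannot be discarded as Lebesgue-null.

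As you describe the patterns, each class may be assigned an element of $U$ independently. Then for $u\in U$, your closed form for $\nu_U^{\Name\times\Name}$ gives the point $(u,u)$ mass $2$: once from two singleton $U$-classes and once from the merged $U$-class. Its true mass is $1$, because $\nu^{\Name}_{U\cup\{u\}}=\nu^{\Name}_U$.

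There are two ways to fix this:
\begin{itemize}
\item The cleaner fix is to leave $U$-valued slots ungrouped, each summing over $U$ independently, and put an equivalence relation only on the fresh slots.
\item Alternatively, require distinct $U$-classes to receive distinct atoms.
\end{itemize}
Both versions are still order-independent. With the first, the product step reduces to the identity $\#_{U\cup\names{a}}=\#_U+\sum_C\delta_{x_C}$, valid almost everywhere, where $C$ ranges over the fresh classes of $a$ and $x_C$ is the value of class $C$.

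The paper's $\Name\times\Name$ base case trips over the same point. It writes $\nu^{\Name}_{U\cup\{a\}}=\nu^{\Name}_U+\delta_a$ for all $a$, which holds only for $a\notin U$. The slip is harmless there, because both sides are expanded identically.

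\textbf{A garbled sentence.} Your sentence about a class ``whose first slot is in $b$ but which also meets $a$'' should read as follows. On the left, a class meeting both components is fresh at its first $a$-slot, and its $b$-slots are counting choices from $\names{a}$; the right side is symmetric. In both cases the class contributes a single Lebesgue integral along the diagonal, which is the core of the symmetry.
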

\begin{proof}
  We define the shorthand $P(\sigma, \tau)$ for the proposition that
  \[\forall U \subseteq [0,1]\text{ countable},\;
    \forall h:\sem{\sigma}\times\sem{\tau}\to\RR_{\ge 0}\text{ measurable},
  \]
  \[
    \int h(a,b)\,\nu_U^{\sigma\times\tau}(\dif{(a,b)})
    =
    \int h(a,b)\,(\swap_*\nu_U^{\tau\times\sigma})(\dif{(a,b)}).
  \]
  The lemma states that this is true for all ground types $\sigma$ and $\tau$.
  We prove this by structural induction on the pair of types $(\sigma,\tau)$. Note that $P(\sigma, \tau) \iff P(\tau, \sigma)$:
  $\swap\circ\swap = \id$, so
  \[
    \nu_U^{\sigma\times\tau}=\swap_*\nu_U^{\tau\times\sigma}
    \quad\Longleftrightarrow\quad
    \nu_U^{\tau\times\sigma}=\swap_*\nu_U^{\sigma\times\tau}.
  \]
  by applying $\swap_*$ to each side of the equality.
  For convenience, we write $N(x):=\names{x}$.

\begin{enumerate}[leftmargin=*]

\item \textbf{Base cases.}

  \begin{itemize}

  \item \textbf{Base case 1.} If $\sigma \in \{\Unit, \Bool, \Nat, \Real\}$, then $N(x) = \emptyset$ for all $x \in \sem{\sigma}$, and $\nu^\sigma_U = \nu^\sigma_V$ for all $U, V$. By the definition of $\nu_U^{\sigma\times\tau}$, we have:
\[
    \nu_U^{\sigma\times\tau}(\dif{(a,b)})
    =
    \nu_U^\sigma(\dif{a})\,\nu_{U\cup N(a)}^\tau(\dif{b})
    =
    \nu^\sigma_U(\dif{a})\,\nu^\tau_U(\dif{b}),
  \]
  where the second equality follows because $N(a) = \emptyset$. Similarly,
  \[
    \nu_U^{\tau\times\sigma}(\dif{(b,a)})
    =
    \nu_U^\tau(\dif{b})\,\nu_{U\cup N(b)}^\sigma(\dif{a}) = \nu^\tau_U(\dif{b})\,\nu^\sigma_U(\dif{a}),
  \]
  where the second equality follows because $\nu^\sigma_U = \nu^\sigma_{U \cup N(b)}$.
  By commutativity of the Qbs measure monad, $\nu^\sigma_U\otimes\nu^\tau_U=\swap_*(\nu^\tau_U\otimes\nu^\sigma_U)$.
  The case where $\tau \in \{\Unit, \Bool, \Nat, \Real\}$ follows by symmetry.

  \item \textbf{Base case 2.} Suppose $\sigma=\tau=\Name$.
  Using $\nu_U^\Name=\counting(U)+\Lambda_{[0,1]}$ and $N(a)=\{a\}$:
  \[
    \nu_U^{\Name\times\Name}(\dif{(a,b)})
    =
    \nu_U^\Name(\dif{a})\,\nu_{U\cup\{a\}}^\Name(\dif{b})
    =
    \nu_U^\Name(\dif{a})\bigl(\nu_U^\Name(\dif{b})+\delta_a(\dif{b})\bigr).
  \]
  Therefore, for measurable $h\ge 0$,
  \[
    \int h\,\dif{\nu}_U^{\Name\times\Name}
    =
    \iint h(a,b)\,\nu_U^\Name(\dif{b})\,\nu_U^\Name(\dif{a})
    +
    \int h(a,a)\,\nu_U^\Name(\dif{a}).
  \]
  The same expansion is obtained for
  $\int h\,\dif{(\swap_*\nu_U^{\Name\times\Name})}$:
  \[
    \begin{aligned}
      \int h\,\dif{(\swap_*\nu_U^{\Name\times\Name})}
      &=
      \iint h(a,b)\,\nu_{U\cup\{b\}}^\Name(\dif{a})\,\nu_U^\Name(\dif{b}) \\
      &=
      \iint h(a,b)\,\nu_U^\Name(\dif{a})\,\nu_U^\Name(\dif{b})
      +
      \int h(b,b)\,\nu_U^\Name(\dif{b}).
    \end{aligned}
  \]
\end{itemize}
  \item \textbf{Product constructor.}
  Assume $\sigma=\sigma_1\times\sigma_2$ and $\tau$ is a ground type.
  Assume by induction that $P(\sigma_1,\tau)$ and $P(\sigma_2,\tau)$ hold.
  Let $h:\sem{\sigma_1\times\sigma_2}\times\sem{\tau}\to\RR_{\ge0}$ be measurable.
  Expanding $\nu_U^{(\sigma_1\times\sigma_2)\times\tau}$:
  \[
  \begin{aligned}
    I
    &:=
    \int h((x,y),z)\,\nu_U^{(\sigma_1\times\sigma_2)\times\tau}(\dif{((x,y),z)}) \\
    &=
    \iiint
      h((x,y),z)\,
      \nu_{U\cup N(x)\cup N(y)}^\tau(\dif{z})\,
      \nu_{U\cup N(x)}^{\sigma_2}(\dif{y})\,
      \nu_U^{\sigma_1}(\dif{x}).
  \end{aligned}
  \]
  Here we used $N((x,y))=N(x)\cup N(y)$ and unfolded both product clauses.
  Apply $P(\sigma_2,\tau)$ at base set $U\cup N(x)$ to the inner two integrals,
  with the measurable function $h_x(y,z):=h((x,y),z)$ (for a fixed $x$). This allows us to switch the order in which $z$ and $y$ are integrated:
  \[
    I
    =
    \iiint
      h((x,y),z)\,
      \nu_{U\cup N(x)\cup N(z)}^{\sigma_2}(\dif{y})\,
      \nu_{U\cup N(x)}^\tau(\dif{z})\,
      \nu_U^{\sigma_1}(\dif{x}).
  \]
  Now apply $P(\sigma_1,\tau)$ at base set $U$ to the measurable  function
  \[
    g(x,z):=\int h((x,y),z)\,\nu_{U\cup N(x)\cup N(z)}^{\sigma_2}(\dif{y}),
  \]
  to obtain
  \[
    I
    =
    \iiint
      h((x,y),z)\,
      \nu_{U\cup N(z)\cup N(x)}^{\sigma_2}(\dif{y})\,
      \nu_{U\cup N(z)}^{\sigma_1}(\dif{x})\,
      \nu_U^\tau(\dif{z}).
  \]
  The inner two integrals are exactly $\nu_{U\cup N(z)}^{\sigma_1\times\sigma_2}$ by definition, so
  \[
    \begin{aligned}
      I
      &=
      \int h((x,y),z)\,\nu_U^{\tau\times(\sigma_1\times\sigma_2)}(\dif{(z,(x,y))}) \\
      &=
      \int h\,\dif{(\swap_*\nu_U^{\tau\times(\sigma_1\times\sigma_2)})}.
    \end{aligned}
  \]
  Thus $P(\sigma_1\times\sigma_2,\tau)$ holds.
  The case where the {right} component is a product follows by symmetry.

  \item \textbf{List constructor.}
  Assume $\sigma=\List{\rho}$, and $\tau$ is some ground type (the case with the list on the right is symmetric).
  Define $\rho^0:=\Unit$ and $\rho^{n+1}:=\rho^n\times\rho$.
 Let $\iota_n : \sem{\rho^n} \to \sem{\List{\rho}}$ be the $n$-th coproduct injection.
  The list reference measure is
  \[
    \nu_U^{\List{\rho}}
    =
    \sum_{n=0}^\infty (\iota_n)_* \nu_U^{\rho^n}.
  \]
  Also, by definition of $N$ on lists, for every $x\in\sem{\rho^n}$,
  \[
    N(\iota_n(x)) = N(x).
  \]
  Therefore,
  \[
    \begin{aligned}
      \nu_U^{\List{\rho}\times\tau}
      &=
      \left(
        a \leftsquigarrow \sum_{n=0}^\infty (\iota_n)_* \nu_U^{\rho^n};\;
        z \leftsquigarrow \nu_{U\cup N(a)}^\tau;\;
        \ret{(a,z)}
      \right) \\
      &=
      \sum_{n=0}^\infty
      (\iota_n\times\id[\tau])_*\,\nu_U^{\rho^n\times\tau}.
    \end{aligned}
  \]
  Likewise,
  \[
    \begin{aligned}
      \nu_U^{\tau\times\List{\rho}}
      &=
      \left(
        z \leftsquigarrow \nu_U^\tau;\;
        a \leftsquigarrow \sum_{n=0}^\infty (\iota_n)_* \nu_{U\cup N(z)}^{\rho^n};\;
        \ret{(z,a)}
      \right) \\
      &=
      \sum_{n=0}^\infty (\id[\tau]\times\iota_n)_*\,\nu_U^{\tau\times\rho^n}.
    \end{aligned}
  \]

  We first prove, by induction on $n$, the auxiliary claim
  \[
    C_n:\quad P(\rho^n,\tau).
  \]
  For $n=0$, $\rho^0=\Unit$, and $C_0 = P(\Unit, \tau)$ has already been covered as a base case in the overall proof.
  For the inductive case, suppose $C_n$ by induction, and recall that we also have the outer inductive hypothesis $P(\rho,\tau)$. Then by the Product Constructor case, we have $P(\rho^{n+1},\tau)$.
  Therefore, $C_n$ holds for all $n$.

  Now, for measurable $h\ge 0$:
  \[
  \begin{aligned}
    \int h\,\dif{\nu}_U^{\List{\rho}\times\tau}
    &=
    \sum_{n=0}^\infty
    \int h\,\dif{\!\left((\iota_n\times\id[\tau])_*\,\nu_U^{\rho^n\times\tau}\right)}
    \\
    &=
    \sum_{n=0}^\infty
    \int h\,\dif{\!\left((\iota_n\times\id[\tau])_*\,\swap_*\,\nu_U^{\tau\times\rho^n}\right)}
    &&\text{(by $C_n$)}\\
    &=
    \sum_{n=0}^\infty
    \int h\,\dif{\!\left(\swap_*(\id[\tau]\times\iota_n)_*\,\nu_U^{\tau\times\rho^n}\right)}\\
    &=
    \int h\,\dif{\!\left(
      \swap_*
      \sum_{n=0}^\infty (\id[\tau]\times\iota_n)_*\,\nu_U^{\tau\times\rho^n}
    \right)}\\
    &=
    \int h\,\dif{(\swap_*\nu_U^{\tau\times\List{\rho}})}.
  \end{aligned}
  \]
  Therefore $P(\List{\rho},\tau)$ holds (and by symmetry, so does $P(\tau, \List{\rho})$).

  \item \textbf{NameMap constructor.}
  Assume $\sigma=\NameMap{\rho}$ and $\tau$ is some ground type.
  By definition,
  \[
    \nu_U^{\NameMap{\rho}}
    =
    \sum_{n=0}^\infty (\kappa_n)_* \nu_U^{(\Name\times\rho)^n}.
  \]
  where $\kappa_n : \sem{(\Name\times\rho)^n}\to\sem{\NameMap{\rho}}$ are the coproduct injections.
  As in the List case, using $N(\kappa_n(x))=N(x)$,
  \[
    \begin{aligned}
      \nu_U^{\NameMap{\rho}\times\tau}
      &=
      \left(
        a \leftsquigarrow \sum_{n=0}^\infty (\kappa_n)_* \nu_U^{(\Name\times\rho)^n};\;
        z \leftsquigarrow \nu_{U\cup N(a)}^\tau;\;
        \ret{(a,z)}
      \right) \\
      &=
      \sum_{n=0}^\infty (\kappa_n\times\id[\tau])_*\,\nu_U^{((\Name\times\rho)^n)\times\tau},\\
      \nu_U^{\tau\times\NameMap{\rho}}
      &=
      \left(
        z \leftsquigarrow \nu_U^\tau;\;
        a \leftsquigarrow \sum_{n=0}^\infty (\kappa_n)_* \nu_{U\cup N(z)}^{(\Name\times\rho)^n};\;
        \ret{(z,a)}
      \right) \\
      &=
      \sum_{n=0}^\infty (\id[\tau]\times\kappa_n)_*\,\nu_U^{\tau\times(\Name\times\rho)^n}.
    \end{aligned}
  \]
  We can now repeat exactly the same argument as in the List case, but with $\rho$ replaced by $(\Name\times\rho)$.
\end{enumerate}
\end{proof}

\subsection{Logical Relation}

In \cref{fig:logrel1}, for each type $\tau$ in $\genlang$,
we define a binary relation $\logrel[\tau] \subseteq \underline{\Qsem{\tau}} \times \Ssem{\GenToInc{\tau}}$,
where $\_ : \pshqbs \to \catset$ is the faithful functor given in \Cref{cor:pshqbs-concrete}. At ground types, the relation is the identity. At function types, we use a variant of the usual lifting of logical relations to function types, adapted to account for the fact that $\Ssem{\tau \to_\sigma \tau'}$ is a space of explicit closures. The interesting case is the one for probabilistic programs of type $\ProbTy[\kappa]{\sigma}{\tau}$. It relates the meaning of the $\genlang$ program, $(\mu, f)$, to the translation $(\gamma, \langle g, w\rangle)$ when $g(\gamma, -)$ is related to $f$, and $w(\gamma, -, U)$ is the density $\frac{\dif{\mu}}{\dif{\nu_U^\sigma}}$.
\footnote{There is not a case for $\mathsf{Dist}$, because in this appendix, we have identified $\Dist[\kappa]{\sigma}$ with $\ProbTy[\kappa]{\sigma}{\Unit}$, by the bijection $\mu \mapsto (\mu, !)$, where $! : \llbracket\sigma\rrbracket \to 1$ is the unique map to the terminal object of $\pshqbs$.}

With these relations defined, we establish the fundamental lemma:

\begin{figure}
\begin{align*}
  \logrel[\Bool] & = \{
    (b,b') \in \Bool \times \Bool
    \mid
    b = b'
  \} \\
  \logrel[\Real] & = \{
    (r,r') \in \Real \times \Real
    \mid
    r = r'
  \} \\
  \logrel[\Nat] & = \{
    (n,n') \in \Nat \times \Nat
    \mid
    n = n'
  \} \\
  \logrel[\Name] & = \{
    (r_1, r_2) \in [0,1] \times [0,1]
    \mid
    r_1 = r_2
  \} \\
  \logrel[\sigma \times \tau] & = \begin{aligned}[t]
    \Bigl\{
      & ((s_1,s_2), (t_1, t_2)) \in \underline{\Qsem{\sigma \times \tau}} \times \Ssem{\GenToInc{\sigma \times \tau}}
      \mid \\
      & (s_1,t_1) \in \logrel[\sigma], (s_2,t_2) \in \logrel[\tau]
    \Bigr\}
  \end{aligned} \\
  \logrel[\List{\sigma}] & = \begin{aligned}[t]
    \Bigl\{
      & ((n_1, t_1), (n_2, t_2)) \in \Csem{\List{\sigma}} \times \Ssem{\GenToInc{\List{\sigma}}}
      \mid \\
      & n_1 = n_2, \forall 1 \leq i \leq n_1, (\pi_i(t_1), \pi_i(t_2)) \in \logrel[\sigma]
    \Bigr\}
  \end{aligned} \\
  \logrel[\NameMap{\tau}] & = \begin{aligned}[t]
    \Bigl\{
      & ((n_1, t_1), m) \in \Csem{\NameMap{\tau}} \times \Ssem{\GenToInc{\NameMap{\tau}}}
      \mid \\
      & n_1 = |\domain(m)|, \forall 1 \leq i \leq n_1, \pi_1(\pi_i(t_1)) \in \domain(m) \wedge (\pi_2(\pi_i(t_1)), m(\pi_1(\pi_i(t_1)))) \in \logrel[\tau]
    \Bigr\}
  \end{aligned} \\
  \logrel[{\ProbTy[\kappa]{\sigma}{\tau}}] & = \begin{aligned}[t]
    \Bigl\{
      & ((\mu,f), (\gamma, \langle g,w \rangle)) \in \underline{\Qsem{\ProbTy[\kappa]{\sigma}{\tau}}} \times \Ssem{\GenToInc{\ProbTy[\kappa]{\sigma}{\tau}}}
      \mid \\
      & ((\rho, U) \mapsto f(\rho), (\gamma,g)) \in \logrel[\sigma \times \NameSet \to \tau], \\
      & \forall \rho \in \Ssem{\sigma},
      \forall U \supseteq \supp{\mu},
      w(\gamma, \rho, U) = \frac{\dif{\mu}}{\dif{\nu_U^{\sigma}}}(\rho)
    \Bigr\}
  \end{aligned} \\
  \logrel[\tau {\clos[\sigma]} \tau'] & = \begin{aligned}[t]
    \Bigl\{
      & (f_1, (\gamma, f_2)) \in \Csem{\tau {\clos[\sigma]} \tau'} \times \Ssem{\GenToInc{\tau {\clos[\sigma]} \tau'}}
      \mid \\
      & \forall (x_1, x_2) \in \logrel[\tau],
      (f_1(x_1), f_2(\gamma, x_2)) \in \logrel[\tau']
    \Bigr\}
  \end{aligned} \\
  \logrel[\Record{\ell_1: \tau_1, \dots, \ell_n: \tau_n}] & = \begin{aligned}[t]
    \Bigl\{
      & (t_1, t_2) \in \underline{\Qsem{\Record{\ell_1: \tau_1, \dots, \ell_n: \tau_n}}} \times \Ssem{\GenToInc{\Record{\ell_1: \tau_1, \dots, \ell_n: \tau_n}}}
      \mid \\
      & \forall 1 \leq i \leq n. \bigl(
        \pi_{(\ell_i: \tau_i)}(t_1),
        \pi_{(\ell_i: \tau_i)}(t_2)
      \bigr) \in \logrel[\tau_i]
    \Bigr\}
  \end{aligned}
\end{align*}
\caption{The logical relation $\logrel[\tau]$.}
\label{fig:logrel1}
\end{figure}

\begin{lemma}[Fundamental Lemma]
  Let $\tyjudg{\Gamma}{t}{\tau}$ in \genlang{}.
  Then,
  for all $(\gamma_1, \gamma_2) \in \logrel[\Gamma]$,
  it follows that $(\Csem{t}(\gamma_1), \Ssem{\GenToInc{t}}(\gamma_2)) \in \logrel[\tau]$.
\end{lemma}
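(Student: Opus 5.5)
We prove the statement by structural induction on the typing derivation of $\tyjudg{\Gamma}{t}{\tau}$, with $(\gamma_1,\gamma_2)\in\logrel[\Gamma]$ universally quantified in the induction hypothesis. This generality is needed for the binders $\lam{x}$, $x \gets s; t$ and the loops, where the hypothesis is applied at extended contexts.

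\emph{Deterministic cases.} Variables, pairs, projections, records, $\keyword{if}$ and constants are essentially immediate. For constants we must assume, as a side condition on the primitives, that each $(\Csem{c}, \Ssem{c_{\mathsf{inc}}}) \in \logrel[\typeOf{c}]$. For distribution constants this unfolds to the requirement that the built-in density (for example $\const{freshD}$, $\const{unifD}$) is the Radon--Nikodym derivative with respect to $\nu_U^\sigma$ for every $U \supseteq \supp{\mu}$. The abstraction case follows the standard closure argument: given related arguments, extend $\gamma_1$ and $\gamma_2$ and apply the induction hypothesis to the body. Here we note that $\Ssem{\lam{x}t}(\gamma_2)$ packages exactly the restriction of $\gamma_2$ to the free variables. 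The application case simply unfolds $\logrel[\tau\clos[\sigma]\tau']$.

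\emph{Probabilistic cases.} In each case, the return-value component follows from the deterministic argument, so the work is the density clause.
\begin{itemize}
\item For $\return{t}$, we have $\mu=\delta_{\unit}$ and $\nu_U^{\Unit}=\delta_{\unit}$, so the density is the constant $1$.
\item For $\sample{t}$, the density clause is the hypothesis on $t$.
\item For $t\at\ell$, the density is transported along the bijection $\rho\mapsto\{\ell\mapsto\rho\}$. This works because $\nu_U^{\Record{\ell:\sigma}}$ is the pushforward of $\nu_U^\sigma$ along that bijection.
\end{itemize}

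\emph{Sequencing.} The key case is $x\gets t_1;t_2$. Here $\nu_U^{\sigma_1\times\sigma_2}$ is, by definition, $\rho_1\leftsquigarrow\nu_U^{\sigma_1};\rho_2\leftsquigarrow\nu^{\sigma_2}_{U\cup\names{\rho_1}}$. We compute
\[\mathbb{E}_\mu[h]=\int\!\!\int h(\rho_1\concat\rho_2)\,w_2(\rho_2,U\cup\names{\rho_1})\,\nu^{\sigma_2}_{U\cup\names{\rho_1}}(\dif{\rho_2})\,w_1(\rho_1,U)\,\nu^{\sigma_1}_U(\dif{\rho_1}),\]
applying the hypothesis for $t_2$ at the extended environment $\gamma[x\mapsto f(\rho_1)]$. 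To do so we must check the support condition $U\cup\names{\rho_1}\supseteq\supp{\nu}$ for the inner measure, and this is where the $\pshqbs$ structure enters. Since $\gamma_1\in\Qsem{\Gamma}(U)$ and $\rho_1\in\Qsem{\sigma_1}(U\cup\names{\rho_1})$, naturality of $\Csem{t_2}$ places the inner measure in $\mMeas(\dots)(U\cup\names{\rho_1})$. Here we take $\supp{\mu}$ to mean the least $U$ with $\mu\in\mMeas(F)(U)$. Record merging must also be matched with the product measure up to field reordering, which is handled by \cref{lem:nu-product-swap} together with associativity of the monadic definition.

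\emph{Loops.} We treat the loops $\keyword{for}\ldots\keyword{inRange}$ and $\keyword{inSet}$ by an inner induction on the number of iterations $N$. The trace measure is $N\leftsquigarrow d;\bigotimes_i\mu_i$, and the reference measure is $\sum_N(\iota_N)_*\nu_U^{\sigma^N}$. Because the coproduct injections are disjoint, the density factors as $w_N$ times the density of the $N$-fold sequential product. Each step of the inner induction is an instance of the sequencing argument with the accumulator threaded through, and the growing set $V$ plays the role of $U\cup\names{\rho_1}$.

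\emph{Main obstacle.} The set loop is the main obstacle. Traces there are sorted maps, so $\nu_U^{\NameMap{\sigma}}$ (sorted tuples) differs from the order of generation. Moreover, the loop draws its name set from $d$, and names in the trace keys must be reconciled with the density of $d$ (for example $\const{fresh}$ contributing $n!$). We would first try to show that the measure of the set loop is the pushforward under $\mathsf{sort}$ of the list-loop measure conditioned on keys. We would then prove that this pushforward has density equal to the product of per-key densities, using \cref{lem:nu-product-swap} iterated to commute the factors $\nu^{\Name\times\sigma}$ into sorted order. The diagonal terms introduced by $\#_{U\cup\{a\}}$ would be handled by the freshness checks, which return density $0$ on coincident names and so are $\mu$-null.
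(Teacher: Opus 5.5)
Your main line matches the paper's proof. Both argue by induction on the typing derivation with the environment universally quantified, and both use the free-variable restriction argument for $\lambda$, $\delta_{\unit}$ for $\keyword{ret}$, reuse of the density for $\keyword{sample}$, and transport along $\rho\mapsto\{\ell\mapsto\rho\}$ for labels. For sequencing and loops, both rewrite each draw as reference measure times density and recombine via the sequential definitions of $\nu_U^{\sigma\times\tau}$ and $\nu_U^{\List{\sigma}}=\sum_N(\iota_N)_*\nu_U^{\sigma^N}$. Your iterated integral is the paper's monadic calculation with $\mathsf{score}$ written out. Your inner induction on $N$ with the accumulator goes slightly beyond the paper, which only treats the range loop without accumulator or $\keyword{using}$ clauses.

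The step that does not work as written is the support check you add in the sequencing case. The paper's proof skips this check silently, so you are right that something is needed. However, your justification assumes $\gamma_1\in\sem{\Gamma}(U)$, whereas the density clause of $\logrel[{\ProbTy[\kappa]{\sigma}{\tau}}]$ only gives you $U\supseteq\supp{\mu}$. The environment may hold names that $t$ consults only on a null event. Take $\gamma_1(y)=\{a\}$, $\gamma_1(s)=\{b\}$ and $t = (x\sim\const{normal}(0,1);\ \sample{\const{unif}(\ite{x==0}{y}{s})}\at c)$. With your definition of $\mathsf{supp}$, $\supp{\mu}=\{b\}$, so $U=\{b\}$ is admissible. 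Yet $\gamma_1\notin\sem{\Gamma}(U)$, and at $x=0$ the inner measure is $\delta_{\{c\mapsto a\}}$, whose support is not contained in $U\cup\names{\rho_1}=\{b\}$. Naturality therefore cannot give the condition pointwise; it holds only for $\mu_1$-almost every $\rho_1$.

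There are two ways to repair this.
\begin{itemize}
\item Prove exactly the almost-everywhere statement. This is all the iterated integral needs, since the outer integral is against $\mu_1$, but it requires a disintegration property of $\mathcal{P}$ that naturality alone does not supply.
\item Make the relation Kripke-style: index it by a stage $W\in\mathcal{A}$, require $\gamma_1\in\sem{\Gamma}(W)$, quantify the density clause over $U\supseteq W$, and quantify the function clauses over later stages. This still yields the closed-term theorem at $U=\emptyset$. Taking $U\cup\names{\rho_1}$ (resp.\ $V_{i-1}$ in loops) as the new stage, your naturality argument then goes through.
\end{itemize}

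Finally, your set-loop paragraph is a plan rather than a proof. The paper's proof does not cover this case either, and its translation table has no clause for set loops, so the case is open in both.
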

\begin{proof}[Proof Sketch]
  The proof is by induction on $\tyjudg{\Gamma}{t}{\tau}$.
  The cases
  for variables, pairs, projections,
  and function application
  are standard. We give several illustrative cases
  below.

  \paragraph{Case: Lambda Terms}
  Suppose $\tyjudg{\Gamma}{\lam{x} t}{\tau \clos[\toType{\Gamma|_{\freeVars{\lam{x} t}}}] \tau'}$
  and let $f = \Csem{\lam{x}{t}}(\gamma_1)$ and $(\gamma, g) = \Ssem{\GenToInc{\lam{x}{t}}}(\gamma_2)$.
  Consider any $(s_1, s_2) \in \logrel[\tau]$.
  Let $F:=\freeVars{\lam{x}t}$.
  By the $\inclang$ closure construction semantics, $\gamma=\pi_F(\gamma_2)$ and
  \[
    g(\gamma,s_2)=
    \Ssem{\tyjudg{\GenToInc{\Gamma|_F}, x:\GenToInc{\tau}}{\GenToInc{t}}{\GenToInc{\tau'}}}(\pi_F(\gamma_2),s_2).
  \]
  Since $(\gamma_1,\gamma_2)\in\logrel[\Gamma]$, we have
  $(\pi_F(\gamma_1),\pi_F(\gamma_2))\in\logrel[\Gamma|_F]$.
  Hence by the inductive hypothesis for $t$ in context $\Gamma|_F,x:\tau$, we obtain
  \[
    \left(
      \Csem{\tyjudg{\Gamma|_F, x:\tau}{t}{\tau'}}(\pi_F(\gamma_1), s_1),
      \Ssem{\tyjudg{\GenToInc{\Gamma|_F}, x:\GenToInc{\tau}}{\GenToInc{t}}{\GenToInc{\tau'}}}(\pi_F(\gamma_2), s_2)
    \right) \in \logrel[\tau'].
  \]
  Because $t$ depends only on $F\cup\{x\}$, we have
  \[
    \Csem{\tyjudg{\Gamma, x:\tau}{t}{\tau'}}(\gamma_1,s_1)
    =
    \Csem{\tyjudg{\Gamma|_F, x:\tau}{t}{\tau'}}(\pi_F(\gamma_1),s_1).
  \]
  Therefore, with the equality above for $g$,
  $(f(s_1), g(\gamma,s_2)) \in \logrel[\tau']$.

  \paragraph{Case: Return}
  Suppose $\tyjudg{\Gamma}{\return{t}}{\ProbTy[\toType{\Gamma|_{\freeVars{t}}}]{\Unit}{\tau}}$
  and
  let $(\mu,f) = \Csem{\return{t}}(\gamma_1)$
  and $(\gamma, \langle g, w\rangle) = \Ssem{\GenToInc{\return{t}}}(\gamma_2)$.
  By the inductive hypothesis,
  $(t_1, t_2) := (\Csem{t}(\gamma_1), \Ssem{\GenToInc{t}}(\gamma_2)) \in \logrel[\tau]$.
  \begin{itemize}
      \item For any $((s_1,U_1),(s_2,U_2)) \in \logrel[\Unit \times \NameSet]$,
      we have
      $(f(s_1), g(\gamma, s_2, U_2)) = (t_1, t_2) \in \logrel[\tau]$.
      \item Recall $\mu = \delta_{\Csem{\unit}}$. For $\rho = \Ssem{\unit}$ and any $U$,
      it follows that
      $\left(\frac{\dif{\delta}_{\Csem{\unit}}}{\dif{\nu}_{U}^{\Unit}}\right)(\rho) = 1 = w(\gamma,\rho, U)$.
  \end{itemize}

  \paragraph{Case: Sample}
  Suppose $\tyjudg{\Gamma}{\sample{t}}{\ProbTy[\toType{\Gamma|_{\freeVars{t}}}]{\sigma}{\sigma}}$
  and let $(\mu,f) = \Csem{\sample{t}}(\gamma_1)$
  and $(\gamma, \langle g, w\rangle) = \Ssem{\GenToInc{\sample{t}}}(\gamma_2)$.
  Let $((\mu',f'), (\gamma', \langle g', w'\rangle)) := (\Csem{t}(\gamma_1), \Ssem{\GenToInc{t}}(\gamma_2)) \in \logrel[{\ProbTy[\kappa]{\sigma}{\Unit}}]$ (by the inductive hypothesis).
  By the definition of $\GenToInc{\sample{t}}$, the translated sampler reuses the same density component as $\GenToInc{t}$, so
  \[
    \gamma = \gamma' \quad\text{and}\quad w = w'
  \]
  (as functions of $(\rho,U)$).
  \begin{itemize}
      \item Consider $((s_1, U_1),(s_2, U_2)) \in \logrel[\sigma \times \NameSet]$.
      By semantics of $\sample{\cdot}$ and its translation, for all traces $\rho$ and name-sets $U$ we have
      $f(\rho)=\rho$ and $g(\gamma,\rho,U)=\rho$, hence
      \[
        (f(s_1), g(\gamma, s_2, U_2)) = (s_1, s_2) \in \logrel[\sigma].
      \]
      \item For any $\rho \in \Ssem{\sigma}$ and $U \supseteq \supp{\mu} = \supp{\mu'}$,
      \[
        \left(\dfrac{\dif{\mu}}{\dif{\nu_U^{\sigma}}}\right)(\rho)
        =
        \left(\dfrac{\dif{\mu'}}{\dif{\nu_U^{\sigma}}}\right)(\rho)
        =
        w'(\gamma', \rho, U)
        =
        w(\gamma, \rho, U).
      \]
  \end{itemize}

  \paragraph{Case: Let}
  Suppose $\tyjudg{\Gamma}{x \gets s; t}{\ProbTy[\toType{\Gamma|_{\freeVars{x \gets s; t}}}]{\Record{\ell_1: \sigma_1, \dots, \ell_m: \sigma_m, \ell_1': \sigma_1', \dots, \ell_n': \sigma_n'}}{\tau'}}$
  and let $(\mu,f) = \Csem{x \gets s; t}(\gamma_1)$
  and $(\gamma, \langle g, w \rangle) = \Ssem{\GenToInc{x \gets s; t}}(\gamma_2)$.
  By the inductive hypothesis, let
  \[
    ((\mu_1, f_1), (\gamma_1', \langle g_1, w_1 \rangle))
    =
    (\Csem{s}(\gamma_1), \Ssem{\GenToInc{s}}(\gamma_2))
    \in
    \logrel[{\ProbTy[\kappa']{\Record{\ell_1:\sigma_1,\dots,\ell_m:\sigma_m}}{\tau}}].
  \]
  For each related pair $(v_1,v_2)\in\logrel[\tau]$, define
  \[
    ((\mu_2^{v_1}, f_2^{v_1}), (\gamma_2'^{v_2}, \langle g_2^{v_2}, w_2^{v_2}\rangle))
    :=
    (\Csem{t}(\gamma_1, v_1), \Ssem{\GenToInc{t}}(\gamma_2, v_2)).
  \]
  By the inductive hypothesis on $t$, each such pair lies in
  $\logrel[{\ProbTy[\kappa']{\Record{\ell_1':\sigma_1',\dots,\ell_n':\sigma_n'}}{\tau'}}]$.
  \begin{itemize}
      \item For any $((s_1,n_1),(s_2,n_2)) \in \logrel[\Record{\ell_1:\sigma_1,\dots,\ell_m:\sigma_m,\ell_1':\sigma_1',\dots,\ell_n':\sigma_n'} \times \NameSet]$, let
      \[
        v_1 := f_1(s_1|_{L_1}), \qquad
        v_2 := g_1(\gamma_1', s_2|_{L_1}, n_2).
      \]
      From the first conjunct of
      $\logrel[{\ProbTy[\kappa']{\Record{\ell_1:\sigma_1,\dots,\ell_m:\sigma_m}}{\tau}}]$, we get $(v_1,v_2)\in\logrel[\tau]$.
      Therefore, by the inductive hypothesis instance for $t$ at $((\gamma_1, v_1),(\gamma_2, v_2))$,
      \[
        \bigl(f_2^{v_1}(s_1|_{L_2}),\;
        g_2^{v_2}(\gamma_2'^{v_2}, s_2|_{L_2}, n_2 \cup \names{s_2|_{L_1}})\bigr)\in\logrel[\tau'].
      \]
      This is exactly the required functional part for $(f,g)$.
      \item Consider any $U \supseteq \supp{\mu}$. We will proceed by reasoning in the monadic metalanguage \cite{Moggi91,Moggi89} of the $\qbs$ measure monad \cite{ScibiorKVSYCOMHG18,VakarO18,Staton17}:
      \begin{align*}
        \mu & = \cdot \vdash \begin{aligned}[t]
          & \rho_1 \leftsquigarrow \mu_1; \\
          & v = f_1(\rho_1); \\
          & \rho_2 \leftsquigarrow \mu_2^{v}; \\
          & \ret{\rho_1 \concat \rho_2}
        \end{aligned}
        = \cdot \vdash \begin{aligned}[t]
          & \rho_1 \leftsquigarrow \nu_{U}^{\Record{\ell_1:\sigma_1, \dots, \ell_m:\sigma_m}}; \\
          & \score(w_1(\gamma_1', \rho_1, U)); \\
          & v = f_1(\rho_1); \\
          & v_2 = g_1(\gamma_1', \rho_1, U); \\
          & \rho_2 \leftsquigarrow \nu_{U \cup \names{\rho_1}}^{\Record{\ell_1':\sigma_1',\dots,\ell_n':\sigma_n'}}; \\
          & \score(w_2^{v_2}(\gamma_2'^{v_2}, \rho_2, U \cup \names{\rho_1})); \\
          & \ret{\rho_1 \concat \rho_2}
        \end{aligned} \\
        & = \cdot \vdash \begin{aligned}[t]
          & \rho_1 \leftsquigarrow \nu_{U}^{\Record{\ell_1:\sigma_1, \dots, \ell_m:\sigma_m}}; \\
          & v = f_1(\rho_1); \\
          & v_2 = g_1(\gamma_1', \rho_1, U); \\
          & \rho_2 \leftsquigarrow \nu_{U \cup \names{\rho_1}}^{\Record{\ell_1':\sigma_1',\dots,\ell_n':\sigma_n'}}; \\
          & \rho \leftsquigarrow \ret{\rho_1 \concat \rho_2}; \\
          & \score(w_1(\gamma_1', \rho|_{L_1}, U) \cdot w_2^{g_1(\gamma_1', \rho|_{L_1}, U)}(\gamma_2'^{g_1(\gamma_1', \rho|_{L_1}, U)}, \rho|_{L_2}, U \cup \names{\rho|_{L_1}})); \\
          & \ret{\rho}
        \end{aligned} \\
        & = \cdot \vdash \begin{aligned}[t]
          & \rho \leftsquigarrow \nu_{U}^{\Record{\ell_1: \sigma_1, \dots, \ell_m: \sigma_m, \ell_1': \sigma_1',\dots,\ell_n':\sigma_n'}}; \\
          & \score(w_1(\gamma_1', \rho|_{L_1}, U) \cdot w_2^{g_1(\gamma_1', \rho|_{L_1}, U)}(\gamma_2'^{g_1(\gamma_1', \rho|_{L_1}, U)}, \rho|_{L_2}, U \cup \names{\rho|_{L_1}})); \\
          & \ret{\rho}
        \end{aligned} \\
        & = \cdot \vdash \begin{aligned}[t]
          & \rho \leftsquigarrow \nu_{U}^{\Record{\ell_1: \sigma_1, \dots, \ell_m: \sigma_m, \ell_1': \sigma_1',\dots,\ell_n':\sigma_n'}}; \\
          & \score(w(\gamma, \rho, U)); \\
          & \ret{\rho}
        \end{aligned}
      \end{align*}
      The first equality unrolls the semantics of $x \gets s; t$. The second uses the inductive hypothesis to rewrite draws from $\mu_1$ and $\mu_2^v$ as compositions of the appropriate base measure and a score command, exploiting the fact that $w_1$ and $w_2$ are correct densities. The third equality uses the commutativity of the Qbs measure monad to move all score commands to the end, and then exploits the identity that $\score(r_1); \score(r_2) = \score(r_1 \cdot r_2)$ to combine them into one score command. Next, we use the definition of the reference measure for products (and thus records) to rewrite the three lines assigning $\rho_1$, $\rho_2$, and $\rho$ into a single assignment of $\rho$ from the joint reference measure (also relying on commutativity to bring the three lines together). Finally, we recognize the expression inside the $\score$ command as being precisely $w(\gamma, \rho, U)$. We conclude that $w(\gamma, -, U)$ is a density of $\mu$ w.r.t. $\nu_U^{\Record{\ell_1:\sigma_1,\dots,\ell_m:\sigma_m,\ell_1':\sigma_1',\dots,\ell_n':\sigma_n'}}$, as desired.
  \end{itemize}

  \paragraph{Case: Label}
  Suppose $\tyjudg{\Gamma}{t \at \ell}{\ProbTy[\toType{\Gamma|_{\freeVars{t}}}]{\Record{\ell: \sigma}}{\tau}}$
  and let $(\mu,f) = \Csem{t \at \ell}(\gamma_1)$ and $(\gamma, \langle g, w\rangle) = \Ssem{\GenToInc{t \at \ell}}(\gamma_2)$.
  By the inductive hypothesis,
  take
  \[
    ((\mu', f'), (\gamma', \langle g', w' \rangle))
    =
    (\Csem{t}(\gamma_1), \Ssem{\GenToInc{t}}(\gamma_2)) \in \logrel[{\ProbTy[\kappa]{\sigma}{\tau}}].
  \]
  \begin{itemize}
      \item Consider any $((s_1,n_1),(s_2,n_2)) \in \logrel[\Record{\ell:\sigma}\times\NameSet]$
      and let $(s_1', s_2') = (\pi_{(\ell:\sigma)}(s_1), \pi_{(\ell:\sigma)}(s_2)) \in \logrel[\sigma]$.
      It is immediate that $(f(s_1), g(\gamma, s_2,n_2)) = (f'(s_1'), g'(\gamma', s_2', n_2)) \in \logrel[\tau]$.
      \item Consider any $U \supseteq \supp{\mu}$.
      As with the previous case,
      we will work in the monadic metalanguage for the measure monad on $\qbs$:
      \begin{align*}
        \mu
        & = \cdot \vdash \begin{aligned}[t]
            & \rho' \leftsquigarrow \mu'; \\
            & \ret{\Record{\ell: \rho'}}
        \end{aligned}
        = \cdot \vdash \begin{aligned}[t]
            & \rho' \leftsquigarrow \nu_U^\sigma; \\
            & \score(w'(\gamma', \rho', U)); \\
            & \ret{\Record{\ell: \rho'}}
        \end{aligned}
        = \cdot \vdash \begin{aligned}[t]
            & \rho' \leftsquigarrow \nu_U^{\sigma}; \\
            & \rho \leftsquigarrow \ret{\Record{\ell: \rho'}}; \\
            & \score(w'(\gamma', \pi_{(\ell: \sigma)}(\rho), U)); \\
            & \ret{\rho}
        \end{aligned} \\
        & = \cdot \vdash \begin{aligned}[t]
            & \rho \leftsquigarrow (\rho' \leftsquigarrow \nu_U^{\sigma}; \ret{\Record{\ell: \rho'}}); \\
            & \score(w'(\gamma', \pi_{(\ell: \sigma)}(\rho), U)); \\
            & \ret{\rho}
        \end{aligned}
        = \cdot \vdash \begin{aligned}[t]
            & \rho \gets \nu_U^{\Record{\ell: \sigma}}; \\
            & \score(w(\gamma, \rho, U)); \\
            & \ret{\rho}
        \end{aligned}
      \end{align*}
        meaning $w$ is a density of $\mu$
        w.r.t. $\nu_U^{\Record{\ell:\sigma}}$.
  \end{itemize}

  \paragraph{Case: Loop (Simple)} We consider the special case
  \[
    T = \forInRange{x}{d}{b}
    \qquad
    \tyjudg{\Gamma}{d}{\Dist[\kappa_d]{\Nat}},
    \qquad
    \tyjudg{\Gamma,x:\Nat}{b}{\ProbTy[\kappa_b]{\sigma}{\tau}},
  \]
  with
  \[
    \tyjudg{\Gamma}{T}{\ProbTy[\toType{\Gamma|_{\freeVars{T}}}]{(\List{\sigma})}{(\List{\tau})}}.
  \]
  This is the specialization of the loop rule in \cref{fig:gen-to-inc-terms-full} with no \keyword{using} clauses and a trivial accumulator.
  In this case, the translation we are proving correct is:
  {\footnotesize
  \[
  \GenToInc{\forInRange{x}{d}{b}}
  =
  \left(\begin{aligned}
    \lam{(\tr,\var{U})}{}&
      \var{N} := \const{length}\;\tr;\;
      (\_, \var{w}_{\var{N}}) := \GenToInc{d}(\var{N},\var{U});\\
    & \var{ys} :=
      \forWith{x}{\const{range}\;\var{N}}{\var{V}}{\var{U}}
      { (r,w) := \GenToInc{b}(\tr[x],\var{V});\\
    & \qquad\qquad\qquad\qquad\quad
        ((r,w), \var{V}\cup\names{\tr[x]})};\\
    & (\var{rs},\var{ws}) := \const{unzip}\;\var{ys};\\
    & (\var{rs},\,\var{w}_{\var{N}}\cdot\const{product}\;\var{ws})
  \end{aligned}\right).
  \]}
  That is, the translated program computes the density of $d$ at the length of the trace, and then multiplies that by the densities of the body subtraces.
  Let
  \[
    ((\mu_d,!),
      (\gamma_d,\langle g_d,w_d\rangle))
    =
    (\Csem{d}(\gamma_1),\Ssem{\GenToInc{d}}(\gamma_2)) \in \logrel[{\ProbTy[\kappa_d]{\Nat}{\Unit}}].
  \]
  Instantiating the inductive hypothesis for the body at any loop index $i\in\Nat$, in the extended related environments $(\gamma_1, i)$ and $(\gamma_2, i)$, gives
  \[
    ((\mu_i,f_i),
      (\gamma_i,\langle g_i,w_i\rangle))
    =
    (\Csem{b}(\gamma_1,i),\Ssem{\GenToInc{b}}(\gamma_2,i))
    \in \logrel[{\ProbTy[\kappa_b]{\sigma}{\tau}}].
  \]

  \begin{itemize}[leftmargin=*]
    \item To show the return value function is correct, consider any
    $((\rho,U_1),(\rho',U_2))\in\logrel[\List{\sigma}\times\NameSet]$.
    Write $N=|\rho|=|\rho'|$, $\rho_i=\pi_i(\rho)$, and $\rho_i'=\pi_i(\rho')$.
    Define $V_0=U_1$, $V_0'=U_2$ and
    $V_i=V_{i-1}\cup\names{\rho_i}$, $V_i'=V_{i-1}'\cup\names{\rho_i'}$.
    By the list clause of the logical relation, $(\rho_i,\rho_i')\in\logrel[\sigma]$ for each $i$.
    Since $(V_{i-1},V_{i-1}')\in\logrel[\NameSet]$, applying the functional part of the body induction hypothesis at
    $((\rho_i,V_{i-1}),(\rho_i',V_{i-1}'))$ yields
    \[
      (f_i(\rho_i), g_i(\gamma_i,\rho_i',V_{i-1}'))\in\logrel[\tau].
    \]
    The source loop returns $(f_i(\rho_i))_{i=1}^N$, and the translated loop returns
    $(g_i(\gamma_i,\rho_i',V_{i-1}'))_{i=1}^N$.
    Hence the two results are related by $\logrel[\List{\tau}]$.

    \item To show the density correct, let $U\supseteq\supp{\mu}$, where $\mu$ is the trace distribution of $T$.
    We write $V_0=U$ and, for a length-$N$ trace $\rho=(\rho_1,\dots,\rho_N)$,
    \[
      V_i = V_{i-1}\cup\names{\rho_i}.
    \]
    By the inductive hypotheses,
    $w_d(\gamma_d,N,U)$ is the density of $\mu_d$ w.r.t. $\nu_U^{\Nat}$, and
    $w_i(\gamma_i,\rho_i,V_{i-1})$ is the density of $\mu_i$ with respect to $\nu_{V_{i-1}}^\sigma$.
    Therefore, in the monadic metalanguage,
    \begin{align*}
      \mu
      &=
      \cdot \vdash \begin{aligned}[t]
        & N \leftsquigarrow \mu_d; \\
        & \rho_1 \leftsquigarrow \mu_1; \ \dots \ ; \rho_N \leftsquigarrow \mu_N; \\
        & \ret{\iota_N(\rho_1,\dots,\rho_N)}
      \end{aligned} \\
      &=
      \cdot \vdash \begin{aligned}[t]
        & N \leftsquigarrow \nu_U^{\Nat}; \\
        & \rho_1 \leftsquigarrow \nu_U^\sigma; \\
        & \rho_2 \leftsquigarrow \nu_{V_1}^\sigma; \ \dots \ ;
          \rho_N \leftsquigarrow \nu_{V_{N-1}}^\sigma; \\
        & \score\!\left(
            w_d(\gamma_d,N,U)\cdot
            \prod_{i=1}^N w_i(\gamma_i,\rho_i,V_{i-1})
          \right); \\
        & \ret{\iota_N(\rho_1,\dots,\rho_N)}
      \end{aligned} \\
      &=
      \cdot \vdash \begin{aligned}[t]
        & \rho \leftsquigarrow \nu_U^{\List{\sigma}}; \\
        & \score(w(\gamma,\rho,U)); \\
        & \ret{\rho}.
      \end{aligned}
    \end{align*}
    The second equality uses the body and bound induction hypotheses and combines the resulting score commands.
    The last equality uses the definition
    $\nu_U^{\List{\sigma}}=\sum_{N\in\Nat}(\iota_N)_*\nu_U^{\sigma^N}$, where
    $\nu_U^{\sigma^N}$ is the iterated product measure that samples the $i$th component from $\nu_{V_{i-1}}^\sigma$.
    The score expression is exactly the weight computed by the specialized translation of $T$.
    Thus $w(\gamma,-,U)$ is a density of $\mu$ with respect to $\nu_U^{\List{\sigma}}$.
  \end{itemize}
\end{proof}

\clearpage 

\section{Incrementalizing Transformation: Full Definition and Correctness}
\label{sec:incrementalization-proof}

The full type and term translations are given in \cref{fig:inc-to-core-types-full,fig:inc-to-core-terms-full}, and the full version of the for-expression translation is given in \cref{fig:inc-to-core-terms-for-full}. The full definitions of the change-validity relation and logical relation are given in \cref{fig:change-rel-full,fig:logical-rel-full}.

\begin{figure}
  \begin{align*}
    \IncToCore{\sigma} &= \sigma \quad \text{ for ground types } \sigma \\
    \IncToCore{\sigma \times \tau} &= \IncToCore{\sigma} \times \IncToCore{\tau} \\
    \IncToCore{\List{\sigma}} &= \List{\IncToCore{\sigma}} \\
    \IncToCore{\NameMap{\sigma}} &= \NameMap{\IncToCore{\sigma}} \\
    \IncToCore{\Record{\ell_1: \tau_1, \dots, \ell_n: \tau_n}} &= \Record{\ell_1: \IncToCore{\tau_1}, \dots, \ell_n: \IncToCore{\tau_n}} \\
    \IncToCore{\sigma \clos[\kappa] \tau} &= \IncToCore{\sigma} \to \IncToCore{\tau} \times \Upd{\Changed{\kappa \times \sigma}}{\Changed{\tau}} \\
    \IncToCore{\Gamma} &= (x_1: \IncToCore{\sigma_1}, \dots, x_n: \IncToCore{\sigma_n}) \text{ for contexts } \Gamma = (x_1: \sigma_1, \dots, x_n: \sigma_n) \\[1ex]
    \Changed{\sigma} &= \Record{\lbl{new}: \sigma, \lbl{same}: \Bool} \quad \text{ for } \sigma = \Bool, \Nat, \Real, \Name \\
    \Changed{\sigma \times \tau} &= \Changed{\sigma} \times \Changed{\tau} \\
    \Changed{\List{\sigma}} &= \Record{\lbl{insert}: \List{(\Nat \times \IncToCore{\sigma})}, \lbl{remove}: \List{\Nat}, \lbl{change}: \List{(\Nat \times \Changed{\sigma})}} \\
    \Changed{\NameMap{\sigma}} &= \Record{\lbl{add}: \NameMap{\IncToCore{\sigma}}, \lbl{remove}: \NameSet, \lbl{change}: \NameMap{\Changed{\sigma}}} \\
    \Changed{\Record{\ell_1: \tau_1, \dots, \ell_n: \tau_n}} &= \SubRecord{\ell_1: \Changed{\tau_1}, \dots, \ell_n: \Changed{\tau_n}} \\
    \Changed{\sigma \clos[\kappa] \tau} &= \Changed{\kappa} \\
    \Changed{\Gamma} &= (x_1 : \Changed{\sigma_1}, \dots, x_n: \Changed{\sigma_n}) \text{ for contexts } \Gamma = (x_1: \sigma_1, \dots, x_n: \sigma_n)
  \end{align*}
  \caption{Incrementalizing transformation from $\inclang$ to $\corelang$ on types.}
  \label{fig:inc-to-core-types-full}
\end{figure}

\begin{figure}
  \begin{align*}
    \IncToCore[\Gamma]{x} &= (x, \mkUpd{\unit}{\lam{(d\gamma, \_: \Unit)} (\projectVar{\Changed{\Gamma}}{x} \; d\gamma, \unit)}) \\
    \IncToCore[\Gamma]{c} &= c_{\mathsf{core}} \\
    \IncToCore[\Gamma]{x := s; t} &=
      (x, u_x) := \IncToCore[\Gamma]{s}; \; (y, u_y) := \IncToCore[\Gamma, x: \sigma]{t}; \\
      &\quad (y, \mkUpd{(u_x, u_y)}{\lam{(d\gamma, (u_x, u_y))} \\
      &\qquad \qquad (\mvar{dx}, u_x') := \applyUpd{u_x}{d\gamma}; \; (\mvar{dy}, u_y') := \applyUpd{u_y}{(d\gamma, \mvar{dx})}; \; (\mvar{dy}, (u_x', u_y'))}) \\
    \IncToCore[\Gamma]{\lam{x} t} &=
      (\lam{x} \IncToCore[\Gamma|_{\freeVars{\lam{x} t}}, x: \sigma]{t}, \\
      &\quad \mkUpd{\unit}{\lam{(d\gamma, \_: \Unit)} (\projectVars{\Changed{\Gamma}}{\freeVars{\lam{x} t}} \; d\gamma, \unit)}) \\
    \IncToCore[\Gamma]{(s : \sigma \clos[\kappa] \tau) \; t} &=
      (f: \IncToCore{\sigma \clos[\kappa] \tau}, u_f: \Upd{\Changed{\toType{\Gamma}}}{\Changed{\kappa}}) := \IncToCore[\Gamma]{s}; \\
      &\quad (x: \IncToCore{\sigma}, u_x: \Upd{\Changed{\toType{\Gamma}}}{\Changed{\sigma}}) := \IncToCore[\Gamma]{t}; \\
      &\quad (y: \IncToCore{\tau}, u_y: \Upd{\Changed{\kappa \times \sigma}}{\Changed{\tau}}) := f \; x; \\
      &\quad (y, \mkUpd{(u_f, u_x, u_y)}{\lam{(d\gamma, (u_f, u_x, u_y))} \\
      &\qquad \qquad (\mvar{df}: \Changed{\kappa}, u_f') := \applyUpd{u_f}{d\gamma}; \\
      &\qquad \qquad (\mvar{dx}: \Changed{\sigma}, u_x') := \applyUpd{u_x}{d\gamma}; \\
      &\qquad \qquad (\mvar{dy}: \Changed{\tau}, u_y') := \applyUpd{u_y}{(\mvar{df}, \mvar{dx})}; \\
      &\qquad \qquad (\mvar{dy}, (u_f', u_x', u_y'))}) \\
    \IncToCore[\Gamma]{(s, t)} &=
      (x, u_x) := \IncToCore[\Gamma]{s}; \; (y, u_y) := \IncToCore[\Gamma]{t}; \\
      &\quad ((x, y), \mkUpd{(u_x, u_y)}{\lam{(d\gamma, (u_x, u_y))} \\
      &\qquad \qquad (\mvar{dx}, u_x') := \applyUpd{u_x}{d\gamma}; \; (\mvar{dy}, u_y') := \applyUpd{u_y}{d\gamma}; \; ((\mvar{dx}, \mvar{dy}), (u_x', u_y'))}) \\
    \IncToCore[\Gamma]{t.1} &=
      (x, u_x) := \IncToCore[\Gamma]{t}; \\
      &\quad (x.1, \mkUpd{u_x}{\lam{(d\gamma, u_x)} \\
      &\qquad \qquad (\mvar{dx}, u_x') := \applyUpd{u_x}{d\gamma}; \; (\mvar{dx}.1, u_x'.1)}) \\
    \IncToCore[\Gamma]{\recordLit{\ell_1: t_1, \dots, \ell_n: t_n}} &= (y_1, u_1) := \IncToCore[\Gamma]{t_1}; \dots; (y_n, u_n) := \IncToCore[\Gamma]{t_n}; \\
      &\quad (\recordLit{\ell_1: y_1, \dots, \ell_n: y_n}, \mkUpd{(u_1, \dots, u_n)}{\lam{(d\gamma, (u_1, \dots, u_n))} \\
      &\qquad \qquad (\mvar{dy}_1, u_1') := \applyUpd{u_1}{d\gamma}; \dots; (\mvar{dy}_n, u_n') := \applyUpd{u_n}{d\gamma}; \\
      &\qquad \qquad (\recordLit{\ell_1: \mvar{dy}_1, \dots, \ell_n: \mvar{dy}_n}, (u_1', \dots, u_n'))}) \\
    \IncToCore[\Gamma]{t.\ell : \tau} &=
      (r, u_r) := \IncToCore[\Gamma]{t}; \; (r.\ell, \mkUpd{u_r}{\lam{(d\gamma, u_r)} \\
      &\qquad \qquad (\mvar{dr}, u_r') := \applyUpd{u_r}{d\gamma}; \; (\ite{\has[\ell]{\mvar{dr}}}{\mvar{dr}.\ell}{\same}, u_r')}) \\
      &\text{ if } \same[\tau] \text{ exists} \\
    \IncToCore[\Gamma]{t.\ell : \tau} &=
      (r, u_r) := \IncToCore[\Gamma]{t}; \; (r.\ell, \mkUpd{(r.\ell, u_r)}{\lam{(d\gamma, (r_l, u_r))} \\
      &\qquad \qquad (\mvar{dr}, u_r') := \applyUpd{u_r}{d\gamma}; \\
      &\qquad \qquad \ite{\has[\ell]{\mvar{dr}}}{(\mvar{dr}.\ell, (\apply(r_l, \mvar{dr}.\ell), u_r'))}{(\sameAs(r_l), (r_l, u_r'))}}) \\
      &\text{ otherwise, if } \sameAs[\tau], \apply[\tau] \text{ exist} \\
    \IncToCore[\Gamma]{\restrict{t}{\{\ell_1, \dots, \ell_k\}}} &=
      (r, u_r) := \IncToCore[\Gamma]{t}; \\
      &\quad (\restrict{r}{\{\ell_1, \dots, \ell_k\}}, \mkUpd{u_r}{\lam{(d\gamma, u_r)} \\
      &\qquad \qquad (\mvar{dr}, u_r') := \applyUpd{u_r}{d\gamma}; \; (\restrict{\mvar{dr}}{\{\ell_1, \dots, \ell_k\}}, u_r')})
  \end{align*}
  \caption{Incrementalizing transformation from $\inclang$ to $\corelang$ on terms.
  It satisfies the invariant that $\tyjudg{\Gamma}{t}{\tau}$ in $\inclang$ implies $\tyjudg{\IncToCore{\Gamma}}{\IncToCore[\Gamma]{t}}{\IncToCore{\tau} \times \Upd{\Changed{\toType{\Gamma}}}{\Changed{\tau}}}$ in $\corelang$.}
  \label{fig:inc-to-core-terms-full}
\end{figure}

\begin{figure}
  \footnotesize
  \setlength{\jot}{1pt}
  \hspace*{\fill}
  \begin{minipage}[t]{0.49\linewidth}
  \noindent\\[-1em]
  \(\begin{aligned}
    &\IncToCore[\Gamma]{\begin{aligned}
        &\forWithUsing{x : \tau}{\mvar{xs}}{z}{s : \rho \\
        &}{v_1 := y_1[u_1] : \theta_1, \dots, \\
        &\quad v_n := y_n[u_n] : \theta_n}{t : \tau' \times \rho}
      \end{aligned}} = \\
    &(\var{xs}, \var{xsUpd}) := \IncToCore{\mvar{xs}}; \\
    &(\var{z}, \var{seedUpd}) := \IncToCore{s}; \\
    &\var{seeds} := [\var{z}]; \, \var{results} := [\,]; \, \var{updaters} := [\,]; \\
    &\var{idxUpdaters}_{\var{j}} := [\,] \text{ for } \var{j} = 1..n; \\
    &\var{itersToIndices}_{\var{j}} := \emptyset \text{ for } \var{j} = 1..n; \\
    &\var{indicesToIters}_{\var{j}} := \emptyset \text{ for } \var{j} = 1..n; \\
    &\textbf{for } \var{i} \textbf{ from } 1 \textbf{ to } \const{length}(\var{xs}) \textbf{ do} \\
    &\quad \var{x} := \var{xs}[\var{i}]; \\
    &\quad \textbf{for } \var{j} \textbf{ from } 1 \textbf{ to } n \textbf{ do} \\
    &\qquad (\var{idx}_{\var{j}}, \var{idxUpd}_{\var{j}}) := \IncToCore{u_{\var{j}}}; \\
    &\qquad (\var{collections}_{\var{j}}, \var{collectionUpds}_{\var{j}}) := \IncToCore{y_{\var{j}}}; \\
    &\qquad v_{\var{j}} := \var{collections}_{\var{j}}[\var{idx}_{\var{j}}]; \\
    &\qquad \const{push}(\var{idxUpdaters}_{\var{j}}, \var{idxUpd}_{\var{j}}); \\
    &\qquad \var{itersToIndices}_{\var{j}}[\var{i}] := \var{idx}_{\var{j}}; \\
    &\qquad \var{indicesToIters}_{\var{j}}[\var{idx}_{\var{j}}] := \var{indicesToIters}_{\var{j}}[\var{idx}_{\var{j}}] \cup \{ \var{i} \}; \\
    &\quad ((\var{y}, \var{z}'), \var{upd}) := \IncToCore{t}; \\
    &\quad \const{push}(\var{seeds}, \var{z}'); \; \const{push}(\var{results}, \var{y}); \\
    &\quad \const{push}(\var{updaters}, \var{upd}); \; \var{z} := \var{z}'; \\
    &\var{cache} := \recordLit{ \field{inputs} \var{xs}, \field{xsUpd} \var{xsUpd}, \\
    &\quad  \field{seeds} \var{seeds}, \field{seedUpd} \var{seedUpd}, \\
    &\quad \field{collections} \var{collections}, \field{collectionUpds} \var{collectionUpds}, \\
    &\quad \field{idxUpdaters} \var{idxUpdaters}, \field{itersToIndices} \var{itersToIndices}, \\
    &\quad \field{indicesToIters} \var{indicesToIters}, \field{updaters} \var{updaters} }; \\
    &(\var{results}, \mkUpd{\var{cache}}{\var{update}})
  \end{aligned}\)
  \\[1ex]

  \noindent
  \(\begin{aligned}
    &\CacheType = \Record{ \\
    &\quad \lbl{inputs}: \List{\tau}, \\
    &\quad \lbl{xsUpd}: \Upd{\Changed{\toType{\Gamma}}}{\Changed{\List{\tau}}}, \\
    &\quad \lbl{seeds}: \List{\rho}, \\
    &\quad \lbl{seedUpd}: \Upd{\Changed{\toType{\Gamma}}}{\Changed{\rho}}, \\
    &\quad \lbl{collections}: \Collection{\theta_1} \times \dots \times \Collection{\theta_n}, \\
    &\quad \lbl{collectionUpds}: \Upd{\Changed{\toType{\Gamma}}}{\Changed{\Collection{\theta_1}}} \times \\
    &\qquad \cdots \times \Upd{\Changed{\toType{\Gamma}}}{\Changed{\Collection{\theta_n}}}, \\
    &\quad \lbl{itersToIndices}: \Map{\Nat}{\Index_1} \times \cdots \\
    &\qquad \times \Map{\Nat}{\Index_n}, \\
    &\quad \lbl{idxUpdaters}: \List{\Upd{\Changed{\toType{\Gamma_{\mvar{lkp}}}}}{\Changed{\Index_1}}} \times \\
    &\qquad \cdots \times \List{\Upd{\Changed{\toType{\Gamma_{\mvar{lkp}}}}}{\Changed{\Index_n}}}, \\
    &\quad \lbl{indicesToIters}: \Map{\Index_1}{(\Set{\Nat})} \times \cdots \\
    &\qquad \times \Map{\Index_n}{(\Set{\Nat})}, \\
    &\quad \lbl{updaters}: \List{\Upd{\Changed{\toType{\Gamma_{\mvar{ext}}}}}{\Changed{\tau' \times \rho}}} \\
    &}
  \end{aligned}\)\\
  \text{where each } $\Collection{\theta_j}$ is a list or name map; $\Index_j$ is $\Nat$ or $\Name$ accordingly; $\Gamma_{\mvar{lkp}} = \Gamma, x{:}\tau, z{:}\rho$; $\Gamma_{\mvar{ext}} = \Gamma_{\mvar{lkp}}, v_1{:}\theta_1, \dots, v_n{:}\theta_n$.
  \end{minipage}%
  \hfill%
  \begin{minipage}[t]{0.48\linewidth}
  \noindent\\[-1em]
  \(\begin{aligned}
    &\textbf{\textit{update}}\; (d\gamma: \Changed{\toType{\Gamma}}, \ \var{cache} : \CacheType) := \\
    &\var{inputs} := \var{cache}.\lbl{inputs}; \dots; \var{updaters} := \var{cache}.\lbl{updaters}; \\
    &(\var{dxs}, \var{xsUpd}) := \applyUpd{\var{xsUpd}}{d\gamma}; \\
    &(\var{dz}, \var{seedUpd}) := \applyUpd{\var{seedUpd}}{d\gamma}; \\
    &\textbf{for } \var{j} = 1..n \textbf{ do } \\
    &\quad (\var{dy}_{\var{j}}, \var{collectionUpds}_{\var{j}}) := \applyUpd{\var{collectionUpds}_{\var{j}}}{d\gamma}; \\
    &\quad \var{collections}'_{\var{j}} := \apply(\var{collections}_{\var{j}}, \var{dy}_{\var{j}}); \\
    &\var{seeds}[1] := \apply(\var{seeds}[1], \var{dz}); \\
    &\var{toVisit} := \{ \var{i} \mid (\var{i}, \_) \in \var{dxs}.\lbl{change} \}; \\
    &\textbf{for } \var{j} = 1..n \textbf{ do } \\
    &\quad \textbf{for each } (\var{idx}, {\,}\_)\! \in \var{dy}_{\var{j}}.\lbl{change} \ \textbf{do} \\
    &\qquad \var{toVisit} := \var{toVisit} \cup \var{indicesToIters}_{\var{j}}[\var{idx}]; \\
    &\textbf{if } d\gamma \text{ changes any $\Gamma$-variable free in $t$ or any $u_k$} \textbf{ then }  \\
    &\quad\var{toVisit} := \{1,\dots,\const{length}(\var{inputs})\}; \\
    &\var{changed} := \emptyset; \quad \var{i} := 1; \\
    &\textbf{while } \var{i} \le \const{length}(\var{inputs}) \textbf{ do } \\
    &\quad \textbf{if } \issame{\var{dz}} \land \var{i} \notin \var{toVisit} \textbf{ then } \\
    &\qquad \textbf{if } \var{toVisit} = \emptyset \textbf{ then break}; \\
    &\qquad \var{i} := \text{pop min from } \var{toVisit}; \\
    &\qquad \var{dz} := \sameAs(\var{seeds}[\var{i}]); \\
    &\quad \textbf{if } (\var{i}, \mvar{dx}) \in \var{dxs}.\lbl{change} \textbf{ then } \var{dx} := \mvar{dx} \\
    &\quad \textbf{else } \var{dx} := \sameAs(\var{inputs}[\var{i}]); \\
    &\quad \textbf{for } \var{j} = 1..n \textbf{ do } \\
    &\qquad (\var{didx}_{\var{j}}, \var{idxUpdaters}_{\var{j}}[\var{i}]) := \\
    &\qquad \qquad \applyUpd{\var{idxUpdaters}_{\var{j}}[\var{i}]}{(d\gamma, \var{dx}, \var{dz})}; \\
    &\qquad \textbf{if } \issame{\var{didx}_{\var{j}}} \textbf{ then } \\
    &\qquad\quad \var{idx} := \var{didx}_{\var{j}}.\lbl{new}; \\
    &\qquad\quad \textbf{if } (\var{idx}, \var{dval}) \in \var{dy}_{\var{j}}.\lbl{change} \textbf{ then } \var{dv}_{\var{j}} := \var{dval} \\
    &\qquad\quad \textbf{else } \var{dv}_{\var{j}} := \sameAs(\var{collections}_{\var{j}}[\var{idx}]) \\
    &\qquad \textbf{else} \\
    &\qquad\quad \var{old} := \var{itersToIndices}_{\var{j}}[\var{i}]; \ \var{new} := \var{didx}_{\var{j}}.\lbl{new}; \\
    &\qquad\quad \var{itersToIndices}_{\var{j}}[\var{i}] := \var{new}; \\
    &\qquad\quad \var{indicesToIters}_{\var{j}}[\var{old}] := \var{indicesToIters}_{\var{j}}[\var{old}] \setminus \{ \var{i} \}; \\
    &\qquad\quad \var{indicesToIters}_{\var{j}}[\var{new}] := \\
    &\qquad\quad\qquad \var{indicesToIters}_{\var{j}}[\var{new}] \cup \{ \var{i} \}; \\
    &\qquad\quad \var{dv}_{\var{j}} := \text{change from } \var{collections}_{\var{j}}[\var{old}] \\
    &\qquad\qquad \text{to } \var{collections}'_{\var{j}}[\var{new}] \\
    &\quad ((\var{dy}, \var{dz}), \var{updaters}[\var{i}]) := {} \\
    &\qquad \applyUpd{\var{updaters}[\var{i}]}{(d\gamma, \var{dx}, \var{dz}, \var{dv}_1, \dots, \var{dv}_n)}; \\
    &\quad \var{seeds}[\var{i}+1] := \apply(\var{seeds}[\var{i}+1], \var{dz}); \\
    &\quad \var{inputs}[\var{i}] := \apply(\var{inputs}[\var{i}], \var{dx}); \\
    &\quad \textbf{if } \lnot \issame{\var{dy}} \textbf{ then } \var{changed}[\var{i}] := \var{dy}; \\
    &\quad \var{toVisit} := \var{toVisit} \setminus \{ \var{i} \}; \\
    &\quad \var{i} := \var{i} + 1; \\
    &\var{dresults} := \{ \field{change} \var{changed} \}; \var{collections} := \var{collections}'; \\
    &\var{cache}.\lbl{inputs} := \var{inputs}; \, \dots; \, \var{cache}.\lbl{updaters} := \var{updaters}; \\
    &(\var{dresults}, \var{cache})
  \end{aligned}\)
  \end{minipage}%
  \hfill
  \caption{Incrementalizing transformation from $\inclang$ to $\corelang$ on for-expressions.
  For simplicity, the figure only handles modifications of list elements, not insertions or deletions, and we further assume that each collection expression $y_j$ does not depend on the loop variable $x$ or the accumulator $z$, so it can be typed in just $\Gamma$, matching the cache type declared for $\var{collectionUpds}_j$.}
  \label{fig:inc-to-core-terms-for-full}
\end{figure}

\begin{figure}
  \begin{align*}
    \changerel[\sigma] &= \{ (x, \mvar{dx}, x') \mid \mvar{dx}(\lbl{new}) = x' \land (\mvar{dx}(\lbl{same}) =  \trueLit \implies x = x') \} \\
    &\quad \text{ for } \sigma = \Bool, \Nat, \Real, \Name \\
    \changerel[\sigma \times \tau] &= \{ ((x, y), (\mvar{dx}, \mvar{dy}), (x', y')) \mid (x, \mvar{dx}, x') \in \changerel[\sigma] \land (y, \mvar{dy}, y') \in \changerel[\tau] \} \\
    \changerel[\List{\sigma}] &= \{ (\mvar{xs}, \recordLit{\field{insert} \mvar{is}, \field{remove} \mvar{rs}, \field{change} \mvar{cs}}, \mvar{xs}') \\
    &\qquad \mid \mvar{xs}'' = (\mvar{xs} \text{ with all indices in } \mvar{rs} \text{ removed}); \\
    &\qquad \phantom{{}\mid{}} \mvar{xs}''' = (\mvar{xs}'' \text{ with changes in } \mvar{cs} \text{ applied}), \\
    &\qquad \phantom{{}\mid{}} \text{ i.e., for all } i, \begin{cases}
      (\mvar{xs}''[i], \mvar{dx}, \mvar{xs}'''[i]) \in \changerel[\sigma] & \text{for } (i, \mvar{dx}) \in \mvar{cs} \\
      \mvar{xs}'''[i] = \mvar{xs}''[i] & \text{otherwise};
    \end{cases} \\
    &\qquad \phantom{{}\mid{}} \mvar{xs}' = (\mvar{xs}''' \text{ with } x \text{ inserted at index $i$ for all } (i, x) \in \mvar{is}) \} \\
    \changerel[\NameMap{\sigma}] &= \{ (m, \recordLit{\field{change} \mvar{cm}, \field{remove} \mvar{rm}, \field{add} \mvar{am}}, m') \\
    &\qquad \mid \forall (n \mapsto \mvar{dv}) \in \mvar{cm} \ldotp (m(n), \mvar{dv}, m'(n)) \in \changerel[\sigma] \\
    &\qquad \land \forall (n \mapsto v) \in \mvar{am} \ldotp n \notin m \land m'(n) = v \\
    &\qquad \land \forall n \in \mvar{rm} \ldotp n \in m \land n \notin m' \\
    &\qquad \land \forall n \in \domain(m) \setminus (\domain(\mvar{cm}) \cup \mvar{rm}) \ldotp n \in \domain(m') \land m(n) = m'(n) \} \\
    \changerel[\Record{\ell_1: \tau_1, \dots, \ell_n: \tau_n}] &= \{ (r, \mvar{dr}, r') \mid \forall i \in \{1, \dots, n\} \ldotp \text{if } \ell_i \text{ in } \mvar{dr} \text{ then } (r(\ell_i), \mvar{dr}(\ell_i), r'(\ell_i)) \in \changerel[\tau_i] \text{ else } r(\ell_i) = r'(\ell_i) \} \\
    \changerel[{\sigma \clos[\kappa] \tau}] &= \{ ((e, f), \mvar{de}, (e', f')) \in \sem{\sigma \clos[\kappa] \tau} \times \sem{\Changed{\kappa}} \times \sem{\sigma \clos[\kappa] \tau} \mid (e, \mvar{de}, e') \in \changerel[\kappa], f = f' \} \\
    \changerel[\Gamma] &= \{ (\gamma, d\gamma, \gamma') \in \sem{\Gamma} \times \sem{\Changed{\Gamma}} \times \sem{\Gamma} \mid \forall (x: \sigma) \in \Gamma \ldotp (\gamma(x), d\gamma(x), \gamma'(x)) \in \changerel[\sigma] \}
  \end{align*}
  \caption{Relation of valid changes.}
  \label{fig:change-rel-full}
\end{figure}

\begin{figure}
  \begin{align*}
    \logrel[\sigma] &= \{ (x, x') \mid x = x' \} \quad \text{ for ground types } \sigma \\
    \logrel[\sigma \times \tau] &= \{ ((x, y), (x', y')) \mid (x, x') \in \logrel[\sigma] \land (y, y') \in \logrel[\tau] \} \\
    \logrel[\List{\sigma}] &= \{ (\mvar{xs}, \mvar{xs}') \mid \mathrm{length}(\mvar{xs}) = \mathrm{length}(\mvar{xs}') \land \forall i < \mathrm{length}(\mvar{xs}) \ldotp (\mvar{xs}[i], \mvar{xs}'[i]) \in \logrel[\sigma] \} \\
    \logrel[\NameMap{\sigma}] &= \{ (m, m') \mid \mathrm{keys}(m) = \mathrm{keys}(m') \land \forall n \in \mathrm{keys}(m) \ldotp (m(n), m'(n)) \in \logrel[\sigma] \} \\
    \logrel[\Record{\ell_1: \tau_1, \dots, \ell_n: \tau_n}] &= \{ (r, r') \mid \forall i \in \{1, \dots, n\} \ldotp (r(\ell_i), r'(\ell_i)) \in \logrel[\tau_i] \} \\
    \logrel[{\sigma \clos[\kappa] \tau}] &= \{ ((\mvar{env}, f), f') \mid \forall (x, x') \in \logrel[\sigma] \ldotp (y, y') \in \logrel[\tau] \land u \in \ValidUpd[\kappa \times \sigma \leadsto \tau]{f}{(\mvar{env}, x)} \\
    &\qquad \text{ where } y = f(\mvar{env}, x), (y', u) = f'(x') \} \\
    \logrel[\Gamma] &= \{ (\gamma, \gamma') \mid \forall (x: \sigma) \in \Gamma \ldotp (\gamma(x), \gamma'(x)) \in \logrel[\sigma] \}
  \end{align*}
  \caption{Logical relation between $\inclang$ and $\corelang$.}
  \label{fig:logical-rel-full}
\end{figure}

For this section, we use the context tuple conversion functions defined in \cref{fig:context-tuple-conversion}.
We require some properties about contexts and context types.

\begin{lemma}[Properties of context types]
  \label{lem:context-types}
  Let $\Gamma$ be a context, $\gamma \in \sem{\Gamma}$.
  Then for all $x: \sigma$ in $\Gamma$, we have $\gamma(x) = \sem{\projectVar{\Gamma}{x}}(\ctxToTpl[\Gamma](\gamma))$.
  Similarly, for all $\mvar{xs} \subseteq \freeVars{t}$, we have $\ctxToTpl[\Gamma|_{\mvar{xs}}](\gamma|_{\mvar{xs}}) = \sem{\projectVars{\Gamma}{\mvar{xs}}}(\ctxToTpl[\Gamma](\gamma))$.
  As a consequence, we have $(\gamma, \gamma') \in \logrel[\Gamma]$ if and only if $(\ctxToTpl[\Gamma](\gamma), \ctxToTpl[\Gamma](\gamma')) \in \logrel[\toType{\Gamma}]$.
  Furthermore, $\tplToCtx[\Gamma]$ and $\ctxToTpl[\Gamma]$ are inverses of each other.
\end{lemma}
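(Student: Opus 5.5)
All four claims of \cref{lem:context-types} follow by induction on the length of the context $\Gamma$, mirroring the recursive definitions of $\projectVar{\Gamma}{x}$, $\projectVars{\Gamma}{\mvar{xs}}$ (\cref{fig:project-var}), and $\ctxToTpl$, $\tplToCtx$ (\cref{fig:context-tuple-conversion}). Throughout, contexts are treated as finite partial maps with distinct variables, so removing the last binding $x$ from $\gamma$ gives exactly an element of $\sem{\Gamma}$.

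For the single-variable projection, consider $\Gamma, y : \tau$ and $\gamma$ in its semantics. Here $\ctxToTpl[\Gamma, y:\tau](\gamma) = (\ctxToTpl[\Gamma](\gamma\setminus\{y\}), \gamma(y))$.
\begin{itemize}
\item If $x = y$, the projection $\lam{(\_,x)} x$ returns $\gamma(y)$ directly.
\item If $x \neq y$, it returns $\sem{\projectVar{\Gamma}{x}}(\ctxToTpl[\Gamma](\gamma\setminus\{y\}))$. By the induction hypothesis this equals $(\gamma\setminus\{y\})(x) = \gamma(x)$.
\end{itemize}

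The multi-variable projection goes the same way, inducting on $\Gamma$ and generalizing over all subsets $\mvar{xs}$ of the variables of $\Gamma$. The hypothesis $\mvar{xs}\subseteq\freeVars{t}$ in the statement is only there to ensure $\mvar{xs}$ names variables of $\Gamma$. The base case is the empty context, where both sides are $\unit$.
\begin{itemize}
\item If $x \in \mvar{xs}$, the left side is $(\ctxToTpl[\Gamma|_{\mvar{xs}\setminus\{x\}}]((\gamma\setminus\{x\})|_{\mvar{xs}\setminus\{x\}}), \gamma(x))$. This uses that restricting $\Gamma, x:\tau$ to $\mvar{xs}$ is $\Gamma|_{\mvar{xs}\setminus\{x\}}, x:\tau$, with the order of $\Gamma$ preserved. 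The induction hypothesis then matches the first component.
\item If $x\notin\mvar{xs}$, the restriction simply drops $x$, and the claim is immediate from the induction hypothesis.
\end{itemize}
The one point needing care is that $\Gamma|_{\mvar{xs}}$ keeps the order of $\Gamma$. This is exactly what makes $\toType{\Gamma|_{\mvar{xs}}}$ line up with the nesting that $\projectVars{}{}$ produces, so I will state it explicitly as a convention.

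For the inverse property, check $\tplToCtx[\Gamma]\circ\ctxToTpl[\Gamma]=\id$ and $\ctxToTpl[\Gamma]\circ\tplToCtx[\Gamma]=\id$, again by induction.
\begin{itemize}
\item For the first composite, $\tplToCtx[\Gamma,x:\tau](\ctxToTpl[\Gamma](\gamma\setminus\{x\}),\gamma(x))$ equals $\tplToCtx[\Gamma](\ctxToTpl[\Gamma](\gamma\setminus\{x\}))[x\mapsto\gamma(x)]$. By the induction hypothesis this is $(\gamma\setminus\{x\})[x\mapsto\gamma(x)]=\gamma$.
\item The second composite is symmetric. It uses that $\tplToCtx[\Gamma]$ does not bind $x$, since $x\notin\Gamma$.
\end{itemize}

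The logical-relation equivalence needs no further induction. The relation $\logrel[\toType{\Gamma}]$ is $\logrel$ at a nested product type $(\cdots(\Unit\times\sigma_1)\times\cdots)\times\sigma_n$. By the product clause of \cref{fig:logical-rel-full}, plus triviality at $\Unit$ (a ground type, where $\logrel$ is equality of the unique element), it unfolds to componentwise relatedness. By the projection property just proved, the components of $\ctxToTpl[\Gamma](\gamma)$ are exactly the values $\gamma(x_i)$. This is precisely the definition of $(\gamma,\gamma')\in\logrel[\Gamma]$.

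One bookkeeping issue needs attention. The statement writes $\ctxToTpl[\Gamma](\gamma')$ for $\gamma'\in\sem{\IncToCore{\Gamma}}$. This should be read as the same tupling at the translated context, which is legitimate because $\toType{\IncToCore{\Gamma}}=\IncToCore{\toType{\Gamma}}$ by the product clause of \cref{fig:inc-to-core-types-full}. None of these steps poses a real obstacle.
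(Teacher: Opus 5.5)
Your proof is correct and takes essentially the same approach as the paper, whose entire proof is ``by induction on the structure of $\Gamma$''; your case analysis spells that induction out faithfully. The clarifications you add are the implicit conventions that induction needs: generalizing $\mvar{xs}$ to arbitrary subsets of the variables of $\Gamma$, preserving the order of $\Gamma$ in $\Gamma|_{\mvar{xs}}$, and reading $\ctxToTpl[\Gamma](\gamma')$ as tupling at $\IncToCore{\Gamma}$, justified by $\toType{\IncToCore{\Gamma}} = \IncToCore{\toType{\Gamma}}$.
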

\begin{proof}
  By induction on the structure of $\Gamma$.
\end{proof}

\begin{lemma}[Valid Updater Construction]
  \label{lem:valid-updater-construction}
  Let $f : \sem{\tau} \to \sem{\tau'}$ with cache type $\sigma$, and update function $u \in \sem{\Changed{\tau} \times \sigma \to \Changed{\tau'} \times \sigma}$.
  For each $x \in \sem{\tau}$ let $A_x \subseteq \sem{\sigma}$ be a set (the ``admissible caches at $x$'').
  Assume ``cache preservation'' holds, i.e., for every valid change $(x, \mvar{dx}, x') \in \changerel[\tau]$ and every $c \in A_x$, if
  \[
  y = f(x), \quad y' = f(x'), \quad (\mvar{dy}, c') = u(\mvar{dx}, c),
  \]
  then $(y, \mvar{dy}, y') \in \changerel[\tau']$ and $c' \in A_{x'}$.
  Then for any admissible initial cache $c \in A_x$, the updater built by $\mkUpdSem{c}{u}$ (cf. \cref{fig:term-semantics-full}) is valid for $f$ at $x$:
  \[
  \mkUpdSem(c, u) \in \ValidUpd[\tau \leadsto \tau']{f}{x}.
  \]
\end{lemma}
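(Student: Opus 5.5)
The plan is a coinduction argument: $\ValidUpd[\tau \leadsto \tau']{f}{x}$ is the greatest fixed point of a monotone operator, so it suffices to exhibit a family of sets that is a post-fixed point of that operator. Concretely, define for each $x \in \sem{\tau}$
\[
  S_x := \{ \mkUpdSem(c, u) \mid c \in A_x \} \subseteq \sem{\Upd{\Changed{\tau}}{\Changed{\tau'}}}.
\]
If the family $(S_x)_x$ satisfies the defining closure condition of validity, that is, $S_x \subseteq \Phi(S)_x$ where $\Phi$ is the operator whose greatest fixed point is $\ValidUpd{f}{-}$, then the coinduction principle gives $S_x \subseteq \ValidUpd[\tau \leadsto \tau']{f}{x}$ for all $x$. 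In particular, $\mkUpdSem(c,u) \in \ValidUpd[\tau \leadsto \tau']{f}{x}$ for every $c \in A_x$. Before using this, I would briefly check that $\Phi$ is monotone: the set of updaters satisfying the condition can only grow when the predicate in the tail position grows.

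The key step is a semantic unfolding identity, the denotational counterpart of the program equivalence stated earlier for $\mkUpd{}{}$ and $\keyword{apply}$. The claim is that for all $c, \mvar{dx}$,
\[
  \stepUpd(\mkUpdSem(c,u), \mvar{dx}) = (\mvar{dy}, \mkUpdSem(c', u)) \quad \text{where } (\mvar{dy}, c') = u(\mvar{dx}, c).
\]
I would prove it directly from the definitions in \cref{fig:term-semantics-full}. The first component of the left-hand side is $\mkUpdSem(c,u)$ evaluated at the one-element list $(\mvar{dx})$, which by definition is $y_1$ with $(y_1, c_2) = u(\mvar{dx}, c_1)$ and $c_1 = c$; this is exactly $\mvar{dy}$, with $c_2 = c'$. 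For the second component, fix any $(x_1,\dots,x_n)$. Evaluating $\mkUpdSem(c,u)$ on $(\mvar{dx}, x_1, \dots, x_n)$ runs the cache recursion starting from $c$, and after the first step the cache is $c'$. The remaining $n$ steps are therefore the same recursion as $\mkUpdSem(c',u)$ applied to $(x_1,\dots,x_n)$. Formally this is a reindexing argument by induction on $n$: the intermediate caches of the long run, shifted by one, coincide with those of the run started from $c'$.

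With the identity in hand, the closure check is immediate. Take $\mkUpdSem(c,u) \in S_x$ and a valid change $(x, \mvar{dx}, x') \in \changerel[\tau]$. Applying the updater yields $(\mvar{dy}, \mkUpdSem(c',u))$ with $(\mvar{dy}, c') = u(\mvar{dx}, c)$. Cache preservation then gives both $(f(x), \mvar{dy}, f(x')) \in \changerel[\tau']$ and $c' \in A_{x'}$, so the new updater lies in $S_{x'}$, as required.

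The only real obstacle is stating the unfolding identity carefully, because of the list indexing in $\mkUpdSem$ and $\stepUpd$. If a more elementary route is preferred, the coinduction can be replaced by an equivalent explicit argument. First unfold the validity predicate to the statement that for every finite chain of valid changes $x_0 \to x_1 \to \cdots \to x_n$, the $n$-th output is a valid change. Then prove by induction on $n$ that the caches produced along the run stay admissible at the current input.
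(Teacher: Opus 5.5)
Your proposal is correct and matches the paper's proof: the paper also takes the family $\{\mkUpdSem(c,u) \mid c \in A_x\}$ as a post-fixpoint of the operator whose greatest fixed point is $\ValidUpd[\tau \leadsto \tau']{f}{x}$, and concludes by coinduction. Your unfolding identity for $\stepUpd$ applied to $\mkUpdSem(c,u)$ simply spells out the step the paper leaves as ``by definition of $\mkUpdSem$.''
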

\begin{proof}
  Construct the set $U_x^f := \{ \mkUpdSem(c, u) \mid c \in A_x \}$ of updaters built from caches in $A_x$.
  Prove by definition of $\mkUpdSem$ (\cref{fig:term-semantics-full}) and by the assumption that for every valid change $(x, \mvar{dx}, x') \in \changerel[\tau]$ and every $\mkUpdSem(c, u) \in U_x^f$, we have $(y, \mvar{dy}, y') \in \changerel[\tau']$ and $\mkUpdSem(c', u) \in U_{x'}^f$ where $y = f(x), y' = f(x'), (\mvar{dy}, c') = u(\mvar{dx}, c)$.
  This proves that $U_x^f$ is a post-fixpoint of the same operator used to define $\ValidUpd[\tau \leadsto \tau']{f}{x}$, and thus $U_x^f \subseteq \ValidUpd[\tau \leadsto \tau']{f}{x}$.
  Thus $\mkUpdSem(c, u) \in U_x^f \subseteq \ValidUpd[\tau \leadsto \tau']{f}{x}$ for any $c \in A_x$.
\end{proof}

We abbreviate $\Ione{t} := \pi_1 \circ \sem{\IncToCore{t}}$, $\semTuple{t} := \sem{t} \circ \tplToCtx[\Gamma]$, and $\Itwo{t} := \pi_2 \circ \sem{\IncToCore{t}}$.

\begin{lemma}[Fundamental Lemma]
  Assume $\tyjudg{\Gamma}{t}{\tau}$ in $\inclang$, and thus $\tyjudg{\IncToCore{\Gamma}}{\IncToCore[\Gamma]{t}}{\IncToCore{\tau} \times \Upd{\Changed{\toType{\Gamma}}}{\Changed{\tau}}}$ in $\corelang$.
  Then for all $(\gamma, \gamma_*) \in \logrel[\Gamma]$, we have
\[ (\sem{t}(\gamma), \Ione{t}(\gamma_*)) \in \logrel[\tau] \quad \text{ and } \quad \Itwo{t}(\gamma_*) \in \ValidUpd[\toType{\Gamma} \leadsto \tau]{\semTuple{t}}{\ctxToTpl[\Gamma](\gamma)} \]
\end{lemma}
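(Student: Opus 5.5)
The proof goes by induction on the typing derivation of $\tyjudg{\Gamma}{t}{\tau}$, proving both conjuncts simultaneously, since the value part of a subterm is needed for the updater part of its parent (for example in application). For each case, the first conjunct ($\logrel[\tau]$-relatedness of initial outputs) follows from unfolding $\sem{-}$ and $\sem{\IncToCore{-}}$ and using the inductive hypotheses, because $\IncToCore{-}$ computes the same initial values as the source. For the second conjunct we do not argue coinductively case by case. We invoke \cref{lem:valid-updater-construction} uniformly: each updater produced by $\IncToCore{-}$ has the form $\mkUpd{c}{u}$. So it suffices to choose, for each tuple $\gamma_t := \ctxToTpl[\Gamma](\gamma)$, a set $A_{\gamma_t}$ of admissible caches containing the initial cache, and to check cache preservation.

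The natural choice of admissible caches is ``every component updater stored in the cache is valid for the corresponding subterm's semantics at the current environment, and any cached values equal the current values.'' For variables, $A$ is $\{\unit\}$, and preservation is \cref{lem:context-types}: projecting a valid context change gives a valid change of $x$. For $\lam{x} t$ the cache is trivial, and preservation says that $\projectVars{\Gamma}{\freeVars{\lam{x}t}}$ of a valid context change is a valid change of the captured environment, with unchanged code, which is exactly $\changerel[\sigma\clos[\kappa]\tau]$. The first conjunct here requires membership in $\logrel[\sigma\clos[\kappa]\tau]$. We get it by applying the inductive hypothesis for $t$ in context $\Gamma|_{\freeVars{\lam{x}t}}, x:\sigma$ to the extended related environment, and transporting along $\tplToCtx$/$\ctxToTpl$ (\cref{lem:context-types}). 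For pairs, projections, records, restriction and let, $A$ requires each stored sub-updater to lie in the $\ValidUpd$ set for its subterm at the current point. Preservation then follows by unfolding $\ValidUpd$ once, which gives valid output changes and valid successor updaters. For let, the second sub-updater is fed $(d\gamma,\mvar{dx})$, which is a valid change of the extended context because $\mvar{dx}$ is valid. For $t.\ell$ with a cached $r_\ell$, the invariant additionally records $r_\ell = \sem{t}(\gamma)(\ell)$, and we use that $\apply$ of a valid change yields the new value and $\sameAs$ is a valid trivial change. Constants require the assumption that each $c_{\mathsf{core}}$ is $\logrel$-related to $c$; we take this as a hypothesis on primitives.

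Application is the first nontrivial case. We have $\semTuple{s\,t}(\gamma_t) = f(e, \semTuple{t}(\gamma_t))$ where $(e,f) = \semTuple{s}(\gamma_t)$. The updater $u_y$ comes from the first conjunct for $s$, via the clause of $\logrel[\sigma\clos[\kappa]\tau]$, and is valid for $f$ at $(e,x)$. The admissible caches at $\gamma_t$ are triples $(u_f,u_x,u_y)$ where $u_f$ and $u_x$ are valid for $s$ and $t$ at $\gamma_t$, and $u_y$ is valid for $f$ at $(e,x)$ with $(e,f)$ and $x$ the current values. Preservation works as follows. A valid change of the closure $(e,f)$ to $(e',f')$ has $f=f'$ by $\changerel$, so $(\mvar{df},\mvar{dx})$ is a valid change of $(e,x)$ for the same $f$. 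Unfolding $u_y$ then yields the valid $\mvar{dy}$ and the successor. This use of the ``code unchanged'' clause is essential.

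The main obstacle is the for-loop case of \cref{fig:inc-to-core-terms-for-full}. I would formulate the cache invariant as follows: $\var{inputs}$ and $\var{seeds}$ equal the current list and accumulator sequence; $\var{xsUpd}$, $\var{seedUpd}$ and $\var{collectionUpds}_j$ are valid for their subterms; each $\var{idxUpdaters}_j[i]$ and $\var{updaters}[i]$ is valid for the index expression and body at iteration $i$'s current extended environment; and $\var{itersToIndices}_j$ and $\var{indicesToIters}_j$ are mutually inverse and record the current indices. Preservation is proved by induction over iterations $i$. Each iteration either is revisited, in which case we pass it a valid extended-context change and use validity of its updaters, or is skipped. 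For a skipped iteration we must show its inputs are unchanged. This holds because $\var{dz}$ is same, $i\notin\var{dxs}$, no free variable changed, and no accessed collection index is in $\var{dy}_j.\lbl{change}$, which we know by the $\var{indicesToIters}$ invariant. Hence its output and updater remain correct unmodified. The care points are the re-synchronisation $\var{dz} := \sameAs(\var{seeds}[i])$ when jumping ahead, and the index-change branch, where the change from $\var{collections}_j[\var{old}]$ to $\var{collections}'_j[\var{new}]$ must be valid.
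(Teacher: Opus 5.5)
Your proposal is essentially the paper's proof: simultaneous induction on the typing derivation, every updater obligation discharged through \cref{lem:valid-updater-construction} with the same admissible-cache invariants, the ``code unchanged'' clause of $\changerel[{\sigma \clos[\kappa] \tau}]$ doing the work in the application case, and the same $\var{toVisit}$/$\var{indicesToIters}$ reasoning with $\var{dz}$-resynchronisation for loops (your explicit invariant that $\var{inputs}$ is the current list is needed to justify $\sameAs(\var{inputs}[i])$, and the paper leaves it implicit). One step that both you and the paper pass over is why a skipped iteration's stored $\var{updaters}[i]$ and $\var{idxUpdaters}_j[i]$ remain valid at the \emph{new} extended environment, which may differ from the old one in $\Gamma$-variables not free in $t$ or the $u_j$; membership in $\ValidUpd{f}{x}$ depends on the exact point $x$, not only on its free-variable part, so a fully rigorous version needs an auxiliary fact, e.g.\ that the updater of $\IncToCore[\Gamma]{t}$ is that of $\IncToCore[\Gamma|_{\freeVars{t}}]{t}$ precomposed with projecting context changes onto $\freeVars{t}$.
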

\begin{proof}
  By induction on the typing derivation of $\tyjudg{\Gamma}{t}{\tau}$.
  Let $(\gamma, d\gamma, \gamma') \in \changerel[\Gamma]$ be a valid context change and let $\hat\gamma = \ctxToTpl[\Gamma](\gamma)$, $\hat\gamma' = \ctxToTpl[\Gamma](\gamma')$, and $\mvar{d\hat\gamma} = \ctxToTpl[\Changed{\Gamma}](d\gamma)$ be the corresponding tuple values.

  \paragraph{High-level structure.}
  Every case follows the same two-step pattern.
  First, the relational part is shown compositionally: using the induction hypotheses on subterms together with the appropriate clause of the logical relation, we derive
  $ (\sem{t}(\gamma), \Ione{t}(\gamma_*)) \in \logrel[\tau]$.
  Second, for updater correctness we specify a family of admissible caches $A_{\hat\gamma}$ and a single-step function $u$ that threads the input change $\mvar{d\hat\gamma}$ through the cached sub-updaters, producing an output change and a new cache.
  The only nontrivial check is cache preservation, which reduces to the induction hypotheses and the rules of the change relation; once this holds, \cref{lem:valid-updater-construction} yields
  $\Itwo{t}(\gamma_*) \in \ValidUpd[\toType{\Gamma} \leadsto \tau]{\semTuple{t}}{\hat\gamma}$.
  We handle each case below.

  \paragraph{Variables.}
  Let $(x : \tau) \in \Gamma$.
  Then $\IncToCore{x} = (x, \mkUpd{\unit}{\lam{(d\gamma, \_)} (\projectVar{\Changed{\Gamma}}{x} \; d\gamma, \unit)})$.
  So \[(\sem{x}(\gamma), \Ione{x}(\gamma_*)) = (\gamma(x), \gamma_*(x)) \in \logrel[\tau]\] since $(\gamma, \gamma_*) \in \logrel[\Gamma]$.

  Updater correctness.
  Let $u = \lam{(d\gamma, \_)} (\projectVar{\Changed{\Gamma}}{x} \; d\gamma, \unit)$.
  Then $y := \semTuple{x}(\hat\gamma) = \sem{x}(\gamma) = \gamma(x)$ and $y' := \semTuple{x}(\hat\gamma') = \sem{x}(\gamma') = \gamma'(x)$.
  Also $\mvar{dy} = \pi_1(\sem{u}(\mvar{d\hat\gamma}, \unit)) = \sem{\projectVar{\Changed{\Gamma}}{x}}(\mvar{d\hat\gamma}) = d\gamma(x)$, where the last equality is \cref{lem:context-types} applied to $\Changed{\Gamma}$ and $d\gamma$.
  Thus $(y, \mvar{dy}, y') \in \changerel[\tau]$ by the assumption that $(\gamma, d\gamma, \gamma') \in \changerel[\Gamma]$.
  By \cref{lem:valid-updater-construction}, $\Itwo{x}(\gamma_*) = \sem{\mkUpd{\unit}{u}} \in \ValidUpd[\toType{\Gamma} \leadsto \tau]{\semTuple{x}}{\hat\gamma}$.

  \paragraph{Constants.}
  The incrementalization of constants is proven on a case-by-case basis depending on the constant, typically using \cref{lem:valid-updater-construction}.

  \paragraph{Let bindings.}
  Let $T = (x := s;\, t)$ with $\tyjudg{\Gamma}{s}{\sigma}$ and $\tyjudg{\Gamma, x: \sigma}{t}{\tau}$.
  Then by induction hypothesis, we have $(\sem{s}(\gamma), \Ione{s}(\gamma_*)) \in \logrel[\sigma]$ and $\Itwo{s}(\gamma_*) \in \ValidUpd[\toType{\Gamma} \leadsto \sigma]{\semTuple{s}}{\hat\gamma}$.
  Let $\gamma_1 = \gamma[x \mapsto \sem{s}(\gamma)]$ and $\gamma_{1*} = \gamma_*[x \mapsto \Ione{s}(\gamma_*)]$ be the extended contexts.
  By the definition of the logical relation for contexts, we have $(\gamma_1, \gamma_{1*}) \in \logrel[\Gamma, x: \sigma]$.
  Thus by induction hypothesis again, we have $(\sem{t}(\gamma_1), \Ione{t}(\gamma_{1*})) \in \logrel[\tau]$ and $\Itwo{t}(\gamma_{1*}) \in \ValidUpd[\toType{\Gamma, x: \sigma} \leadsto \tau]{\semTuple{t}}{\hat\gamma_1}$ where $\hat\gamma_1 = \ctxToTpl[\Gamma, x: \sigma](\gamma_1) = (\hat\gamma, \semTuple{s}(\hat\gamma))$.
  The first part now follows from $\sem{T}(\gamma) = \sem{t}(\gamma_1)$ and $\Ione{T}(\gamma_*) = \Ione{t}(\gamma_{1*})$.

  Updater correctness.
  Let $u_x = \Itwo{s}(\gamma_*)$ and $u_y = \Itwo{t}(\gamma_{1*})$.
  Then $y = \semTuple{T}(\hat\gamma) = \sem{T}(\gamma) = \sem{t}(\gamma_1)$ and $y' = \semTuple{T}(\hat\gamma') = \sem{t}(\gamma_1')$ where $\gamma_1' = \gamma'[x \mapsto \sem{s}(\gamma')]$.
  To apply \cref{lem:valid-updater-construction}, we define admissible caches
  \[ A_{\hat\gamma} := \{ (u_x, u_y) \mid u_x \in \ValidUpd[\toType{\Gamma} \leadsto \sigma]{\semTuple{s}}{\hat\gamma}, u_y \in \ValidUpd[\toType{\Gamma, x: \sigma} \leadsto \tau]{\semTuple{t}}{(\hat\gamma, \semTuple{s}(\hat\gamma))} \}\]
  The cache update function is
  \[ u := \lam{(d\gamma, (u_x, u_y))} (\mvar{dx}, u_x') := \applyUpd{u_x}{d\gamma}; (\mvar{dy}, u_y') := \applyUpd{u_y}{(d\gamma, \mvar{dx})}; (\mvar{dy}, (u_x', u_y')) \]
  Stepping $u_x$ with $\mvar{d\hat\gamma}$ yields $(\mvar{dv}, u_x')$ with $(v, \mvar{dv}, v') \in \changerel[\sigma]$ and $u_x' \in \ValidUpd[\toType{\Gamma} \leadsto \sigma]{\semTuple{s}}{\hat\gamma'}$, where $v = \sem{s}(\gamma)$ and $v' = \sem{s}(\gamma')$.
  Stepping $u_y$ with $(\mvar{d\hat\gamma}, \mvar{dv})$ yields valid $(\mvar{dy}, u_y')$, i.e.\ $(y, \mvar{dy}, y') \in \changerel[\tau]$ and $u_y' \in \ValidUpd[\toType{\Gamma, x: \sigma} \leadsto \tau]{\semTuple{t}}{\hat\gamma_1'}$, where $y = \sem{t}(\gamma_1)$ and $y' = \sem{t}(\gamma_1')$.
  Then by the definition of the semantics, we have $y = \sem{T}(\gamma)$ and $y' = \sem{T}(\gamma')$.
  By unfolding, we find $\sem{u}(\mvar{d\hat\gamma}, (u_x, u_y)) = (\mvar{dy}, (u_x', u_y'))$.
  Thus we have $(y, \mvar{dy}, y') \in \changerel[\tau]$ and $(u_x', u_y') \in A_{\hat\gamma'}$.
  By \cref{lem:valid-updater-construction}, we have $\Itwo{T}(\gamma_*) = \mkUpdSem((u_x, u_y), u) \in \ValidUpd[\toType{\Gamma} \leadsto \tau]{\semTuple{T}}{\hat\gamma}$.

  \paragraph{Pairs.}
  Let $T = (s, t)$ with $\tyjudg{\Gamma}{s}{\sigma}$ and $\tyjudg{\Gamma}{t}{\tau}$.
  By induction hypothesis, the translations of $s$ and $t$ are valid: $(\sem{s}(\gamma), \Ione{s}(\gamma_*)) \in \logrel[\sigma]$ and $(\sem{t}(\gamma), \Ione{t}(\gamma_*)) \in \logrel[\tau]$.
  Since $\sem{T}(\gamma) = (\sem{s}(\gamma), \sem{t}(\gamma))$ and $\Ione{T}(\gamma_*) = (\Ione{s}(\gamma_*),\allowbreak \Ione{t}(\gamma_*))$, the definition of $\logrel[\sigma \times \tau]$ implies $(\sem{T}(\gamma), \Ione{T}(\gamma_*)) \in \logrel[\sigma \times \tau]$.

  Updater correctness.
  By induction hypothesis, $\Itwo{s}(\gamma_*) \in \ValidUpd[\toType{\Gamma} \leadsto \sigma]{\semTuple{s}}{\hat\gamma}$ and $\Itwo{t}(\gamma_*) \in \ValidUpd[\toType{\Gamma} \leadsto \tau]{\semTuple{t}}{\hat\gamma}$.
  Define admissible caches
  \[ A_{\hat\gamma} := \{ (u_x, u_y) \mid u_x \in \ValidUpd[\toType{\Gamma} \leadsto \sigma]{\semTuple{s}}{\hat\gamma},\, u_y \in \ValidUpd[\toType{\Gamma} \leadsto \tau]{\semTuple{t}}{\hat\gamma} \} \]
  The cache update function is
  \[ u := \lam{(d\gamma, (u_x, u_y))} (\mvar{dx}, u_x') := \applyUpd{u_x}{d\gamma};\; (\mvar{dy}, u_y') := \applyUpd{u_y}{d\gamma};\; ((\mvar{dx}, \mvar{dy}), (u_x', u_y')) \]
  Stepping $u_x$ and $u_y$ with $\mvar{d\hat\gamma}$ yields valid $(\mvar{dx}, u_x')$ and $(\mvar{dy}, u_y')$, so we have $(\sem{s}(\gamma), \mvar{dx}, \sem{s}(\gamma')) \in \changerel[\sigma]$, $u_x' \in \ValidUpd[\toType{\Gamma} \leadsto \sigma]{\semTuple{s}}{\hat\gamma'}$, $(\sem{t}(\gamma), \mvar{dy}, \sem{t}(\gamma')) \in \changerel[\tau]$, and $u_y' \in \ValidUpd[\toType{\Gamma} \leadsto \tau]{\semTuple{t}}{\hat\gamma'}$.
  As a consequence $(\sem{T}(\gamma), (\mvar{dx}, \mvar{dy}), \sem{T}(\gamma')) \in \changerel[\sigma \times \tau]$ and $(u_x', u_y') \in A_{\hat\gamma'}$.
  Finally, \cref{lem:valid-updater-construction} yields $\Itwo{T}(\gamma_*) = \mkUpdSem((u_x, u_y), u) \in \ValidUpd[\toType{\Gamma} \leadsto \sigma \times \tau]{\semTuple{T}}{\hat\gamma}$.

  \paragraph{Projections.}
  Let $T = t.1$ with $\tyjudg{\Gamma}{t}{\sigma \times \tau}$.
  By induction hypothesis, $(\sem{t}(\gamma), \Ione{t}(\gamma_*)) \in \logrel[\sigma \times \tau]$.
  By the definition of $\logrel[\sigma \times \tau]$, we have $(\pi_1(\sem{t}(\gamma)), \pi_1(\Ione{t}(\gamma_*))) \in \logrel[\sigma]$.
  Since $\sem{T}(\gamma) = \pi_1(\sem{t}(\gamma))$ and $\Ione{T}(\gamma_*) = \pi_1(\Ione{t}(\gamma_*))$, we obtain $(\sem{T}(\gamma), \Ione{T}(\gamma_*)) \in \logrel[\sigma]$.

  Updater correctness.
  Let $u_t = \Itwo{t}(\gamma_*)$.
  Define $A_{\hat\gamma} := \ValidUpd[\toType{\Gamma} \leadsto \sigma \times \tau]{\semTuple{t}}{\hat\gamma}$ and
  \[ u := \lam{(d\gamma, u_t)} (\mvar{dx}, u_t') := \applyUpd{u_t}{d\gamma};\; (\mvar{dx}.1, u_t') \]
  Stepping $u_t$ with $\mvar{d\hat\gamma}$ yields valid $(\mvar{dx}, u_t')$, i.e.\ $(\sem{t}(\gamma), \mvar{dx}, \sem{t}(\gamma')) \in \changerel[\sigma \times \tau]$ and $u_t' \in A_{\hat\gamma'}$.
  By the definition of $\changerel[\sigma \times \tau]$, we have $(\pi_1(\sem{t}(\gamma)), \pi_1(\mvar{dx}), \pi_1(\sem{t}(\gamma'))) \in \changerel[\sigma]$.
  By \cref{lem:valid-updater-construction}, it follows that $\Itwo{T}(\gamma_*) \in \ValidUpd[\toType{\Gamma} \leadsto \sigma]{\semTuple{T}}{\hat\gamma}$.
  The case $T = t.2$ is symmetric.

  \paragraph{Abstractions.}
  Let $T = \lam{x} t$ with $\tyjudg{\Gamma, x : \sigma}{t}{\tau}$, so $T : \sigma \clos[\kappa] \tau$ where $\kappa = \toType{\Gamma|_{\freeVars{T}}}$.
  Then we have $\IncToCore{T} = (\lam{x} \IncToCore[\Gamma|_{\freeVars{T}}, x: \sigma]{t},\, \mkUpd{\unit}{\lam{(d\gamma, \_)} (\projectVars{\Changed{\Gamma}}{\freeVars{T}} \; d\gamma, \unit)})$.
  Let $\mvar{env} = \ctxToTpl[\Gamma|_{\freeVars{T}}](\gamma|_{\freeVars{T}})$ and $f = \semTuple[\Gamma|_{\freeVars{T}}, x : \sigma]{t}$, so that $\sem{T}(\gamma) = (\mvar{env}, f)$.
  Let $f' = \Ione{T}(\gamma_*) = \lam{x'} \IncToCore[\Gamma|_{\freeVars{T}}, x : \sigma]{t}(\gamma_*|_{\freeVars{T}}, x')$.
  To show $(\sem{T}(\gamma), \Ione{T}(\gamma_*)) \in \logrel[{\sigma \clos[\kappa] \tau}]$, consider any $(v, v') \in \logrel[\sigma]$.
  Let $\gamma_1 = \gamma|_{\freeVars{T}}[x \mapsto v]$ and $\gamma_{1*} = \gamma_*|_{\freeVars{T}}[x \mapsto v']$.
  Since $(\gamma, \gamma_*) \in \logrel[\Gamma]$, we have $(\gamma_1, \gamma_{1*}) \in \logrel[\Gamma|_{\freeVars{T}}, x: \sigma]$.
  By induction hypothesis on $t$ in context $\Gamma|_{\freeVars{T}}, x: \sigma$:
  \[ (\sem{t}(\gamma_1), \Ione{t}(\gamma_{1*})) \in \logrel[\tau] \quad \text{ and } \quad \Itwo{t}(\gamma_{1*}) \in \ValidUpd[\kappa \times \sigma \leadsto \tau]{f}{(\mvar{env}, v)} \]
  Since $f(\mvar{env}, v) = \sem{t}(\gamma_1)$ and $f'(v') = (\Ione{t}(\gamma_{1*}), \Itwo{t}(\gamma_{1*}))$, this is exactly what the closure logical relation requires.

  Updater correctness.
  Let $u_{\lambda} = \lam{(d\gamma, \_)} (\projectVars{\Changed{\Gamma}}{\freeVars{T}} \; d\gamma, \unit)$.
  Define $A_{\hat\gamma} := \{ \unit \}$ (a singleton).
  Then $\mvar{dy} = \projectVars{\Changed{\Gamma}}{\freeVars{T}} \; \mvar{d\hat\gamma}$, which is the projection of the context change to the free variables of $T$.
  By \cref{lem:context-types}, $(\mvar{env}, \mvar{dy}, \mvar{env}') \in \changerel[\kappa]$ where $\mvar{env}' = \ctxToTpl[\Gamma|_{\freeVars{T}}](\gamma'|_{\freeVars{T}})$.
  Since $\Changed{\sigma \clos[\kappa] \tau} = \Changed{\kappa}$, and the underlying function $f$ is the same before and after the change (only the captured environment changes), we have $(\sem{T}(\gamma), \mvar{dy}, \sem{T}(\gamma')) \in \changerel[{\sigma \clos[\kappa] \tau}]$.
  The cache is $\unit \in A_{\hat\gamma'}$ trivially.
  By \cref{lem:valid-updater-construction}, $\Itwo{T}(\gamma_*) = \mkUpdSem(\unit, u_{\lambda}) \in \ValidUpd[{\toType{\Gamma} \leadsto \sigma \clos[\kappa] \tau}]{\semTuple{T}}{\hat\gamma}$.

  \paragraph{Application.}
  Let $T = s \; t$ with $\tyjudg{\Gamma}{s}{\sigma \clos[\kappa] \tau}$ and $\tyjudg{\Gamma}{t}{\sigma}$.
  By induction hypothesis on $s$, we have $(\sem{s}(\gamma), \Ione{s}(\gamma_*)) \in \logrel[{\sigma \clos[\kappa] \tau}]$ and $\Itwo{s}(\gamma_*) \in \ValidUpd[{\toType{\Gamma} \leadsto \sigma \clos[\kappa] \tau}]{\semTuple{s}}{\hat\gamma}$.
  By induction hypothesis on $t$, we have $(\sem{t}(\gamma), \Ione{t}(\gamma_*)) \in \logrel[\sigma]$ and $\Itwo{t}(\gamma_*) \in \ValidUpd[\toType{\Gamma} \leadsto \sigma]{\semTuple{t}}{\hat\gamma}$.
  Let $(\mvar{env}, f) = \sem{s}(\gamma)$, $f' = \Ione{s}(\gamma_*)$, $x = \sem{t}(\gamma)$, and $x' = \Ione{t}(\gamma_*)$; by the IH on $t$ above, $(x, x') \in \logrel[\sigma]$.
  Set $y = f(\mvar{env}, x)$ and $(y', u_y) = f'(x')$.
  By the closure logical relation, we have $(y, y') \in \logrel[\tau]$ and $u_y \in \ValidUpd[\kappa \times \sigma \leadsto \tau]{f}{(\mvar{env}, x)}$.
  Since $\sem{T}(\gamma) = y$ and $\Ione{T}(\gamma_*) = y'$, the relational part follows.

  Updater correctness.
  Let $u_f = \Itwo{s}(\gamma_*)$ and $u_x = \Itwo{t}(\gamma_*)$.
  Define admissible caches
  \begin{align*}
    A_{\hat\gamma} := \{ (u_f, u_x, u_y) &\mid u_f \in \ValidUpd[{\toType{\Gamma} \leadsto \sigma \clos[\kappa] \tau}]{\semTuple{s}}{\hat\gamma}, u_x \in \ValidUpd[\toType{\Gamma} \leadsto \sigma]{\semTuple{t}}{\hat\gamma}, u_y \in \ValidUpd[\kappa \times \sigma \leadsto \tau]{f}{(\mvar{env}, x)} \\
    &\quad \text{ where } (\mvar{env}, f) = \semTuple{s}(\hat\gamma), x = \semTuple{t}(\hat\gamma) \}
  \end{align*}
  The cache update function is
  \begin{align*} u &:= \lam{(d\gamma, (u_f, u_x, u_y))} (\mvar{df}, u_f') := \applyUpd{u_f}{d\gamma};\, (\mvar{dx}, u_x') := \applyUpd{u_x}{d\gamma};\, \\
    &\qquad\qquad\qquad\qquad\quad\, (\mvar{dy}, u_y') := \applyUpd{u_y}{(\mvar{df}, \mvar{dx})};\, (\mvar{dy}, (u_f', u_x', u_y'))
  \end{align*}
  Stepping $u_f$ and $u_x$ with $\mvar{d\hat\gamma}$ yields valid $(\mvar{df}, u_f')$ and $(\mvar{dx}, u_x')$, i.e.\ $(\sem{s}(\gamma), \mvar{df}, \sem{s}(\gamma')) \in \changerel[{\sigma \clos[\kappa] \tau}]$, $u_f' \in \ValidUpd[{\toType{\Gamma} \leadsto \sigma \clos[\kappa] \tau}]{\semTuple{s}}{\hat\gamma'}$, $(x, \mvar{dx}, x') \in \changerel[\sigma]$, and $u_x' \in \ValidUpd[\toType{\Gamma} \leadsto \sigma]{\semTuple{t}}{\hat\gamma'}$, where $x' = \sem{t}(\gamma')$.
  Let $(\mvar{env}', f') = \sem{s}(\gamma')$.
  By the definition of $\changerel[{\sigma \clos[\kappa] \tau}]$ (recalling $\Changed{\sigma \clos[\kappa] \tau} = \Changed{\kappa}$), the validity of $\mvar{df}$ above unpacks to $(\mvar{env}, \mvar{df}, \mvar{env}') \in \changerel[\kappa]$ together with $f' = f$.
  Hence $((\mvar{env}, x), (\mvar{df}, \mvar{dx}), (\mvar{env}', x')) \in \changerel[\kappa \times \sigma]$.
  Stepping $u_y$ with $(\mvar{df}, \mvar{dx})$ yields valid $(\mvar{dy}, u_y')$, i.e.\ $(y, \mvar{dy}, y') \in \changerel[\tau]$ and $u_y' \in \ValidUpd[\kappa \times \sigma \leadsto \tau]{f}{(\mvar{env}', x')}$.
  Since $f' = f$, the function component named in the definition of $A_{\hat\gamma'}$ (extracted from $\semTuple{s}(\hat\gamma')$) is again $f$, so $(u_f', u_x', u_y') \in A_{\hat\gamma'}$.
  By \cref{lem:valid-updater-construction}, it follows that $\Itwo{T}(\gamma_*) = \mkUpdSem((u_f, u_x, u_y), u) \in \ValidUpd[\toType{\Gamma} \leadsto \tau]{\semTuple{T}}{\hat\gamma}$.

  \paragraph{Record literals.}
  Record literals $T = \recordLit{\ell_1: t_1, \dots, \ell_n: t_n}$ are the $n$-ary generalization of pairs.
  (Note, however, that their change types differ from those of pairs: changes to records are subrecords, which may omit unchanged fields.)
  The relational part follows from the induction hypothesis on each $t_i$ and the definition of $\logrel[\Record{\ell_1: \tau_1, \dots, \ell_n: \tau_n}]$.
  For updater correctness, the admissible caches are $A_{\hat\gamma} := \prod_{i=1}^n \ValidUpd[\toType{\Gamma} \leadsto \tau_i]{\semTuple{t_i}}{\hat\gamma}$, and the cache update function threads $\mvar{d\hat\gamma}$ through each sub-updater independently.
  Cache preservation follows from the validity of each sub-updater and the definition of $\changerel[\Record{\ell_1: \tau_1, \dots, \ell_n: \tau_n}]$, using \cref{lem:valid-updater-construction}.

  \paragraph{Record projection.}
  Let $T = t.\ell_j$ with $\tyjudg{\Gamma}{t}{\Record{\ell_1: \tau_1, \dots, \ell_n: \tau_n}}$ and result type $\tau_j$.
  By induction hypothesis, $(\sem{t}(\gamma), \Ione{t}(\gamma_*)) \in \logrel[\Record{\ell_1: \tau_1, \dots, \ell_n: \tau_n}]$ and $\Itwo{t}(\gamma_*) \in \ValidUpd[\toType{\Gamma} \leadsto \Record{\dots}]{\semTuple{t}}{\hat\gamma}$.
  By the definition of $\logrel[\Record{\dots}]$, the $\ell_j$-component is related: $(\sem{T}(\gamma), \Ione{T}(\gamma_*)) \in \logrel[\tau_j]$.

  Updater correctness.
  We distinguish two sub-cases depending on whether $\same[\tau_j]$ exists.

  \emph{Case 1: $\same[\tau_j]$ exists.}
  The cache is $u_r = \Itwo{t}(\gamma_*)$, so define $A_{\hat\gamma} := \ValidUpd[\toType{\Gamma} \leadsto \Record{\dots}]{\semTuple{t}}{\hat\gamma}$.
  The cache update function steps $u_r$ with $\mvar{d\hat\gamma}$ to obtain $(\mvar{dr}, u_r')$.
  If $\ell_j$ is present in $\mvar{dr}$, the output is $\mvar{dr}(\ell_j)$; otherwise it is $\sem{\same[\tau_j]}$.
  For cache preservation, let $(\mvar{d\hat\gamma}, u_r) \in \changerel[\toType{\Gamma}] \times A_{\hat\gamma}$.
  By validity of $u_r$, stepping it with $\mvar{d\hat\gamma}$ yields $(\mvar{dr}, u_r')$ with $(\semTuple{t}(\hat\gamma), \mvar{dr}, \semTuple{t}(\hat\gamma')) \in \changerel[\Record{\dots}]$ and $u_r' \in A_{\hat\gamma'}$.
  If $\ell_j \in \mvar{dr}$: by $\changerel[\Record{\dots}]$, $(\semTuple{t}(\hat\gamma)(\ell_j), \mvar{dr}(\ell_j), \semTuple{t}(\hat\gamma')(\ell_j)) \in \changerel[\tau_j]$, so the output change is valid.
  If $\ell_j \notin \mvar{dr}$: by $\changerel[\Record{\dots}]$, $\semTuple{t}(\hat\gamma)(\ell_j) = \semTuple{t}(\hat\gamma')(\ell_j)$, so $\sem{\same[\tau_j]}$ is a valid change.
  By \cref{lem:valid-updater-construction}, $\Itwo{T}(\gamma_*) \in \ValidUpd[\toType{\Gamma} \leadsto \tau_j]{\semTuple{T}}{\hat\gamma}$.

  \emph{Case 2: $\same[\tau_j]$ does not exist (but $\sameAs[\tau_j]$ and $\apply[\tau_j]$ exist).}
  The cache is $(r_\ell, u_r)$ where $r_\ell = \semTuple{t}(\hat\gamma)(\ell_j)$ and $u_r = \Itwo{t}(\gamma_*)$.
  Define
  \[ A_{\hat\gamma} := \{ (r_\ell, u_r) \mid u_r \in \ValidUpd[\toType{\Gamma} \leadsto \Record{\dots}]{\semTuple{t}}{\hat\gamma} \;\land\; r_\ell = \semTuple{t}(\hat\gamma)(\ell_j) \} \]
  The cache update function steps $u_r$ with $\mvar{d\hat\gamma}$ to obtain $(\mvar{dr}, u_r')$.
  If $\ell_j \in \mvar{dr}$: the output is $\mvar{dr}(\ell_j)$ and the new cache is $(\sem{\apply[\tau_j]}(r_\ell, \mvar{dr}(\ell_j)),\; u_r')$.
  If $\ell_j \notin \mvar{dr}$: the output is $\sem{\sameAs[\tau_j]}(r_\ell)$ and the new cache is $(r_\ell, u_r')$.
  For cache preservation, let $(\mvar{d\hat\gamma}, (r_\ell, u_r)) \in \changerel[\toType{\Gamma}] \times A_{\hat\gamma}$, so $r_\ell = \semTuple{t}(\hat\gamma)(\ell_j)$.
  By validity of $u_r$, stepping it yields $(\mvar{dr}, u_r')$ with $(\semTuple{t}(\hat\gamma), \mvar{dr}, \semTuple{t}(\hat\gamma')) \in \changerel[\Record{\dots}]$ and $u_r' \in \ValidUpd[\toType{\Gamma} \leadsto \Record{\dots}]{\semTuple{t}}{\hat\gamma'}$.
  If $\ell_j \in \mvar{dr}$: by $\changerel[\Record{\dots}]$, $(r_\ell, \mvar{dr}(\ell_j), \semTuple{t}(\hat\gamma')(\ell_j)) \in \changerel[\tau_j]$, so the output change is valid.
  The new cached value $\sem{\apply[\tau_j]}(r_\ell, \mvar{dr}(\ell_j)) = \semTuple{t}(\hat\gamma')(\ell_j)$, so the new cache is in $A_{\hat\gamma'}$.
  If $\ell_j \notin \mvar{dr}$: by $\changerel[\Record{\dots}]$, $r_\ell = \semTuple{t}(\hat\gamma)(\ell_j) = \semTuple{t}(\hat\gamma')(\ell_j)$, so $\sem{\sameAs[\tau_j]}(r_\ell)$ is a valid change and $(r_\ell, u_r') \in A_{\hat\gamma'}$.
  By \cref{lem:valid-updater-construction}, $\Itwo{T}(\gamma_*) \in \ValidUpd[\toType{\Gamma} \leadsto \tau_j]{\semTuple{T}}{\hat\gamma}$.

  \paragraph{Record restriction.}
  Let $T = \restrict{t}{\{\ell_{i_1}, \dots, \ell_{i_k}\}}$ with $\tyjudg{\Gamma}{t}{\Record{\ell_1: \tau_1, \dots, \ell_n: \tau_n}}$ and result type $\Record{\ell_{i_1}: \tau_{i_1}, \dots, \ell_{i_k}: \tau_{i_k}}$.
  The relational part follows from the induction hypothesis on $t$ and the definition of $\logrel[\Record{\dots}]$.
  For updater correctness, the cache is $u_r = \Itwo{t}(\gamma_*)$, so define $A_{\hat\gamma} := \ValidUpd[\toType{\Gamma} \leadsto \Record{\dots}]{\semTuple{t}}{\hat\gamma}$.
  The cache update function steps $u_r$ with $\mvar{d\hat\gamma}$ to obtain $(\mvar{dr}, u_r')$ and outputs $\restrict{\mvar{dr}}{\{\ell_{i_1}, \dots, \ell_{i_k}\}}$.
  For cache preservation: by validity of $u_r$, we get $(\semTuple{t}(\hat\gamma), \mvar{dr}, \semTuple{t}(\hat\gamma')) \in \changerel[\Record{\ell_1: \tau_1, \dots, \ell_n: \tau_n}]$ and $u_r' \in A_{\hat\gamma'}$.
  By $\changerel[\Record{\dots}]$, for each $\ell_{i_p}$: if $\ell_{i_p} \in \mvar{dr}$ then $(\semTuple{t}(\hat\gamma)(\ell_{i_p}), \mvar{dr}(\ell_{i_p}), \semTuple{t}(\hat\gamma')(\ell_{i_p})) \in \changerel[\tau_{i_p}]$, and if $\ell_{i_p} \notin \mvar{dr}$ then the field is unchanged.
  This is exactly what $\changerel[\Record{\ell_{i_1}: \tau_{i_1}, \dots, \ell_{i_k}: \tau_{i_k}}]$ requires for the restricted change, so the output change is valid.
  By \cref{lem:valid-updater-construction}, $\Itwo{T}(\gamma_*) \in \ValidUpd[\toType{\Gamma} \leadsto \Record{\ell_{i_1}: \tau_{i_1}, \dots, \ell_{i_k}: \tau_{i_k}}]{\semTuple{T}}{\hat\gamma}$.

  \paragraph{For loops (list version).}
  Let $T = \forWithUsing{x}{\mvar{xs}}{z}{s}{v_1 := y_1[u_1], \dots, v_n := y_n[u_n]}{t}$ with $\tyjudg{\Gamma}{\mvar{xs}}{\List{\tau}}$, $\tyjudg{\Gamma}{s}{\rho}$, $\tyjudg{\Gamma, x: \tau, z: \rho}{y_j[u_j]}{\theta_j}$ for $j = 1, \dots, n$, and $\tyjudg{\Gamma_{\mvar{ext}}}{t}{\tau' \times \rho}$ where $\Gamma_{\mvar{ext}} = \Gamma, x: \tau, z: \rho, v_1: \theta_1, \dots, v_n: \theta_n$; the result type is $\List{\tau'}$.
  Let $(a_1, \dots, a_N) = \sem{\mvar{xs}}(\gamma)$ be the input list of length~$N$, and let $z_0 = \sem{s}(\gamma)$.
  For each $i = 1, \dots, N$, let $\gamma_i^0 = \gamma[x \mapsto a_i, z \mapsto z_{i-1}]$, the lookup values $w_j = \sem{y_j}(\gamma_i^0)[\sem{u_j}(\gamma_i^0)]$ for $j = 1, \dots, n$ (typed in $\Gamma, x: \tau, z: \rho$ — they do not see prior $v$-bindings), and the extended context $\gamma_i = \gamma_i^0[v_1 \mapsto w_1, \dots, v_n \mapsto w_n]$.
  Let $(r_i, z_i) = \sem{t}(\gamma_i)$ be the per-iteration result and updated accumulator.
  Define the analogous incrementalized contexts: let $(a_1', \dots, a_N') = \Ione{\mvar{xs}}(\gamma_*)$, $z_0' = \Ione{s}(\gamma_*)$, $\gamma_{i*}^0 = \gamma_*[x \mapsto a_i', z \mapsto z_{i-1}']$, $w_j' = \Ione{y_j}(\gamma_{i*}^0)[\Ione{u_j}(\gamma_{i*}^0)]$, $\gamma_{i*} = \gamma_{i*}^0[v_1 \mapsto w_1', \dots, v_n \mapsto w_n']$, and $z_i' = \pi_2(\Ione{t}(\gamma_{i*}))$.

  For the relational part, we have $(\sem{\mvar{xs}}(\gamma), \Ione{\mvar{xs}}(\gamma_*)) \in \logrel[\List{\tau}]$ and $(\sem{s}(\gamma), \Ione{s}(\gamma_*)) \in \logrel[\rho]$ by induction hypothesis applied to $\mvar{xs}$ and $s$.
  For each iteration $i$ (by induction on $i$, since $\gamma_{i*}$ depends recursively on $z_{i-1}'$), the induction hypotheses on $u_j$, $y_j$, and $t$ yield related values for the index computations, collection lookups, and body evaluation, from which $(\sem{t}(\gamma_i), \Ione{t}(\gamma_{i*})) \in \logrel[\tau' \times \rho]$.
  Projecting to the first component gives the per-iteration element relation $(r_i, r_i') \in \logrel[\tau']$, where $(r_i, z_i) = \sem{t}(\gamma_i)$ and $(r_i', z_i') = \Ione{t}(\gamma_{i*})$.
  Since the logical relation for lists requires element-wise relatedness, we obtain $(\sem{T}(\gamma), \Ione{T}(\gamma_*)) \in \logrel[\List{\tau'}]$.

  Updater correctness.
  The incrementalizing transformation for for-expressions (\cref{fig:inc-to-core-terms-for-full}) constructs a cache containing the following components:
  the input list $\var{xs}$ and its updater $\var{xsUpd}$, the sequence of accumulated values $\var{seeds} = [z_0, z_1, \dots, z_N]$ (stored 1-indexed so that $\var{seeds}[i] = z_{i-1}$), the seed updater $\var{seedUpd}$, the per-iteration body updaters $\var{updaters} = [u_1^{\mvar{body}}, \dots, u_N^{\mvar{body}}]$, the collection values and their updaters $\var{collections}_j$ and $\var{collectionUpds}_j$ for $j = 1, \dots, n$, the per-iteration index updaters $\var{idxUpdaters}_j[i]$ for each lookup $j$ and iteration $i$, and the index tracking maps $\var{itersToIndices}_j$ and $\var{indicesToIters}_j$.
  Let $\hat\gamma_i^{\mvar{ext}} = \ctxToTpl[\Gamma_{\mvar{ext}}](\gamma_i)$ be the tuple form of the per-iteration extended context.

  Define admissible caches $A_{\hat\gamma}$ as the set of caches satisfying:
  \begin{enumerate}
    \item $\var{seeds}[i] = z_{i-1}$ for $i = 1, \dots, N+1$;
    \item $\var{seedUpd} \in \ValidUpd[\toType{\Gamma} \leadsto \rho]{\semTuple{s}}{\hat\gamma}$;
    \item $u_i^{\mvar{body}} \in \ValidUpd[\toType{\Gamma_{\mvar{ext}}} \leadsto \tau' \times \rho]{\semTuple[\Gamma_{\mvar{ext}}]{t}}{\hat\gamma_i^{\mvar{ext}}}$ for $i = 1, \dots, N$ (note: the underlying function $\semTuple[\Gamma_{\mvar{ext}}]{t}$ is the same across all iterations; only the input point $\hat\gamma_i^{\mvar{ext}}$ varies);
    \item $\var{idxUpdaters}_j[i] \in \ValidUpd[\toType{(\Gamma, x: \tau, z: \rho)} \leadsto \Index_j]{\semTuple[\Gamma, x: \tau, z: \rho]{u_j}}{\ctxToTpl[\Gamma, x: \tau, z: \rho](\gamma_i^0)}$ for each $j$ and $i$, where $\Index_j$ is $\Nat$ or $\Name$ depending on whether $y_j$ is a list or a name map;
    \item $\var{collectionUpds}_j \in \ValidUpd[\toType{\Gamma} \leadsto \Collection{\theta_j}]{\semTuple{y_j}}{\hat\gamma}$ for each $j$;
    \item the index maps $\var{itersToIndices}_j$ and $\var{indicesToIters}_j$ are consistent with the current index values, i.e., $\var{itersToIndices}_j[i] = \sem{u_j}(\gamma_i^0)$ and $\var{indicesToIters}_j[\var{idx}] = \{ i \mid \var{itersToIndices}_j[i] = \var{idx} \}$;
    \item $\var{xsUpd} \in \ValidUpd[\toType{\Gamma} \leadsto \List{\tau}]{\semTuple{\mvar{xs}}}{\hat\gamma}$ and $\var{collections}_j = \semTuple{y_j}(\hat\gamma)$ for each $j$.
  \end{enumerate}

  The cache update function (\cref{fig:inc-to-core-terms-for-full}, right column) processes a context change $\mvar{d\hat\gamma}$ as follows.
  First, it steps the input-list updater $\var{xsUpd}$ with $\mvar{d\hat\gamma}$ to obtain $(\mvar{dxs}, \var{xsUpd}')$, and the seed updater $\var{seedUpd}$ with $\mvar{d\hat\gamma}$ to obtain $(\mvar{dz}_0, \var{seedUpd}')$, where $\mvar{dz}_0$ is the change to the initial accumulator $z_0$.
  Next, it steps each collection updater $\var{collectionUpds}_j$ with $\mvar{d\hat\gamma}$ to obtain $(\mvar{dy}_j, \var{collectionUpds}_j')$ and computes the post-change collection $\var{collections}'_j := \apply(\var{collections}_j, \mvar{dy}_j)$, keeping the pre-change $\var{collections}_j$ available for change construction inside the per-iteration loop.
  It then determines the set $\var{toVisit}$ of iterations that need re-evaluation: this includes iterations whose input element changed (from $\mvar{dxs}.\lbl{change}$) and iterations referencing changed collection entries (identified via $\var{indicesToIters}_j[\var{idx}]$ for each $\var{idx}$ with $(\var{idx}, \_) \in \mvar{dy}_j.\lbl{change}$). In addition, if $\mvar{d\hat\gamma}$ changes any $\Gamma$-variable free in $t$ or in any $u_j$, then $\var{toVisit}$ is set to all iterations $\{1, \dots, N\}$ (these are the variables whose changes are not already captured by per-iteration mechanisms).

  Then the iterations are processed, threading a running accumulator change $\mvar{dz}$ (initialized to $\mvar{dz}_0$). The cursor $i$ starts at $1$ and, after each processed iteration, $i$ is removed from $\var{toVisit}$ and incremented. At each step:
  \begin{itemize}
    \item If $\issame{\mvar{dz}}$ and $i \notin \var{toVisit}$, the algorithm jumps ahead: if $\var{toVisit}$ is empty it breaks out of the loop; otherwise it pops the smallest element of $\var{toVisit}$ as the new $i$ and sets $\mvar{dz} := \sameAs(\var{seeds}[i])$. Since the invariant ``$\var{toVisit}$ contains only iterations not yet processed'' is maintained by the post-iteration removal, the popped value is always strictly greater than the previous $i$, and the algorithm falls through to process iteration $i$.
    \item To process iteration $i$, the per-iteration index updaters $\var{idxUpdaters}_j[i]$ are stepped with the change $(\mvar{d\hat\gamma}, \mvar{dx}_i, \mvar{dz})$ at the lookup point $(\hat\gamma, a_i, z_{i-1})$, yielding a valid index change $\mvar{didx}_j$. If $\issame{\mvar{didx}_j}$, the index expression evaluates to the same index $\var{idx} = \mvar{didx}_j.\lbl{new}$ as before; we then set $\mvar{dv}_j := \mvar{dval}$ if $(\var{idx}, \mvar{dval}) \in \mvar{dy}_j.\lbl{change}$, and otherwise set $\mvar{dv}_j := \sameAs(\var{collections}_j[\var{idx}])$. If $\lnot\issame{\mvar{didx}_j}$, the index maps are updated and $\mvar{dv}_j$ is constructed as the change from the pre-change value $\var{collections}_j[\var{old}]$ to the post-change value $\var{collections}'_j[\var{new}]$. Then the per-iteration body updater $u_i^{\mvar{body}}$ is stepped with the input change $(\mvar{d\hat\gamma}, \mvar{dx}_i, \mvar{dz}, \mvar{dv}_1, \dots, \mvar{dv}_n)$ to obtain the output change $(\mvar{dr}_i, \mvar{dz}_i)$ and a new updater $u_i^{\mvar{body}\prime}$. (Here $\mvar{dx}_i$ is the element change at position $i$, taken from $\mvar{dxs}.\lbl{change}$ if present and $\sameAs(a_i)$ otherwise.) The accumulator change $\mvar{dz}$ is updated to $\mvar{dz}_i$ and propagated to the next iteration.
  \end{itemize}
  At the end, the cache's $\var{collections}_j$ is replaced by the post-change $\var{collections}'_j$.

  Cache preservation. We need to show that the updated cache lies in $A_{\hat\gamma'}$.
  By validity of $\var{seedUpd}$, stepping it with $\mvar{d\hat\gamma}$ yields $(\mvar{dz}_0, \var{seedUpd}')$ with $(z_0, \mvar{dz}_0, z_0') \in \changerel[\rho]$ where $z_0' = \sem{s}(\gamma')$, and also $\var{seedUpd}' \in \ValidUpd[\toType{\Gamma} \leadsto \rho]{\semTuple{s}}{\hat\gamma'}$, so clause~2 of the admissible-cache definition is preserved.
  By validity of each $\var{collectionUpds}_j$, stepping it yields $(\mvar{dy}_j, \var{collectionUpds}_j')$ with $(\sem{y_j}(\gamma), \mvar{dy}_j, \sem{y_j}(\gamma')) \in \changerel[\Collection{\theta_j}]$ and $\var{collectionUpds}_j' \in \ValidUpd[\toType{\Gamma} \leadsto \Collection{\theta_j}]{\semTuple{y_j}}{\hat\gamma'}$, so clause~5 is preserved.
  By the validity of $\var{xsUpd}$, stepping it yields $(\mvar{dxs}, \var{xsUpd}')$ with $(\sem{\mvar{xs}}(\gamma), \mvar{dxs}, \sem{\mvar{xs}}(\gamma')) \in \changerel[\List{\tau}]$ as well as $\var{xsUpd}' \in \ValidUpd[\toType{\Gamma} \leadsto \List{\tau}]{\semTuple{\mvar{xs}}}{\hat\gamma'}$, so clause~7 is preserved.
  Furthermore, $\var{collections}'_j = \semTuple{y_j}(\hat\gamma')$ holds by validity of $\var{collectionUpds}_j$, justifying the cache's $\var{collections}_j$ being replaced by $\var{collections}'_j$ at the end of the update.

  The remaining clauses (1, 3, 4, 6) are preserved by induction on $i = 1, \dots, N$, maintaining as invariant that $(z_{i-1}, \mvar{dz}, z_{i-1}') \in \changerel[\rho]$ when entering iteration $i$. Let $\hat\gamma_i^{\mvar{ext}\prime} = \ctxToTpl[\Gamma_{\mvar{ext}}](\gamma_i')$ where $\gamma_i'$ is the post-change per-iteration context (defined analogously to $\gamma_i$ but using $\gamma'$ and the post-change lookups).

  For an iteration $i$ with $i \in \var{toVisit}$ or $\lnot\issame{\mvar{dz}}$: by validity of $\var{idxUpdaters}_j[i]$ (clause~4), stepping it with the input change $(\mvar{d\hat\gamma}, \mvar{dx}_i, \mvar{dz})$ — which lies in $\changerel[\toType{(\Gamma, x: \tau, z: \rho)}]$ at the lookup point $(\hat\gamma, a_i, z_{i-1})$ — yields a valid index change $\mvar{didx}_j$. If $\issame{\mvar{didx}_j}$, set $\var{idx} = \mvar{didx}_j.\lbl{new}$; the constructed $\mvar{dv}_j$ is either $\mvar{dval}$ where $(\var{idx}, \mvar{dval}) \in \mvar{dy}_j.\lbl{change}$ (which is a valid entry change by validity of $\var{collectionUpds}_j$) or $\sameAs(\var{collections}_j[\var{idx}])$ (which is valid because absence of an entry of $\mvar{dy}_j.\lbl{change}$ at $\var{idx}$ means the entry is unchanged). If $\lnot\issame{\mvar{didx}_j}$, the constructed $\mvar{dv}_j$ is the change from $\var{collections}_j[\var{old}] = \sem{y_j}(\gamma)[\sem{u_j}(\gamma_i^0)]$ to $\var{collections}'_j[\var{new}] = \sem{y_j}(\gamma')[\sem{u_j}(\gamma_i^{0\prime})]$, which is exactly the change to $v_j$ between $\gamma_i$ and $\gamma_i'$. Index-map consistency (clause~6) at $i$ holds in both sub-cases: if $\issame{\mvar{didx}_j}$ the index is unchanged, and otherwise the explicit updates to $\var{itersToIndices}_j$ and $\var{indicesToIters}_j$ restore consistency.

  Then $(\mvar{d\hat\gamma}, \mvar{dx}_i, \mvar{dz}, \mvar{dv}_1, \dots, \mvar{dv}_n)$ is a valid change in $\changerel[\Gamma_{\mvar{ext}}]$ at $\hat\gamma_i^{\mvar{ext}}$, so by validity of $u_i^{\mvar{body}}$, stepping it produces $(\mvar{dr}_i, \mvar{dz}_i)$ with $(r_i, \mvar{dr}_i, r_i') \in \changerel[\tau']$ and $(z_i, \mvar{dz}_i, z_i') \in \changerel[\rho]$, and the new updater $u_i^{\mvar{body}\prime}$ is valid at $\hat\gamma_i^{\mvar{ext}\prime}$.

  For an iteration $i$ with $\issame{\mvar{dz}}$ and $i \notin \var{toVisit}$: the algorithm skips iteration~$i$, so we must show the cache entries at index~$i$ remain valid at $\hat\gamma'$ without being updated. The $\var{toVisit}$ conditions give $a_i = a_i'$, no $\Gamma$-variable free in $t$ or any $u_j$ has changed (else $\var{toVisit} = \{1,\dots,N\}$), and $(\sem{u_j}(\gamma_i^0), \_) \notin \mvar{dy}_j.\lbl{change}$ for each $j$ (else $i$ would be in $\var{toVisit}$ via $\var{indicesToIters}_j$). Together with $z_{i-1} = z_{i-1}'$ from $\issame{\mvar{dz}}$, these yield $\sem{u_j}(\gamma_i^0) = \sem{u_j}(\gamma_i^{0\prime})$ and $w_j = w_j'$ for each~$j$, hence $\gamma_i^{\mvar{ext}}$ and $\gamma_i^{\mvar{ext}\prime}$ agree on the free variables of $t$ (and analogously $\gamma_i^0$ and $\gamma_i^{0\prime}$ agree on those of each $u_j$), so $(r_i, z_i) = (r_i', z_i')$ by functionality of $\sem{t}$. The unchanged $\var{seeds}[i+1]$, $u_i^{\mvar{body}}$, $\var{idxUpdaters}_j[i]$, and index-map entries at~$i$ therefore still satisfy clauses~1, 3, 4, and~6. When the algorithm later pops some $k$ and sets $\mvar{dz} := \sameAs(\var{seeds}[k]) = \sameAs(z_{k-1})$, chaining $z_m = z_m'$ across the skipped indices between the previous processed/popped index and $k-1$ gives $z_{k-1} = z_{k-1}'$, so the assignment is a valid change in $\changerel[\rho]$ and re-establishes the running-$\mvar{dz}$ invariant for iteration~$k$.

  The output $\var{dresults} = \recordLit{\field{change} \var{changed}}$ records exactly the iterations whose body output element actually changed, which is a valid change in $\changerel[\List{\tau'}]$ between $(r_1, \dots, r_N)$ and $(r_1', \dots, r_N')$.
  By \cref{lem:valid-updater-construction}, the constructed updater is in $\ValidUpd[\toType{\Gamma} \leadsto \List{\tau'}]{\semTuple{T}}{\hat\gamma}$.

  \paragraph{Note.} The pseudocode in \cref{fig:inc-to-core-terms-for-full} only handles modifications to list elements, not insertions or deletions, in order to keep the presentation readable. We also assume each collection expression $y_j$ does not depend on $x$ or $z$, so $\IncToCore{y_j}$ produces the same value on every iteration of the outer loop (otherwise the cached $\var{collections}_j$ would need to be per-iteration, complicating the bookkeeping).
  The for-expression over name sets is analogous, with $\NameMap{\tau'}$ in place of $\List{\tau'}$, iteration over names (with the loop variable bound to a $\Name$) in place of iteration over a list (with the loop variable bound to an element), and without an accumulator (so $\var{seeds}$ and the body's accumulator output drop out of the cache and the body's value type is just $\tau'$ rather than $\tau' \times \rho$).
  Our implementation also handles insertions and deletions for name collections, but not for lists (since this wasn't needed for the benchmarks).

\end{proof}

\end{document}